\newcommand\numberthis{\addtocounter{equation}{1}\tag{\theequation}}
\setlist[itemize]{noitemsep, nolistsep}
\newif\ifpaper
 \DeclareMathAlphabet{\mathpzc}{OT1}{pzc}{m}{it}  
\theoremstyle{definition}
\newtheorem{definition}{Definition}[section]
\newtheorem{theorem}{Theorem}
\newtheorem{lemma}[theorem]{Lemma}
\newtheorem{exmp}{Example}[section]
\begin{document}
\title{Strengthening Order Preserving Encryption with Differential Privacy}
\author{Amrita Roy Chowdhury}
\affiliation{UW Madison \country{}}
\author{Bolin Ding}
\affiliation{Alibaba \country{}}
\author{Somesh Jha}
\affiliation{UW Madison \country{}}
\author{Weiran Liu}
\affiliation{Alibaba \country{}}
\author{Jingren Zhou}
\affiliation{Alibaba \country{}}
\renewcommand{\shortauthors}{Roy Chowdhury et al.}
\begin{CCSXML}
<ccs2012>
<concept>
<concept_id>10002978.10002979</concept_id>
<concept_desc>Security and privacy~Cryptography</concept_desc>
<concept_significance>500</concept_significance>
</concept>
<concept>
<concept_id>10002978.10003018.10003020</concept_id>
<concept_desc>Security and privacy~Management and querying of encrypted data</concept_desc>
<concept_significance>500</concept_significance>
</concept>
</ccs2012>
\end{CCSXML}

\ccsdesc[500]{Security and privacy~Cryptography}
\ccsdesc[500]{Security and privacy~Management and querying of encrypted data}

\begin{abstract}
Ciphertexts of an order-preserving encryption (\textsf{OPE}) scheme preserve the order of their corresponding plaintexts. 
However, \textsf{OPE}s are vulnerable to inference attacks that exploit this preserved order. 
Differential privacy (\textsf{DP}) has become the de-facto standard for data privacy. One of the most attractive properties of \textsf{DP} is that any  post-processing computation, such as inference attacks,  performed  on  the  noisy  output  of  a \textsf{DP} algorithm does not degrade its privacy guarantee. 
In this work, we propose a novel \textit{differentially private order preserving encryption scheme}, \textsf{OP}$\epsilon$. Under \textsf{OP}$\epsilon$, the leakage of order from the ciphertexts is differentially private. Consequently, in the least, \textsf{OP}$\epsilon$ ensures a formal guarantee (a relaxed \textsf{DP} guarantee) even in the face of inference attacks. To the best of our knowledge, this is the first work to combine \textsf{DP} with a \textsf{OPE}. \textsf{OP}$\epsilon$ is based on a novel \textit{differentially private 
order preserving encoding} scheme, \textsf{OP$\epsilon$c}, that can be of independent interest in the local \textsf{DP} setting.  
We demonstrate \textsf{OP}$\epsilon$'s utility in answering range queries via empirical evaluation on four real-world datasets. For instance, \textsf{OP}$\epsilon$ misses only around $4$ in every $10$K correct records on average for a dataset of size $\sim732$K with an  attribute  of  domain  size $\sim18$K and $\epsilon= 1$. 

\end{abstract}

\maketitle

\newcommand{\OPE}{\textsf{OPE}\xspace}
\newcommand{\OPEs}{\textsf{OPE}s\xspace}
\newcommand{\ldp}{\textsf{LDP}\xspace}
\newcommand{\CDP}{\textsf{CDP}\xspace}
\newcommand{\DP}{\textsf{DP}\xspace}
\newcommand{\ddp}{\textsf{dLDP}\xspace}
\newcommand{\cdp}{\textsf{dDP}\xspace}
\newcommand{\name}{\textsf{OP$\epsilon$c}\xspace}
\newcommand{\arc}[1]{{\color{red} \emph{[[ARC: #1]]}}}
\newcommand{\nam}{\textsf{OP$\epsilon$}\xspace}
\newcommand{\indfaocpa}{\textsf{IND-FA-OCPA}~}
\newcommand{\indfaocpae}{$\epsilon$-\textsf{IND-FA-OCPA}}
\newcommand{\indfaocpaed}{$(\epsilon,\delta)-$\textsf{IND-FA-OCPA}}

\newcommand{\indocpa}{\textsf{IND-OCPA}~}
\newcommand{\Ideal}{$\textsf{IDEAL}_{\textsf{IND-FA-OCPA}}$~}
\newcommand{\advppt}{\mathcal{A}_{\textsf{PPT}}}
\newcommand{\adviit}{\mathcal{A}_{\textsf{IT}}}
\newcommand{\K}{\textsf{K}}
\newcommand{\Pa}{\mathcal{P}}
\newcommand{\E}{\textsf{E}}
\newcommand{\D}{\textsf{D}}
\newcommand{\s}{\textsf{S}}
\newcommand{\squishlist}{
	\begin{list}{$\bullet$}
		{
			\setlength{\itemsep}{0pt}
			\setlength{\parsep}{1pt}
			\setlength{\topsep}{1pt}
			\setlength{\partopsep}{0pt}
			\setlength{\leftmargin}{0.8em}
			\setlength{\labelwidth}{0.5em}
			\setlength{\labelsep}{0.3em} } }
	
\newcommand{\squishend}{
\end{list}  }\newcommand{\squishlistnum}{
	\begin{enumerate}[noitemsep]
		{
			\setlength{\itemsep}{0pt}
			\setlength{\parsep}{1pt}
			\setlength{\topsep}{1pt}
			\setlength{\partopsep}{0pt}
			\setlength{\leftmargin}{0.8em}
			\setlength{\labelwidth}{0.5em}
			\setlength{\labelsep}{0.3em} } }
	
\newcommand{\squishendnum}{
\end{enumerate}  }
\vspace{-0.3cm}
\section{Introduction}\vspace{-0.1cm}
\label{sec:intro}
Frequent mass data breaches \cite{databreach3, databreach5, databreach1, databreach2,  databreach4,  databreach6} of sensitive information have exposed the privacy vulnerability of data storage in practice. This has lead to a rapid development of systems that aim to protect the data while enabling statistical analysis on the dataset, both in academia \cite{academy1,academy2, academy4, academy3, CryptDB} and industry \cite{commercial2, commercial5,commercial4, commercial1, commercial3}. Encrypted database systems that allow query computation over the encrypted data is a popular approach in this regard.  Typically, such systems rely on \textit{property-preserving encryption} 
schemes \cite{DTE,Boldyreva} to enable efficient computation. \textit{Order-preserving encryption} (\OPE) \cite{Db3,Boldyreva,IND-FAOCPA,frequencyHiding2,IND-OCPA} is one such cryptographic primitive that preserves the numerical order of the plaintexts even after encryption. This allows actions like sorting, ranking,  and answering range queries to be performed directly over the encrypted data \cite{Db3,Db5,Db6,Db7,Db1,Db2,Db4}. 
\par 
However, encrypted databases are vulnerable to inference attacks \cite{Tao,Durak,Grubbs2,Grubbs3,Grubbs1,Paterson2,Paterson1,Naveed1,strongattack,strongattack2,strongattack3} that can reveal the plaintexts with good accuracy. Most of these attacks are inherent  to any property-preserving encryption scheme -- they do not leverage any weakness in the cryptographic security guarantee of the schemes but rather exploit just the \textit{preserved property}. For example, the strongest cryptographic guarantee for \OPEs (\textsf{IND-FA-OCPA}, see Sec. \ref{sec:background:OPE}) informally states that \textit{only} the order of the plaintexts will be revealed from the ciphertexts. However, inference attacks \cite{Grubbs2,Grubbs3,Grubbs1} can be carried out by leveraging only this ordering information. The basic principle of these attacks is to use  auxiliary
information to estimate the plaintext distribution and then correlate it with the ciphertexts based on the preserved property \cite{sok}. 
\par Differential privacy (\DP) has emerged as the de-facto standard for data privacy with widespread adoption in practice \cite{Census1, Census2,Vilhuber17Proceedings, Microsoft, Apple, Rappor1,Rappor2, Uber}.  \DP is an information theoretic guarantee that provides a rigorous guarantee of privacy for individuals in a dataset regardless of an adversary's auxiliary knowledge \cite{sok:DP}.  It is characterized by a  parameter $\epsilon>0$ where lower the value of $\epsilon$, greater the privacy guarantee achieved. An additional appealing property of \DP is that any post-processing computation, such as inference attacks, performed on the noisy output of a \DP algorithm does not incur additional privacy loss.  
\par In this work, we 
ask the following question: \vspace{-0.12cm}
\begin{displayquote}\emph{Is it possible to leverage the properties of \DP for providing a formal security guarantee for \OPEs even in the face of inference attacks?\vspace{-0.12cm}}\end{displayquote}
To this end, we propose a novel \textit{differentially private 
order preserving encryption} scheme, \nam. Recall that standard \OPE schemes are designed to reveal nothing but the order of the plaintexts. Our proposed scheme, \nam, ensures that this leakage of order is differentially private. In other words, the cryptographic guarantee of \OPEs is strengthened with a layer of \DP guarantee (specifically, a relaxed definition of \DP as discussed in the following paragraph). As a result, even if the cryptographic security guarantee of standard \OPEs proves to be inadequate (in the face of inference attacks), the \DP guarantee would continue to hold true. 
Intuitively, the reason is that \DP is resilient to post-processing computations as discussed above. To the best of our knowledge, \textit{this is the first work to combine \DP with a property-preserving encryption scheme.
} 
\subsection{Brief Overview of Key Ideas}\label{sec:intro:brief}
 The standard \DP guarantee requires any two pairs of input data to be indistinguishable from each other (see Sec. \ref{sec:background:DP}) and  is generally catered towards answering statistical queries over the entire dataset. However, in our setting we require the output of the \DP mechanism to retain some of the ordinal characteristics of its input -- the standard \DP guarantee is not directly applicable to this case. Hence, we opt for a natural relaxation of \DP -- 
 only pairs of data points that are ``close" to each other should be indistinguishable. Specifically, the privacy guarantee is heterogeneous and degrades linearly with the $\ell_1$-distance between a pair of data points.   
It is denoted by $\epsilon$-\ddp (or $\epsilon$-\cdp in the central model of \DP; see Sec. \ref{sec:background:DP}). This relaxation is along the lines of d$_\chi$-privacy \cite{dx} and is amenable to many practical settings. For instance, consider a dataset of annual sale figures of clothing firms. The information whether a firm is a top selling or a mid-range one is less sensitive than its actual sales figures. Similarly, for an age dataset, whether a person is young or middle-aged is less sensitive than their actual age.  
\par \DP guarantee inherently requires randomization -- this entails an inevitable loss of utility, i.e., some pairs of output might not preserve the correct order of their respective inputs. In order to reduce the instances of such pairs, \nam offers the flexibility of preserving only a partial order of the plaintexts. Specifically, a (user specified) partition is defined on the input domain and the preserved order is expected at the granularity of this partition. The output domain is defined by a numeric encoding over the intervals of the partition and  all the elements belonging to the same interval are mapped to the corresponding encoding for the interval (with high probability).  Due to the linear dependence of the \DP guarantee (and consequently, the ratio of output probabilities) on the distance between the pair of inputs, lower is  the number of intervals in the partition, higher is the probability of outputting the correct encoding in general (see Sec. \ref{sec:construction} and Sec. \ref{sec:evaluation:results}). \nam preserves the order over this encoding. The reason why this results in better utility for encrypted databases is illustrated by the following example. The typical usecase for \OPE encrypted databases is retrieving a set of records from the outsourced database that belong to a queried range. Suppose a querier asks for a range query $[a,b]$ and let $\mathcal{P}$ be a partition that covers the range with $k$ intervals $\{[s_1,e_1],\cdots,[s_k,e_{k}]\}$ such that $s_1< a < e_1$ and $ s_k < b < e_k $. A database system encrypted with \nam and instantiated with the partition $\Pa$ will return all the records that are noisily mapped to the range $[s_1,e_k]$ (since the order is preserved at the granularity of $\Pa$). Thus, the querier has to pay a processing overhead of fetching extra records, i.e., the records belonging to the ranges $\{[s_1,a-1],[b+1,e_k]\}$. However, if $k< b-a$, then with high probability it would receive all the correct records in $[a,b]$ which can be decrypted and verified (Sec. \ref{sec:application}). 
To this end, we first propose a new primitive, \name, that enables \textit{
order preserving encoding} with $\epsilon$-\ddp. The encryption scheme, \nam, is then constructed using the $\name$ primitive and a \OPE (Sec. \ref{sec:encryption}).

Our work is along the lines of a growing area of research exploring the association between \DP and cryptography \cite{CryptE,capc,kairouz2021distributed, Shrinkwrap} (see Sec. \ref{sec:relatedwork}). Beyond \OPEs, the \name primitive can be used as a building block for other secure computation that require ordering, such as order-revealing encryptions (see App. \ref{app:discussion}). Additionally, \name can be of independent interest for the \ldp setting in answering a variety of queries, such as ordinal queries, frequency and mean estimation (see Sec. \ref{sec:opec:ldp}).
\vspace{-0.4cm}
\subsection{Discussion}
In this section, we answer some key questions pertinent to our work that the readers might have. \vspace{-0.2cm}\\\\
\textbf{Q1.} \textit{Why should we care about \OPEs?}\\\textbf{A.}  Range query constitutes an extremely important class of queries for data analytics. For instance, about half of the queries of the TPC-H benchmark \cite{TPCH}, which is the standard benchmark for OLAP queries, have range predicates \cite{TPCH20}. Additionally, range query is a fundamental operation in DBMS with its applications ranging from B+tree indices \cite{B+Tree,Btree} to band-joins \cite{bandjoins}.  Thus, efficient support for range queries on encrypted databases is a fundamental task for secure data analytics. One of the main challenges of practical deployment of cryptographic protocols is the associated performance overhead especially with large realistic datasets.  The advantage of \OPEs in this regard is that it allows range queries to be performed directly over the encrypted data thereby matching the \textit{optimal} performance of plaintext computation. Hence, given this immense performance advantage, \OPEs are a key building block for encrypted databases~\cite{sok} and exploring secure implementations of \OPEs is still an important problem.\vspace{-0.2cm}
\\\\
\textbf{Q2.} \textit{What is the advantage of a \nam scheme over just  \name primitive or a \OPE scheme?}\\
\textbf{A.} \nam satisfies a new security guarantee, \indfaocpae, (see Sec. \ref{sec:security}) that enhances the cryptographic guarantee of a \OPE scheme (\textsf{IND-FA-OCPA}) with a layer of $\epsilon$-\cdp guarantee. As a result, \nam enjoys \textit{strictly stronger security} than both \name primitive ($\epsilon$-\cdp) and \OPE  (\textsf{IND-FA-OCPA}). 
\vspace{-0.2cm}\\\\
\textbf{Q3.} {\textit{What are the security implications of \nam  in the face of inference attacks?}\\
\textbf{A.} 
In the very least,  \nam  \textit{rigorously limits the accuracy of inference attacks for every record for all adversaries} (Thm. \ref{thm:attack:record}, Sec. \ref{sec:attack}). In other words, \nam guarantees that none of the attacks can infer the value of any record beyond a certain accuracy that is allowed by the \ddp guarantee. For instance, for an age dataset and an adversary with real-world auxiliary knowledge, no inference attack in the snapshot model can distinguish between two age values $(x,x')$ such that $|x-x'|\leq 8$ for $\epsilon=0.1$  (Sec. \ref{sec:evaluation:results}).
\vspace{-0.2cm}\\\\
\textbf{Q4.} \textit{How is \nam's utility (accuracy of range queries)?}\\
\textbf{A.} We present a construction for the $\name$ primitive (and subsequently, \nam) and our experimental results on four real-world datasets demonstrate its practicality for real-world usage (Sec. \ref{sec:evaluation}). Specifically, \nam  misses only $4$ in every $10$K correct records on average for a dataset of size $\sim 732$K with an attribute of domain size $18$K and $\epsilon=1$. The overhead of processing extra  records  is also low -- the average number of extra records returned is just $0.3\%$ of the total dataset size.
\vspace{-0.2cm}\\\\
\textbf{Q5.} \textit{When to use \nam?}\\
\textbf{A.} As discussed above, \nam gives a strictly stronger guarantee than any \OPE scheme (even in the face of inference attacks) with almost no extra performance overhead (Sec. \ref{sec:application}). Additionally, it is backward compatible with any encrypted database that is already using a \OPE scheme (satisfying \textsf{IND-FA-OCPA}, see Sec. \ref{sec:application}). Hence, \nam could be used for secure data analytics in settings where (1) the $\epsilon$-\cdp guarantee is acceptable, i.e., the main security concern is preventing the distinction between input values close to each other (such as the examples discussed above) and $(2)$ the application can tolerate a small loss in utility. 
Specifically in such settings, replacing encrypted databases with \nam  would give a strictly stronger security guarantee against all attacks with \textit{nominal change in infrastructure or performance} -- a win-win situation.
\vspace{-0.25cm}
\section{Background}\label{sec:background}
\subsection{Differential Privacy}\label{sec:background:DP}
Differential privacy is a quantifiable measure of the stability of the output of a randomized mechanism to changes to its input. There are two popular models of differential privacy, \textit{local} and \textit{central}. The local model consists of a set of individual data owners and an untrusted data aggregator; each individual perturbs their data using a (local) \DP algorithm and sends it to the aggregator which uses these noisy data to infer some statistics on the entire dataset. Thus, the \ldp model allows gleaning of useful information from the dataset without requiring the data owners to trust any third-party entity. 
The \ldp guarantee is formally defined as follows:\vspace{-0.1cm}
\begin{definition}[Local Differential Privacy, \ldp]
 A randomized algorithm 
 $\mathcal{M} : \mathcal{X} \rightarrow \mathcal{Y}$ is $\epsilon$-\ldp if for any pair of private values $x, x' \in \mathcal{X}$ and any subset of output, $\mathcal{T} \subseteq \mathcal{Y}$ \begin{gather}
 \mathrm{Pr}\big[\mathcal{M}(x) \in \mathcal{T}\big] \leq e^{\epsilon} \cdot \mathrm{Pr}\big[\mathcal{M}(x') \in \mathcal{T}  \big]\end{gather}
 \end{definition}
 $\epsilon$-\ldp guarantees the same level of protection for all pairs of
private values.  However, as discussed in the preceding section,  in this paper we use an
extension of \ldp which uses the $\ell_1$-distance  between a pair of values to customize heterogeneous (different levels of)
privacy guarantees for different pairs of private values.
 \begin{definition}
 [Distance-based Local Differential Privacy, \ddp] A randomized algorithm $\mathcal{M} : \mathcal{X} \rightarrow \mathcal{Y}$ is $\epsilon$-distance based locally differentially private (or $\epsilon$-\ddp), if for any pair of private values $x, x' \in \mathcal{X}$ and any subset of output $\mathcal{T} \subseteq \mathcal{Y}$, 
 \begin{gather*} 
 \mathrm{Pr}[\mathcal{M}(x)\in \mathcal{T}] \leq e^{\epsilon|x-x'|}\cdot \mathrm{Pr}[\mathcal{M}(x')\in \mathcal{T}]\numberthis\end{gather*}\vspace{-0.1cm}
 \label{def:ddp}\vspace{-0.1cm}\end{definition}\vspace{-0.4cm}
 The above definition is equivalent to the notion of metric-based  \ldp \cite{metricDP,dx} where the metric used is $\ell_1$-distance. 

In the central differential privacy (\CDP) model, a trusted data curator collates data from all the individuals  and stores it in the clear in a centrally held dataset. The curator mediates upon every query posed by a mistrustful analyst and enforces privacy by adding noise to the answers of the analyst's queries before releasing them.
\begin{definition}[Central Differential Privacy, \CDP]
  A randomized algorithm $\mathcal{M}: \mathcal{X}^n \mapsto \mathcal{Y}$ satisfies $\epsilon$-differential privacy ($\epsilon$-\DP) if for all $\mathcal{T} \subseteq \mathcal{Y}$ and for all adjacent datasets $X, X' \in \mathcal{X}^n$ it holds that\begin{gather}
\mathrm{Pr}[\mathcal{M}(X)\in \mathcal{T}] \leq e^{\epsilon}\cdot \mathrm{Pr}[\mathcal{M}(X') \in \mathcal{T}]
\end{gather}
\end{definition}\vspace{-0.2cm}
The notion of adjacent inputs is application-dependent, and typically means that $X$ and $X'$
differ in a single element (corresponding to a single individual). 
Particularly in our setting, the equivalent definition of the distance based relaxation of differential privacy in the \CDP model is given as follows:
 \begin{definition}[Distance-based Central Differential Privacy, \cdp] A randomized algorithm $\mathcal{M} : \mathcal{X}^n \rightarrow \mathcal{Y}$ is $\epsilon$-distance based centrally differentially private (or $\epsilon$-\cdp), if for any pair of datasets $X$ and $X'$ such that they differ in a single element, $x_i$ and $x_i'$, 
 and any subset of output $\mathcal{T} \subseteq \mathcal{Y}$,  
 \begin{gather}
 \mathrm{Pr}\big[\mathcal{M}(X)\in \mathcal{T}\big] \leq e^{\epsilon|x_i-x_i'|}\cdot \mathrm{Pr}\big[\mathcal{M}(X')\in \mathcal{T}\big]\end{gather}\label{def:cdp} \vspace{-0.3cm}
 \end{definition}
 We define $X $ and $X'$, as described above, to be \textbf{$\mathbf{t}$-adjacent} where $t \geq |x_i-x_i'|$, i.e., the differing elements differ by \textit{at most}  $t$. Trivially, any pair of $t$-adjacent datasets is also $t'$-adjacent for $t'>t$.  
\\Next, we formalize the resilience of \ddp (and \cdp) to post-processing computations. 
\begin{theorem}[Post-Processing~\cite{Dwork}]\label{thm:post}
Let $\mathcal{M}: \mathcal{X} \mapsto \mathcal{Y}$ ($\mathcal{M}: \mathcal{X}^n \mapsto \mathcal{Y}$) be a $\epsilon$-\ddp (\cdp) algorithm. Let $g : \mathcal{Y}\mapsto 
\mathcal{Y}'$ be any randomized mapping. Then $g \circ \mathcal{M}$ is also $\epsilon$-\ddp (\cdp).  \end{theorem}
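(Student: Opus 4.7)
The plan is to follow the classical post-processing argument for differential privacy and verify that the same steps go through for the distance-based relaxation, since the $\epsilon|x-x'|$ factor on the right-hand side of Def.~\ref{def:ddp} depends only on the input pair and not on the output set. First I would handle the case where $g$ is deterministic. Given any measurable $\mathcal{T}' \subseteq \mathcal{Y}'$, define its preimage $\mathcal{T} = g^{-1}(\mathcal{T}') = \{y \in \mathcal{Y} : g(y) \in \mathcal{T}'\}$. Then by definition of composition, $\Pr[(g \circ \mathcal{M})(x) \in \mathcal{T}'] = \Pr[\mathcal{M}(x) \in \mathcal{T}]$, and applying the $\epsilon$-\ddp guarantee of $\mathcal{M}$ on the pair $(x,x')$ with respect to the set $\mathcal{T}$ immediately yields the required $e^{\epsilon|x-x'|}$ factor.

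Second, I would extend to randomized $g$ by viewing it as a mixture of deterministic maps. Let $R$ denote the internal randomness of $g$ with distribution $\mu$, so that for each realization $r$, the map $y \mapsto g(y;r)$ is deterministic; then
\begin{gather*}
\Pr[(g \circ \mathcal{M})(x) \in \mathcal{T}'] = \mathbb{E}_{r \sim \mu}\bigl[\Pr[g(\mathcal{M}(x);r) \in \mathcal{T}']\bigr].
\end{gather*}
The previous deterministic step applies pointwise in $r$, and since the multiplicative factor $e^{\epsilon|x-x'|}$ does not depend on $r$ nor on $\mathcal{T}'$, I can pull it out of the expectation to complete the argument. A key conceptual point is that the randomness in $g$ is independent of the randomness in $\mathcal{M}$, which is what allows the pointwise bound to survive averaging.

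The analogous argument handles the \cdp claim: the only modification is that the pair $(x,x')$ is replaced by a pair of $t$-adjacent datasets $(X,X')$ with differing coordinates $(x_i,x_i')$, and the metric weight $|x-x'|$ is replaced by $|x_i-x_i'|$; neither substitution interacts with the preimage step or the mixture decomposition. I do not anticipate any significant obstacle. The only subtlety worth flagging is measurability of $g^{-1}(\mathcal{T}')$ for a general randomized $g$; modeling $g$ as a Markov kernel from $\mathcal{Y}$ to $\mathcal{Y}'$ and using the mixture-over-randomness formulation above is the cleanest way to sidestep this, and matches the standard treatment in the \DP literature from which this theorem is cited.
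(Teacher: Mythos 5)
Your proposal is correct and follows essentially the same route as the paper's proof in the appendix: handle deterministic $g$ via the preimage set and the \ddp bound on $\mathcal{M}$, then treat a randomized $g$ as a convex combination (mixture over its internal randomness) of deterministic maps, with the \cdp case obtained by the same substitution of $t$-adjacent datasets. Your explicit pulling of the $e^{\epsilon|x-x'|}$ factor out of the expectation over $r$ is just a spelled-out version of the paper's appeal to convexity, so there is nothing further to add.
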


\subsection{Order Preserving Encryption}\label{sec:background:OPE}
 In this section, we discuss the necessary definitions for \OPEs. 
\begin{definition}[Order Preserving Encryption \cite{frequencyHiding2}\footnote{See App.\ref{app:background} for additional notes.}] An order preserving encryption (\OPE) scheme $\mathcal{E}=\langle\K,\E,\D\rangle$ is a tuple of probabilistic polynomial time (\textsf{PPT}) algorithms:
\squishlist\item \textit{Key Generation} (\K).  The key generation algorithm takes as input a security parameter $\kappa$ and outputs
a secret key (or state) $\s$ as $\s\leftarrow \K(1^\kappa)$.   \item \textit{Encryption} (\E). Let  $X=\langle x_1,\cdots,x_n\rangle$ be an input dataset.  The encryption algorithm  takes as input a secret key $\s$, a
plaintext $x \in X$, and an order $\Gamma$ (any permutation of $\{1,\cdots,n\}$). It outputs a new key $\s'$ and a ciphertext $y$ as $(\s',y)\leftarrow \E(\s,x,\Gamma)$. \item \textit{Decryption} (\D). Decryption recovers the plaintext $x$ from the ciphertext $y$ using the secret key $\s$, $x\leftarrow \D(\s,y)$ . \squishend Additionally, we have \squishlist   \item \textit{Correctness Property.}  $x\leftarrow \D\big(\E(\s,x,\Gamma)\big),  \thinspace \forall \s, \forall x, \forall \Gamma$ 
\item 
  \textit{Order Preserving Property.} $\thinspace x > x' \implies y > y', \thinspace \forall x,x'$ where $y~(y')$ is the ciphertext corresponding to the plaintext $x~(x')$
  \squishend\label{def:OPE}
\end{definition}
The role of $\Gamma$ in the above definition is discussed later in this section.
The strongest formal guarantee for a \OPE scheme is \textit{indistinguishability against frequency-analyzing ordered chosen plaintext attacks} (\textsf{IND-FA-OCPA}). We present two definitions in connection to this starting with the notion of \textit{randomized orders} as defined by Kerschbaum \cite{IND-FAOCPA}.
\begin{definition}(Randomized Order~\cite{IND-FAOCPA}) 
Let $X=\langle x_1,\cdots,x_n\rangle$ be a dataset. An order
 $\Gamma = \langle\gamma_1, \cdots, \gamma_n\rangle$, where $ \gamma_i \in [n]$ and $i \neq j \implies \gamma_i \neq \gamma_j $,
for all $i, j,$ of dataset $X$, is defined to be a \textit{randomized order} if it holds that \vspace{-0.2cm}
\begin{gather*}\vspace{-0.4cm}
\forall i, j~ (x_i > x_j \implies \gamma_i > \gamma_j) \wedge (\gamma_i > \gamma_j \implies x_i \geq  x_j ) \vspace{-0.2cm}
\end{gather*}
\label{def:RO}\vspace{-0.4cm}
\end{definition}
For a plaintext dataset $X$ of size $n$, a randomized order, $\Gamma$, is a permutation of the plaintext indices $\{1, \cdots , n\}$ such that its inverse, $\Gamma^{-1}$, gives a sorted version of $X$. This is best explained by an example -- let  $X=\langle 9,40,15,76,15,76\rangle$ be a dataset of size $6$. A randomized order for $X$ can be either of 
 $\Gamma_1=\langle1,4,2,5,3,6\rangle$, $\Gamma_2=\langle 1,4,3,5,2,6\rangle$, $\Gamma_3=\langle 1,4,2,6,3,5\rangle$  and $\Gamma_4=\langle 1,4,3,6,2,5\rangle$. This is because the order of the two instances of $76$ and $15$ does not matter for a sorted version of $X$. 
\begin{definition}[\indfaocpa~\cite{frequencyHiding2,IND-FAOCPA}] An order-preserving encryption scheme $\mathcal{E} = (\K, \E, \D)$ has \textit{indistinguishable ciphertexts under frequency-analyzing ordered chosen plaintext attacks} if for any \textsf{PPT} adversary $\advppt$:
\begin{gather}\small \Big|\mathrm{Pr}[\mathcal{G}^{\advppt}_\textsf{FA-OCPA}(\kappa, 1) = 1] - \mathrm{Pr}[\mathcal{G}^{\advppt}_\textsf{FA-OCPA} (\kappa, 0) = 1]|\leq \textsf{negl}(\kappa)\label{eq:indocpa}\end{gather}
where $\kappa$ is a security parameter, \textsf{negl}$(\cdot)$ denotes a negligible function and $\mathcal{G}^{\advppt}_\textsf{FA-OCPA}(\kappa, b)$ is the random variable denoting $\advppt$'s output for the following game:\\\\
\textbf{Game} $\mathcal{G}^{\advppt}_\textsf{FA-OCPA} (\kappa, b)$
\squishlistnum
\item $(X_0,X_1)\leftarrow \advppt $ where $|X_0|=|X_1|=n$ and $X_0$ and $X_1$
have at least one common randomized order
\item Select $\Gamma^*$ uniformly at random from the common randomized orders of $X_0,X_1$
\item $ S_0 \leftarrow \K(1^\kappa)$
\item For  $\forall i \in [n]$, run $(S_i,y_{b,i}) \leftarrow \E(S_{i-1},x_{b,i},\Gamma^*)$
\item $b' \leftarrow \advppt(y_{b,1},\cdots,y_{b,n})$ where $b'$ is $\advppt$'s guess for $b$
\squishendnum
$\advppt$ is said to win the above game iff $b=b'$.
\label{def:faocpa}
\end{definition}

Informally, this guarantee implies that nothing other than the order of the plaintexts, not even the frequency, is revealed from the ciphertexts. Stated otherwise, the ciphertexts
only leak a randomized order of the plaintexts (randomized orders do not contain any frequency
information since each value always occurs exactly once) which is determined by the input order $\Gamma$ in Defn. \ref{def:OPE}. In fact, if $\Gamma$ itself happens to be a randomized order of the input $X$ then, the randomized order leaked by the corresponding ciphertexts is guaranteed to be $\Gamma$. For example, for $X=\langle9,40,15,76,15,76\rangle$ and $\Gamma=\langle 1,4,2,5,3,6\rangle$, we have $y_1<y_3<y_5<y_2<y_4<y_6$ ($y_i$ denotes the corresponding ciphertext for $x_i$ and $\Gamma^{-1}=\langle 1, 3, 5, 2, 4, 6\rangle$). 
Thus, the $\indfaocpa$ guarantee ensures that two
datasets with a common randomized order – but different
plaintext frequencies – are indistinguishable. For example, in the aforementioned game $\mathcal{G}^{\advppt}_\textsf{FA-OCPA} (\cdot)$, $\advppt$ would fail to distinguish between the plaintext datasets $X_0=\langle 9,40,15,76,15,76\rangle$ and  $X_1=\langle 22,94,23,94,36,94\rangle$  both of which share the randomized order $\Gamma^*=\langle 1,4,2,5,3,6\rangle$.

\section{$\epsilon$-\texorpdfstring{\MakeLowercase{d}}{d}LDP 
Order Preserving Encoding (\textsf{OP}$\epsilon$\texorpdfstring{\MakeLowercase{c}}{c})}
\label{sec:encoding}
In this section, we discuss our proposed primitive -- $\epsilon$-\ddp \textit{
order preserving encoding}, \name. 
\vspace{-0.3cm}\\\\
\textbf{Notations.} 
$[n], n \in \mathbb{N}$ denotes the set $\{1,2,\cdots,n-1,n\}$. If $\mathcal{X}=[s,e]$ is an input domain, then a $k$-partition $\mathcal{P}$ on $\mathcal{X}$ denotes a set of $k$ non-overlapping intervals $\mathcal{X}_i=(s_i,e_i]$
\footnote{The first interval,  $\mathcal{X}_1=[s_1,e_1]$, is a closed interval.}
, $s_{j+1}=e_j, i \in [k], j \in [k-1]$ such that $\bigcup_{i=1}^k \mathcal{X}_i=\mathcal{X}$. For example, for $\mathcal{X}=[1,100]$, $\Pa=\{[1,10],$ $(10,20],\cdots,(90,100]\}$ denotes a $10$-partition. Let $\hat{\mathcal{X}}$ denote the domain of partitions defined over $\mathcal{X}$. 
Additionally, let $\mathcal{O}=\{o_1, \cdots, o_k\}, o_i<o_{i+1}, i \in [k-1]$ represent the output domain where $o_i$ is the corresponding encoding for the interval $\mathcal{X}_i$ and let $\mathcal{P}(x)=o_i$ denote that $x \in \mathcal{X}_i$. Referring back to our example, if $\mathcal{O}=\{1,2,\cdots,10\}$, then  $\Pa(45)=5$. 

\subsection{Definition of \name}\label{sec:OPec:def}
\name is a randomised mechanism that encodes its input while maintaining some of its ordinality. 
\vspace{-0.1cm}
\begin{definition}[$\epsilon$-\ddp 
Order Preserving Encoding, \name] For a given $k$-partition \scalebox{0.9}{$\Pa\in \hat{\mathcal{X}}$}, a \textit{$\epsilon$-\ddp 
order preserving encoding} scheme, \scalebox{0.9}{$\name:\mathcal{X}\times \hat{\mathcal{X}}\times \mathbb{R}_{>0}\mapsto \mathcal{O}$} is a randomized mechanism such that 
\squishlistnum \vspace{-0.05cm}
\item $k=|\mathcal{O}|, k\leq |\mathcal{X}|$ \vspace{-0.1cm}\item For all $ x \in \mathcal{X}$ and $o'\in\mathcal{O}\setminus\mathcal{T}_x$ where \scalebox{0.9}{$\mathcal{T}_x = \begin{cases}\mbox{\scalebox{0.9}{$\{o_1,o_2\}$}} & \hspace{-0.5cm}\mbox{if } \Pa(x)=o_1 \\ \mbox{\scalebox{0.9}{$\{o_{k-1},o_k\}$}} & \hspace{-0.5cm}\mbox{if } \Pa(x)=o_k\\\mbox{\scalebox{0.9}{$\{o_{i-1},o_i,o_{i+1}\}$}} & \hspace{-0.2cm}\mbox{otherwise}  \end{cases}$} \\\vspace{-0.1cm}
$\exists o \in \mathcal{T}_x$ such that, 
\begin{gather*}\vspace{-0.1cm}\mathrm{Pr}\big[\name(x,\Pa,\epsilon) = o\big]> \mathrm{Pr}\big[\name(x,\Pa,\epsilon)= o'\big]\numberthis\label{eq:popec:order}\vspace{-0.4cm}\end{gather*}  \vspace{-0.4cm}
\item For all $x,x' \in \mathcal{X}, o \in \mathcal{O}$, 
we have 
\vspace{-0.1cm}
\begin{gather*}
\hspace{-0.4cm}\mathrm{Pr}\big[\name(x,\Pa,\epsilon)=o\big]\leq e^{\epsilon|x-x'|}\cdot \mathrm{Pr}\big[\name(x',\Pa,\epsilon)=o\big]\label{eq:popec:dp}\vspace{-0.4cm}\end{gather*} \squishendnum
\label{def:popec}
\end{definition} \vspace{-0.2cm} 
The first property in the above definition signifies the flexibility of the \name primitive to provide only a \textit{partial} ordering guarantee. For instance, in our above example $k=10 < |\mathcal{X}|=100$. Thus, $\Pa$ acts as a utility parameter -- it determines the granularity at which the ordering information is maintained by the encoding (this is independent of the privacy-accuracy trade-off arising from the choice of $\epsilon$). For example, for the same value of $\epsilon$ and $\mathcal{X}=[1,100]$, $\Pa=\{[1,10],(10,20],\cdots,(90,100]\}$ gives better utility than $\Pa'=\{[1,33],(33,66],(66,100]\}$ since the former preserves the ordering information at a finer granularity.  $\Pa=\mathcal{O}=\mathcal{X}$ denotes the default case where effectively no partition is defined on the input domain and  $\Pa(x)=x, x \in \mathcal{X}$ trivially. We discuss the significance of the parameter $\Pa$ in Sec. \ref{sec:application}.
 \par Due to randomization (required for the \ddp guarantee), \name is bound to incur some errors in the resulting numerical ordering of its outputs. To this end, the second property guarantees that the noisy output is most likely to be either the correct one or the ones immediately next to it. For instance, for the aforementioned example, $\name(45,\Pa,\epsilon)$ is most likely to fall in $\{4,5,6\}$. This ensures that the noisy outputs still retain sufficient ordinal characteristics of the corresponding inputs. Note that the actual value of the encodings in $\mathcal{O}$ does not matter at all as long as the ordinal constraint $o_i<o_{i+1}, i \in [k-1]$ is maintained. For instance for $\Pa=\{[1,10],(10,20],\cdots,(90,100]\}$, $\mathcal{O}=\{1, 2,3,4,5,6,7$ $,8,9,10\}$, $\mathcal{O}'=\{5, 15, 25, 35, 45, 55, 65, 75, 85, 95\}$ and $\mathcal{O}''=\{81,99,120,150,234,345,$  $400,432,536,637\}$ are all valid.  \par 
Finally, the third property ensures that the primitive satisfies $\epsilon$-\ddp. Note that \scalebox{0.9}{$\epsilon=\infty$} represents the trivial case \scalebox{0.9}{$\name(X,\Pa,\infty)=\Pa(X)$}.
\vspace{-0.4cm}
\subsection{Construction of \name}\label{sec:construction}
In this section, we describe a construction for the \name primitive (Alg. \ref{alg:1}).
\setlength{\textfloatsep}{2pt}
\begin{algorithm}
\small \caption{Construction of \name}\label{alg:1}
\begin{algorithmic}[1]
\Statex   \textbf{Setup Parameters:} $\mathcal{D}$ - Prior input distribution over $\mathcal{X}$, its 
\Statex \hspace{2.8cm}  default value is the uniform distribution;
\Statex  \hspace{2.4cm}$\mathcal{O}$ - Output domain $\{o_1,\cdots,o_k\}$;

\Statex   \textbf{Input:} $x$ - Number to be encoded via \name; $\epsilon$ - Privacy budget;
\Statex \hspace{0.8cm}$\Pa$ - A $k$-partition $\{[s_1,e_1] ,\cdots ,( s_k,e_k]\}$ over $\mathcal{X}$ 
\Statex \textbf{Output:} $o$ - Output encoding;
\Statex \textbf{Stage I:} Computation of central tendency for each interval
\State \textbf{for} $i \in [k]$ 
\State \hfill $d_i = $ Weighted median of the interval $(s_i,e_i]$ where  $\mathcal{D}$ gives \Statex \hspace{1.6cm} the  corresponding weights
\State \textbf{end for} 
\Statex\textbf{Stage II:} Computation of the output probability distributions
\State \textbf{for} $\mathbf{x} \in \mathcal{X}$:
\State \hspace{0.3cm}\textbf{for} $i \in [k]$ 
\State
\begin{equation}
\small\hspace{0.5cm}
p_{\mathbf{x},i} = \frac{e^{-|\mathbf{x}-d_i|\cdot \epsilon/2}}{\overset{k}{\underset{j=1}{\sum}}
e^{-|\mathbf{x}-d_j|\cdot \epsilon/2}} \hspace{0.3cm}\mbox{\textcolor{blue}{$\rhd$} \scalebox{0.9}{$p_{\mathbf{x},i}=\mathrm{Pr}\big[\name(\mathbf{x},\Pa,\epsilon)=o_i\big]$}} \label{eq:pivot1}
\end{equation}
\State \hspace{0.3cm}\textbf{end for}
\State \hspace{0.3cm} $p_\mathbf{x}=\{p_{\mathbf{x},1},\cdots, p_{\mathbf{x},k}\}$ \hfill \textcolor{blue}{$\rhd$} Encoding (output)  probability \Statex \hfill distribution for $\mathbf{x}$ 
\State \textbf{end for}
\State $o\sim p_x$ \textcolor{blue}{$\rhd$} Encoding drawn at random from the distribution $p_{x}$
\State \textbf{Return} $o $ 
\end{algorithmic}
\end{algorithm}
  The algorithm is divided into two stages.  
  In Stage I (Steps $1$-$3$),  it computes the central tendency (a typical value for a distribution) ~\cite{central:tendency}, $d_i, i \in [k]$,  of each of the intervals of the given $k$-partition $\Pa$\footnote{$\Pa$ \textit{cannot} be computed from the private dataset (without apportioning a separate privacy budget). Here, we assume that $\Pa$ is chosen from the (non-private) prior, $\mathcal{D}$.}. Specifically, we use weighted median \cite{wm} as our measure for the central tendency where the weights are determined by a prior on the input data distribution, $\mathcal{D}$.  This maximizes the expected number of inputs that are mapped to the correct encoding, i.e., $x$ is mapped to $\Pa(x)$. 
  $\mathcal{D}$ can be estimated from domain knowledge or (non-private) auxiliary datasets. In the event such a prior is not available, $\mathcal{D}$ is assumed to be the uniform distribution ($d_i$ is the median).   

\par In Stage II (Steps $4$-$9$), the encoding probability distributions are computed such that the probability of $x$ outputting the $i$-th encoding, $o_i$, is inversely proportional to its distance from the $i$-th central tendency, $d_i$. Specifically, we use a variant of the classic exponential mechanism \cite{Ordinal1,Dwork} (Eq. \eqref{eq:pivot1}). An illustration of the algorithm is in App. \ref{app:alg:1}.
\begin{theorem}Alg. \ref{alg:1} gives a construction for \name (Def. \ref{def:popec}). \label{thm:opec}\end{theorem}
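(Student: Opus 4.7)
The plan is to verify the three conditions in Definition~\ref{def:popec} for the distribution $p_x$ in \eqref{eq:pivot1}. Condition~1 ($k = |\mathcal{O}|$, $k \leq |\mathcal{X}|$) is immediate from the setup, since $\mathcal{O}$ has exactly $k$ elements and a $k$-partition of $\mathcal{X}$ can only exist when $k \leq |\mathcal{X}|$. For Condition~3 (the $\epsilon$-\ddp guarantee), I would carry out the standard exponential-mechanism analysis: writing
\[
\frac{\Pr[\name(x,\Pa,\epsilon)=o_i]}{\Pr[\name(x',\Pa,\epsilon)=o_i]} = \frac{e^{-|x-d_i|\epsilon/2}}{e^{-|x'-d_i|\epsilon/2}} \cdot \frac{\sum_{j} e^{-|x'-d_j|\epsilon/2}}{\sum_{j} e^{-|x-d_j|\epsilon/2}},
\]
the reverse triangle inequality $\bigl||x-d_j| - |x'-d_j|\bigr| \leq |x-x'|$ bounds each of the two factors by $e^{\epsilon|x-x'|/2}$ (for the second factor, apply it termwise to the numerator sum before dividing), and multiplying yields the required $e^{\epsilon|x-x'|}$.

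Condition~2 is the substantive part. Fix $x$ with $\Pa(x) = o_i$ and any $o_j \in \mathcal{O}\setminus\mathcal{T}_x$, so $|i-j|\geq 2$. Because the normalizing denominator in \eqref{eq:pivot1} is independent of the output, the required inequality $\Pr[\name(x,\Pa,\epsilon)=o_s] > \Pr[\name(x,\Pa,\epsilon)=o_j]$ reduces to $|x-d_s| < |x-d_j|$. My plan is to pick $o_s$ to be the element of $\mathcal{T}_x$ with index closest to $j$, i.e.\ $o_{i-1}$ if $j\leq i-2$ and $o_{i+1}$ if $j\geq i+2$. The key geometric fact (which I would isolate as a one-line observation from Stage~I of Alg.~\ref{alg:1}) is that every central tendency $d_r$ lies in $(s_r,e_r]$; combining this with the partition-gluing identity $e_r = s_{r+1}$ produces a strict chain, e.g.\ in the case $j\leq i-2$,
\[
d_j \;\leq\; e_j \;\leq\; e_{i-2} \;=\; s_{i-1} \;<\; d_{i-1} \;\leq\; e_{i-1} \;=\; s_i \;<\; x,
\]
from which $x - d_{i-1} < x - s_{i-1} \leq x - d_j$ follows immediately; the case $j \geq i+2$ is symmetric.

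The main obstacle is making this airtight at the boundary cases $\Pa(x)\in\{o_1,o_k\}$, where $\mathcal{T}_x$ has only two elements and the ``pick the closest neighbor in $\mathcal{T}_x$'' recipe must be adapted. For $\Pa(x)=o_1$ and $o_j$ with $j\geq 3$, I would pick $o_s=o_2$ and argue directly from $x \leq e_1 = s_2 < d_2$ and $d_j > s_j \geq s_3 = e_2 \geq d_2$ that $d_2 - x < d_j - x$; the $\Pa(x)=o_k$ case is symmetric. Degenerate sub-cases where $\mathcal{O}\setminus\mathcal{T}_x = \emptyset$ (e.g.\ $k\leq 2$, or $k=3$ with $\Pa(x)=o_2$) hold vacuously.
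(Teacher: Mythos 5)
Your proposal is correct and follows essentially the same route as the paper: the paper's own proof also splits into an ordinal lemma (choosing $o_{i-1}$ or $o_{i+1}$ depending on which side $o_j$ lies, justified by the central tendencies $d_r$ lying in their own intervals so that $|x-d_{i\pm1}|<|x-d_j|$) and a \ddp lemma that bounds the exponential-mechanism ratio by applying the triangle inequality separately to the numerator term and, termwise, to the normalizing sum, giving $e^{t\epsilon/2}\cdot e^{t\epsilon/2}=e^{t\epsilon}$. Your write-up merely spells out the distance-chain and boundary cases that the paper leaves implicit.
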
\vspace{-0.2cm}
The proof of the above theorem follows directly from two facts. First, Alg. \ref{alg:1} satisfies the ordinal constraint of Eq. \ref{eq:popec:order} (Lemma \ref{lemma:1} in App. \ref{app:thm:construction}) as depicted in Fig. \ref{fig:partition}.  Second, it is straightforward from Eq. \ref{eq:pivot1} that Alg. \ref{alg:1} satisfies $\epsilon$-\ddp (Lemma \ref{lemma:2} in App. \ref{app:thm:construction}). 
\\\textbf{Size of partition $|\Pa|$.} From Eq. \ref{eq:pivot1}, we observe that for every input $x$, the encoding probability distribution $p_x$ is an exponential distribution centered at $\Pa(x)$ -- its correct encoding. Moreover, the smaller is the size of $\Pa$ (number of intervals in $\Pa$),  the larger is the probability of outputting $\Pa(x)$ (or its immediate neighbors). This is demonstrated in Fig. \ref{fig:partition} which plots $p_x$ for \scalebox{0.9}{$x=50$} and $\epsilon=0.1$ under varying equi-length partitioning of the input domain $[100]$.
 \begin{figure}[ht]
 \centering
      \includegraphics[width=0.5\linewidth]{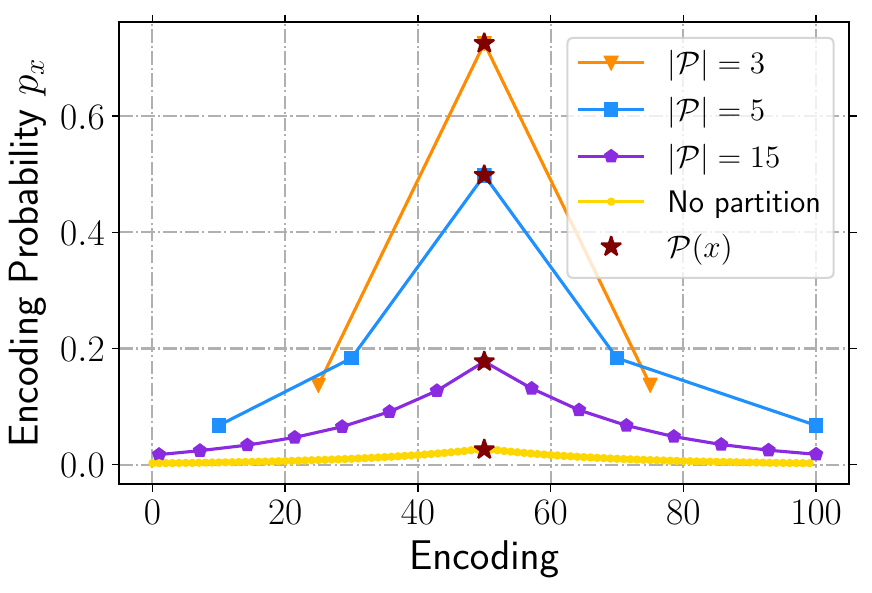}\vspace{-0.4cm}
        \caption{Encoding probability distribution for different partition sizes for $x=50$, $\epsilon=0.1$ and $\mathcal{X}=[100]$} \label{fig:partition}
    \end{figure}
\begin{tcolorbox}[sharp corners]\vspace{-0.2cm}\textbf{Remark 1.} The \ddp guarantee of \name (Thm. \ref{thm:opec}) does \textit{not} depend on the partition $\Pa$. Thus, the partition size\footnote{$k=1$ is a trivial case which destroys all ordinal information.} 
could range from $k=|\mathcal{O}|=|\mathcal{X}|$ (no effective partitioning at all) to $k=2$. Additionally, the \ddp guarantee (and utility) is also independent of the encoding domain, $\mathcal{O}$, as long as the appropriate ordering constraint is valid. Also note that the partition can be completely \textit{arbitrary}. \vspace{-0.2cm} \end{tcolorbox}
\noindent\textbf{Design Choices.} 
Note that we intend to use the \name primitive as a building block for our differentially private \nam scheme (details in Sec. \ref{sec:encryption}). Now for a standard \OPE, the utility (accuracy) remains exactly the same as that of the plaintext. However, the inherent randomization in \DP results in an inevitable loss in utility. Hence, our primary motivation for designing \name is to $(1)$ provide a meaningful guarantee against inference attacks $(2)$ with high utility. 
\vspace{-0.2cm}\\\\\noindent\textit{Why \ddp?}  The standard (local) \DP definition requires every input pair to be indistinguishable -- this requires the addition of a large amount of noise resulting in low utility, especially for large data domains. With \ddp, only input pairs that are close to each other are indistinguishable which still results in a meaningful guarantee in practice (Sec. \ref{sec:intro:brief}). Essentially, this heterogeneous guarantee reveals some controlled information about the $\ell_1$-distance between input pairs. Note that utility in our context implies how well is the order of the plaintexts preserved. Observe that the order of two values is determined by their $\ell_1$-distance. Thus intuitively, the output under \ddp retains some ordinal information about its input, thereby improving utility.
\vspace{-0.2cm}\\\\\noindent \textit{Why partitioning?} The typical approach to achieve \DP is via addition of noise that is proportional to a domain-dependent term called sensitivity ~\cite{Dwork}. Common approaches to  mitigate the cost of high sensitivity (for large data domains), such as propose-test-release~\cite{Dwork}, result in approximate DP. Partitioning bypasses the need for noise addition by design and provides a clean way to improve utility with pure DP (see Sec. \ref{sec:application} for details). Additionally, partitioning provides flexibility. A use case is demonstrated below where one can plug-and-play with different values of $\Pa$ enabling \name to answer different types of queries in the LDP setting. 

\vspace{-0.2cm}\subsection{\ldp Mechanisms using \name} \label{sec:opec:ldp}

The \name primitive can be of independent interest in the \ldp setting. 
Depending on the choice of the partition $\mathcal{P}$ over the input domain $\mathcal{X}$, \name can be used to answer different types of queries with high utility. 
In this section, we describe how to use \name to answer two such queries. \\\textbf{Problem Setting.} We assume the standard \ldp setting with $n$ data owners, $\textsf{DO}_i, i \in [n]$ each with a private data $x_i$. 
\vspace{-0.2cm}
\subsection*{Ordinal Queries}
\name can be used to answer queries in the \ldp setting that require the individual noisy outputs to retain some of the ordinal characteristics of their corresponding inputs. 
 One class of such queries include identifying which $q$-quantile 
 does each data point belong to. This constitutes a popular class of queries for domains such as annual employee salaries, annual sales figures of commercial firms and student test scores. For example, suppose the dataset consists of the annual sales figures of different clothing firms and the goal is to group them according to their respective deciles. 
Here, partition $\Pa$ is defined by dividing the input domain into $q=10$ equi-depth intervals using an estimate of the input distribution, $\mathcal{D}$. If such an estimate is not available, a part of the privacy budget can be first used to compute this directly from the data \cite{numestimate}. For another class of queries,  the partition can be defined directly on the input domain based on its semantics. Consider an example where the goal is to group a dataset of audiences of TV shows based on their age demographic -- the domain of age can be divided into intervals $\{[1,20], [21,40], [41,60], [61,100]\}$ based on categories like ``youth'', ``senior citizens''. Once the partition is defined, each data owner uses \name to report their noisy encoding. Note that the \ddp guarantee is amenable to these cases, as one would want to report the intervals correctly but the adversary should not be able to distinguish between values belonging to the same interval. 
\vspace{-0.2cm}
\subsection*{Frequency Estimation} \label{sec:freqest}
Here, we discuss the default case of the \name primitive where the partition is same as the input domain, i.e., $\Pa=\mathcal{O}=\mathcal{X}$. Under this assumption, we can obtain a frequency oracle in the \ldp setting under the \ddp guarantee. We describe the mechanism below (see Alg. \ref{alg:freqestimation} in \ifpaper the full paper~\cite{anom} \else App. \ref{app:freq:utility} \fi for full algorithm). Given a privacy parameter, $\epsilon$, each data owner, $\textsf{DO}_i, i \in [n]$, reports  $\tilde{o}_i=\name(x_i,\mathcal{X},\epsilon)$  to the untrusted data aggregator. Next, the data aggregator  performs non-negative least squares (\textsf{NNLS})  as a post-processing inferencing step on the noisy data to compute the final frequency estimations.  \textsf{NNLS} is a type of constrained least squares optimizations problem where the coefficients are not allowed to become negative. That is, given a matrix $\mathbf{A}$ and a (column) vector of response variables $\mathbf{Y}$, the goal is to find $\mathbf{X}$ such that \begin{gather*}
\arg \min_\mathbf{X} \|\mathbf {A\cdot X} -\mathbf {Y} \|_{2}, \mbox{subject to } \mathbf{X} \geq 0
\end{gather*}
where $||\cdot||_2$ denotes Euclidean norm. 
The rationale behind this inferencing step is discussed below. 
\begin{lemma}  W.l.o.g let $\mathcal{X}=\{1,\cdots,m\}$ and let $\mathbf{Y}$ be the vector such that $\mathbf{Y}(i), i \in [m]$ indicates the count of value $i$ in the set $\{\tilde{o}_1,\cdots,\tilde{o}_n\}$ where $\tilde{o}_i=\name(i,\mathcal{X},\epsilon)$.  Given,  \begin{gather}
\mathbf{A}(i,j)=\mathrm{Pr}\big[\name(i,\mathcal{X},\epsilon)=j\big], i,j \in [m]\end{gather}   the solution $\mathbf{X}$ of $\mathbf{A}\cdot \mathbf{X} = \mathbf{Y}$ gives an unbiased frequency estimator ($\mathbf{X}(i)$ is the unbiased estimator for value $i$).\label{lem:unbiased}
\end{lemma}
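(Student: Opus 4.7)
The plan is to verify unbiasedness by a direct linearity-of-expectation calculation and then read the conclusion off via a single matrix inversion. Let $\mathbf{f}\in\mathbb{Z}_{\geq 0}^m$ denote the vector of true frequencies, so that $\mathbf{f}(i)$ is the number of data owners whose private value equals $i\in[m]$ and $\sum_i \mathbf{f}(i)=n$. For each user $u$ with true value $x_u$, the reported encoding $\tilde{o}_u=\name(x_u,\mathcal{X},\epsilon)$ takes value $j$ with probability $\mathbf{A}(x_u,j)$, independently of the other users. Summing the indicators $\mathbb{1}[\tilde{o}_u=j]$ over $u$ and then grouping users by their true value gives
\begin{equation*}
\mathbb{E}\bigl[\mathbf{Y}(j)\bigr]\;=\;\sum_{u=1}^{n}\Pr\bigl[\name(x_u,\mathcal{X},\epsilon)=j\bigr]\;=\;\sum_{i=1}^{m}\mathbf{f}(i)\cdot\mathbf{A}(i,j),
\end{equation*}
which in matrix form is the identity $\mathbb{E}[\mathbf{Y}]=\mathbf{A}^{\!\top}\mathbf{f}$ (or, once the row/column convention of $\mathbf{A}$ is aligned with the estimator-defining system $\mathbf{A}\mathbf{X}=\mathbf{Y}$, equivalently $\mathbb{E}[\mathbf{Y}]=\mathbf{A}\mathbf{f}$).

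Given this identity, unbiasedness is essentially a one-liner. Because $\mathbf{A}$ is a deterministic matrix that depends only on the public parameters $(\mathcal{X},\epsilon)$, I may take expectations through $\mathbf{X}=\mathbf{A}^{-1}\mathbf{Y}$ to obtain $\mathbb{E}[\mathbf{X}]=\mathbf{A}^{-1}\mathbb{E}[\mathbf{Y}]=\mathbf{A}^{-1}\mathbf{A}\mathbf{f}=\mathbf{f}$. In particular, $\mathbb{E}[\mathbf{X}(i)]=\mathbf{f}(i)$ for every $i\in[m]$, which is exactly the unbiased-estimator claim.

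The one non-trivial ingredient is justifying that $\mathbf{A}$ is invertible, so that the linear system has a unique solution and the $\mathbf{A}^{-1}$ that appears in the expectation calculation is well defined. With $\Pa=\mathcal{O}=\mathcal{X}$ and a uniform prior, the weighted medians reduce to $d_i=i$ and equation \eqref{eq:pivot1} specializes to $\mathbf{A}(i,j)\propto e^{-|i-j|\epsilon/2}$, so each row of $\mathbf{A}$ is a discrete two-sided exponential distribution sharply peaked at the diagonal. I expect invertibility to follow either (i) from strict diagonal dominance when $\epsilon$ is not too small, since $\mathbf{A}(i,i)$ then dominates the sum of off-diagonal entries in its row, or (ii) more generally from the fact that kernel matrices of the form $\bigl(\phi(|i-j|)\bigr)_{i,j}$ with a strictly log-concave kernel $\phi$ are totally positive and hence non-singular. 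Apart from this invertibility check, every step reduces to expectations of indicators followed by a single application of $\mathbf{A}^{-1}$.
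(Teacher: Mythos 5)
Your proposal is correct and follows essentially the same route as the paper's proof: compute $\mathbb{E}[\mathbf{Y}(j)]=\sum_i \mathbf{f}(i)\,\mathrm{Pr}[\textsf{OP$\epsilon$c}(i,\mathcal{X},\epsilon)=j]$ by linearity of expectation and then push the expectation through $\mathbf{A}^{-1}$ to recover the true counts (the paper silently performs the same row/column realignment of $\mathbf{A}$ that you flag explicitly). Your added sketch of why $\mathbf{A}$ is invertible is a reasonable supplement that the paper simply assumes, but it does not change the substance of the argument.
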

The proof of the above lemma is presented in \ifpaper the full paper~\cite{anom}\else App. \ref{app:lem:unbiased}\fi. Thus by the above lemma, $\mathbf{X}$ is an unbiased frequency estimator. However, it is important to note that the solution $\mathbf{X}$ is not guaranteed to be non-negative.  But, given our problem setting, the count estimates are constrained to be non-negative. Hence, we opt for an $\textsf{NNLS}$ inferencing. When the exact solution $\mathbf{X}=\mathbf{A}^{-1}\cdot \mathbf{Y}$ is itself non-negative, the estimator obtained from the \textsf{NNLS} optimization is identical to the exact solution. Otherwise, the \textsf{NNLS} optimization gives a biased non-negative estimator that results in minimal least square error. 
The resulting frequency oracle can be used to answer other queries like mean estimation and range queries\footnote{In the \ldp setting, this refers to statistical range query, i.e., the count of the records that belong to a queried range.}. A formal utility analysis is in \ifpaper the full paper~\cite{anom}\else App. \ref{app:freq:utility}\fi.

\section{$\epsilon$-\texorpdfstring{\MakeLowercase{d}}{d}DP 
Order Preserving Encryption (\nam)}\label{sec:encryption}
In this section, we describe our proposed $\epsilon$-\cdp order preserving encryption scheme, \nam.  

\vspace{-0.2cm}
\subsection{Definition of \nam}\label{sec:OPe:def}
The \textit{$\epsilon$-\cdp 
order preserving  encryption} (\nam) scheme is an encryption scheme that bolsters the cryptographic guarantee of a \OPE scheme with an additional \cdp guarantee. 
Here, we detail how our proposed primitive \name can be used in conjunction with a \OPE scheme (Def. \ref{def:OPE}) to form a \nam scheme. 
\begin{definition}[$\epsilon$-\cdp 
Order Preserving Encryption, \nam] A \textit{$\epsilon$-\cdp 
order preserving encryption} scheme, \nam, is composed of a \OPE scheme, $\mathcal{E}$, that satisfies the $\indfaocpa$ guarantee (Def. \ref{def:indfaocpae}), and the \name primitive and is defined by the following algorithms: \\\textbf{\nam Scheme}
\squishlist
\item \textit{Key Generation} ($\textsf{K}_{\epsilon}$). Uses $\K$ from the \OPE scheme to generate a secret key $\s$.
\item \textit{Encryption} ($\textsf{E}_{\epsilon}$). The encryption algorithm inputs a plaintext  $x \in \mathcal{X}$, an  order $\Gamma$, a partition $\Pa \in \hat{\mathcal{X}}$, and the privacy parameter $\epsilon$. It outputs $(\s',y)\leftarrow \E(\s,\tilde{o},\Gamma)$ where $\tilde{o}\leftarrow \name(x,\Pa,\epsilon/2)$.
\item \textit{Decryption} ($\D_\epsilon$). The decryption algorithm uses \textsf{D} to get back $\tilde{o}\leftarrow \D(\s,y)$.
\squishend\label{def:pope}
\end{definition}
Following the above definition, the encryption of a dataset $X \in \mathcal{X}^n, X =\langle x_1,\cdots,x_n\rangle$ is carried out as follows:
\squishlistnum 
\item Set $\s_0\leftarrow\K(1^\kappa)$
\item For $\forall i \in [n]$, compute $(\s_{i},y_i)\leftarrow \E(\s_{i-1}, \tilde{o}_i,\Gamma)$ where $\tilde{o}_i\leftarrow\name(x_i,\Pa,\epsilon/2)$\squishendnum
\begin{tcolorbox}[sharp corners]\vspace{-0.2cm}\textbf{Key Idea.}  A \nam scheme works as follows: 
\squishlist \item First, obtain an (randomized) encoding for the input using the \name primitive (one possible construction is given by Alg. \ref{alg:1} for any given $\epsilon$ and partition $\Pa$). \item Encrypt the above encoding with a \OPE scheme.\squishend \vspace{-0.2cm}\end{tcolorbox}
Thus, ciphertexts encrypted with \nam preserve the order of the corresponding encodings as output by the \name primitive. Referring back to our example, if $X=\langle76,9,9,40,15,76,77\rangle$ and its corresponding encodings are $\widetilde{O}=\{8,1,2,4,2,8,8\}$, then the ciphertext of $X$ under \nam preserves the order of $\widetilde{O}$. \par In other words, since a \OPE scheme preserves the exact order of its input dataset by definition, the utility of \nam (in terms of the preserved ordering information) is determined by the underlying \name primitive. This is formalized by the following theorem.
\begin{theorem}\vspace{-0.1cm}[Utility Theorem] If, for a given partition $\Pa\in \hat{\mathcal{X}}$  and for all $x,x' \in \mathcal{X}$ such that $x > x'$ we have \begin{gather}\vspace{-0.4cm}\mathrm{Pr}\big[\name(x,\Pa,\epsilon)\geq \name(x',\Pa,\epsilon)\big]\geq \alpha, \alpha \in [0,1] \vspace{-0.2cm}\end{gather}  then for a \nam scheme instantiated on such a \name primitive, 
 \begin{gather}\vspace{-0.4cm}\mathrm{Pr}\big[\E_{\epsilon}(x,\s,\Gamma,\Pa,\epsilon)\geq \E_{\epsilon}(x',\s,\Gamma,\Pa,\epsilon) \big]\geq \alpha \vspace{-0.2cm}\end{gather} where $\s\leftarrow \K_{\epsilon}(1^\kappa)$  and any $\Gamma$. \label{thm:utility} \end{theorem}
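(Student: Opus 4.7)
The plan is to reduce the claim to the hypothesis by unfolding Defn.~\ref{def:pope} and invoking the unconditional order-preserving property of the underlying \OPE. Because the \nam pipeline factors as ``apply \name, then encrypt the resulting encoding with a plain \OPE,'' the event that two \nam ciphertexts are correctly ordered depends only on (i) whether the two encodings are correctly ordered, and (ii) how the \OPE breaks ties between equal encodings.

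Concretely, let $\tilde{o}_x \leftarrow \name(x,\Pa,\epsilon/2)$ and $\tilde{o}_{x'} \leftarrow \name(x',\Pa,\epsilon/2)$, and let $y_x, y_{x'}$ denote the corresponding ciphertexts produced by $\E(\s,\cdot,\Gamma)$. I would condition first on the pair $(\tilde{o}_x,\tilde{o}_{x'})$ and then apply the order-preserving property of $\E$ from Defn.~\ref{def:OPE}: whenever $\tilde{o}_x > \tilde{o}_{x'}$ we get $y_x > y_{x'}$ deterministically, and hence $y_x \geq y_{x'}$. Using the independence of the coins of \name from those of $\E$, this already yields $\mathrm{Pr}[y_x \geq y_{x'}] \geq \mathrm{Pr}[\tilde{o}_x > \tilde{o}_{x'}]$.

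The remaining subtlety, and what I expect to be the main obstacle, is the tie case $\tilde{o}_x = \tilde{o}_{x'}$ that is implicit in the $\geq$ on the left-hand side of the hypothesis. Here the order of $y_x$ and $y_{x'}$ is dictated by the randomized order $\Gamma$ that \E~consumes; one reads ``any $\Gamma$'' in the theorem statement as ``any $\Gamma$ that is a valid randomized order of the encoded dataset in the sense of Defn.~\ref{def:RO}.'' Since $x > x'$ at the plaintext level and $\Gamma$ is inherited from a randomized order of the original plaintext sequence, we obtain $\gamma_x > \gamma_{x'}$ even when $\tilde{o}_x = \tilde{o}_{x'}$, and \E~then forces $y_x \geq y_{x'}$ on the tie event as well. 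Combined with the strict case, this gives the set-theoretic inclusion $\{\tilde{o}_x \geq \tilde{o}_{x'}\} \subseteq \{y_x \geq y_{x'}\}$.

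Putting the two cases together yields $\mathrm{Pr}[\E_\epsilon(x,\s,\Gamma,\Pa,\epsilon) \geq \E_\epsilon(x',\s,\Gamma,\Pa,\epsilon)] \geq \mathrm{Pr}[\tilde{o}_x \geq \tilde{o}_{x'}] \geq \alpha$ by the hypothesis. The bulk of the argument is a straightforward definition chase; the only step that requires care is the consistent treatment of $\Gamma$ in the tie case, which I would make precise by appealing to Defn.~\ref{def:RO} for the induced encoded sequence so that the plaintext order dominates whenever two encodings collide.
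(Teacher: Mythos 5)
Your overall route is the same as the paper's: the paper disposes of Thm.~\ref{thm:utility} with a one-line appeal to Defs.~\ref{def:OPE} and \ref{def:pope}, which is exactly the definition chase in your first paragraph (condition on the pair of encodings; whenever $\tilde{o}_x > \tilde{o}_{x'}$ the order-preserving property of \E forces $y_x > y_{x'}$, so the \OPE layer cannot degrade whatever ordering guarantee \name provides). That part of your argument is fine, and you are right that the tie event $\tilde{o}_x=\tilde{o}_{x'}$ is the only place where anything nontrivial happens.

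The problem is precisely your resolution of that tie case. You assert that because $x>x'$ and ``$\Gamma$ is inherited from a randomized order of the original plaintext sequence,'' one gets $\gamma_x>\gamma_{x'}$ when the encodings collide. Nothing in Def.~\ref{def:pope} or in the theorem supplies such a $\Gamma$: the theorem quantifies over \emph{any} $\Gamma$, and in the construction (and in the \indfaocpae~game, Def.~\ref{def:indfaocpae}, Step 4a) $\Gamma$ is chosen relative to the encoding sequence $\widetilde{O}$, not the plaintexts. Your own alternative reading --- $\Gamma$ a randomized order of the encoded dataset in the sense of Def.~\ref{def:RO} --- does not rescue the step, because Def.~\ref{def:RO} constrains only pairs with strictly different values; when $\tilde{o}_x=\tilde{o}_{x'}$ it permits $\gamma_{x'}>\gamma_x$ even though $x>x'$. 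With such a $\Gamma$ the frequency-hiding \OPE outputs distinct ciphertexts for the tied pair in the opposite order, so the inclusion $\{\tilde{o}_x\ge\tilde{o}_{x'}\}\subseteq\{y_x\ge y_{x'}\}$ fails on the tie event, and what your argument actually establishes is only $\mathrm{Pr}[y_x\ge y_{x'}]\ge\mathrm{Pr}[\tilde{o}_x>\tilde{o}_{x'}]$. Under Alg.~\ref{alg:1} two inputs in the same interval of $\Pa$ collide on their encoding with large probability (that is the purpose of a coarse partition), so the strict-order probability can sit well below the $\alpha$ of the hypothesis, and the stated conclusion does not follow as you argue it. To close this you must either treat records with equal encodings as order-consistent when interpreting ``$\ge$'' at the ciphertext level (which is what the paper's terse proof implicitly does, and is harmless for range queries since tied records land in the same bucket), or add the hypothesis that $\Gamma$ refines the plaintext order --- but the latter is an extra assumption, not something derivable from Defs.~\ref{def:OPE}, \ref{def:RO} or \ref{def:pope}, and it contradicts the theorem's ``any $\Gamma$.''
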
 \vspace{-0.1cm}
 The proof follows directly from Defs. \ref{def:OPE} and \ref{def:pope}.
\vspace{-0.15cm}
\begin{lemma}\nam satisfies $\frac{\epsilon}{2}$-\ddp. \label{lem:ddp} \end{lemma}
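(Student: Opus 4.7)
The plan is to reduce the lemma to an application of the post-processing theorem (Thm.~\ref{thm:post}), using the $(\epsilon/2)$-\ddp guarantee of the \name primitive supplied by Thm.~\ref{thm:opec}. First I would rewrite the encryption algorithm $\E_\epsilon$ as a composition of two stages: an \emph{encoding stage} that maps the plaintext $x$ to a random encoding $\tilde{o} \leftarrow \name(x,\Pa,\epsilon/2)$, and a \emph{ciphertext stage} $(\s',y) \leftarrow \E(\s,\tilde{o},\Gamma)$ that acts only on $\tilde{o}$ (together with the key/state $\s$ and the order $\Gamma$). The crucial observation is that the dependence of the encryption's output on the plaintext $x$ is \emph{entirely channeled} through $\tilde{o}$: the secret state $\s$ is produced by $\K_\epsilon$ independently of $x$, and the order $\Gamma$ is an external parameter. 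Hence, conditioned on $\s$ and $\Gamma$, the map $\tilde{o} \mapsto (\s',y)$ is a randomized mapping whose coin tosses are independent of the encoding randomness.

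Second, I would invoke Thm.~\ref{thm:opec} to conclude that for any two plaintexts $x,x' \in \mathcal{X}$ and any subset $\mathcal{S}$ of encodings,
\begin{gather*}
\Pr\bigl[\name(x,\Pa,\epsilon/2)\in\mathcal{S}\bigr] \;\leq\; e^{(\epsilon/2)\,|x-x'|}\cdot \Pr\bigl[\name(x',\Pa,\epsilon/2)\in\mathcal{S}\bigr].
\end{gather*}
Applying Thm.~\ref{thm:post} to the (randomized) post-processing $g \colon \tilde{o} \mapsto \E(\s,\tilde{o},\Gamma)$, which is a valid randomized mapping of the encoding to the ciphertext space, I obtain that the composition $g \circ \name(\cdot,\Pa,\epsilon/2) = \E_\epsilon(\cdot,\s,\Gamma,\Pa,\epsilon)$ is $(\epsilon/2)$-\ddp as well. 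Concretely, for any subset $\mathcal{T}$ of ciphertexts, expanding and bounding term-by-term gives
\begin{gather*}
\Pr\bigl[\E_\epsilon(x,\s,\Gamma,\Pa,\epsilon)\in\mathcal{T}\bigr] \;\leq\; e^{(\epsilon/2)\,|x-x'|}\cdot \Pr\bigl[\E_\epsilon(x',\s,\Gamma,\Pa,\epsilon)\in\mathcal{T}\bigr],
\end{gather*}
which is the definition of $(\epsilon/2)$-\ddp for the encryption routine.

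The main subtle point (rather than a hard obstacle) will be to argue carefully that the ciphertext-stage randomness is independent of the encoding-stage randomness, so that the post-processing theorem applies cleanly. In particular, one must verify that fixing the same key $\s$ and the same $\Gamma$ when comparing the encryption of $x$ versus $x'$ is the right setup, and that the state update $\s \to \s'$ does not secretly re-inject the plaintext; inspecting Def.~\ref{def:pope} confirms that $\s'$ is produced by the underlying \OPE's $\E$ from $(\s,\tilde{o},\Gamma)$ only. Once this is observed, the result is immediate from Thm.~\ref{thm:opec} and Thm.~\ref{thm:post}, and no additional calculation is required; the halving of the budget (from $\epsilon$ to $\epsilon/2$) is simply the privacy parameter fed to \name inside Def.~\ref{def:pope} and will be consumed again in the security analysis of Sec.~\ref{sec:security} to yield the full $\epsilon$-\indfaocpa guarantee.
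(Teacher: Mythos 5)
Your proposal is correct and follows essentially the same route as the paper: the paper also proves Lemma~\ref{lem:ddp} by observing that $\E_\epsilon$ post-processes the output of $\name(\cdot,\Pa,\epsilon/2)$, which is $\tfrac{\epsilon}{2}$-\ddp, and then invoking Thm.~\ref{thm:post}. The only cosmetic difference is that the $\tfrac{\epsilon}{2}$-\ddp property of the encoding follows already from item (3) of Def.~\ref{def:popec} (for any \name primitive), so citing Thm.~\ref{thm:opec} is needed only if one wants the statement specifically for the Alg.~\ref{alg:1} construction.
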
\vspace{-0.1cm}
The proof of the above lemma follows trivially from the post-processing guarantee of \ddp (Thm. \ref{thm:post}).

\subsection{New Security Definition for \nam}\label{sec:security}
Here, we present a novel security guarantee for \nam, namely \textit{indistinguishable ciphertexts under frequency-analyzing $\epsilon$-\cdp 
ordered chosen plaintext attacks} (\indfaocpae, Def. \ref{def:indfaocpae}). 

 The \indfaocpae~guarantee is associated with a security game,   $\mathcal{G}_{\textsf{IND-FA-OCPA}_{\epsilon}}^{\advppt}$, where the adversary, $\advppt$, first chooses four input dataset of equal length, $X_{00},X_{01},X_{10}$ and $X_{11}$, such that  $\Pa_0(X_{00})$ and $\Pa_1(X_{10})$ share at least one randomized order where  $X_{00},X_{01} \in \mathcal{X}^n_0, ~X_{10},X_{11} \in \mathcal{X}^n_1, ~\Pa_0 \in \hat{\mathcal{X}}_0$ and $\Pa_1 \in \hat{\mathcal{X}}_1$. Additionally, $\{X_{00},X_{01}\}$  and $\{X_{10},X_{11}\}$ are $t$-adjacent (Def. \ref{def:cdp}). The challenger then selects two bits $\{b_1,b_2\}$ uniformly at random and returns the corresponding ciphertext for the dataset $X_{b_1b_2}$. $\advppt$ then outputs their guess for the bits and wins the game if they are able to guess either of the bits successfully. The \indfaocpae~guarantee states that $\advppt$ cannot distinguish among the four datasets. In what follows, we first present its formal definition and then, illustrate it using an example. 
\begin{definition}[\indfaocpae] An encryption scheme $\mathcal{E}_\epsilon = (\K_{\epsilon}, \E_{\epsilon}, \D_{\epsilon})$ has indistinguishable ciphertexts under \textit{frequency-analyzing $\epsilon$-\cdp 
 ordered chosen plaintext attacks} if for any \textsf{PPT} adversary, $\advppt$, and security parameter, $\kappa$:
\begin{gather*}\hspace{-3.5cm}\mathrm{Pr}[\mathcal{G}^{\advppt}_{\textsf{FA-OCPA}_\epsilon}(\kappa, b_1,b_2) = (c_1,c_2)] \leq \\e^{t\epsilon}\cdot \mathrm{Pr}\big[\mathcal{G}^{\advppt}_{\textsf{FA-OCPA}_\epsilon} (\kappa, b_1',b_2') = (c_1,c_2)]+ \textsf{negl}(\kappa)\numberthis\label{Eq:main} \end{gather*}
where $b_1, b_2, b_1', b_2',c_1,c_2  \in \{0,1)\}$ and $\mathcal{G}^{\advppt}_{\textsf{FA-OCPA}_\epsilon} (\kappa, b_1,b_2)$ is the random variable indicating the adversary $\advppt$'s output for following security game:\\\\
\textbf{Game} $\mathcal{G}^{\advppt}_{\textsf{FA-OCPA}_\epsilon} (\kappa, b_1,b_2)$\squishlistnum\item  $(X_{00},X_{01},X_{10},X_{11})\leftarrow \advppt$ where \begin{enumerate}\item $X_{00},X_{01}\in \mathcal{X}_0^n$ and $X_{10},X_{11}\in \mathcal{X}_1^n$   \item $\mathcal{P}_0(X_{00})$ and $\mathcal{P}_1(X_{10})$ have at least one common randomized order where $\mathcal{P}_0 \in \hat{\mathcal{X}_0}$ and $\mathcal{P}_1 \in \hat{\mathcal{X}_1}$ \item   $\{X_{00},X_{01}\}$  and $\{X_{10},X_{11}\}$ are $t$-adjacent (Def. \ref{def:cdp}) \end{enumerate}
\item  $\s\leftarrow\K(1^\kappa)$
\item Compute $\widetilde{O}_0\leftarrow\name(X_{00},\mathcal{P}_{0},\frac{\epsilon}{2})$ and $\widetilde{O}_1\leftarrow\name(X_{10},\mathcal{P}_{1},\frac{\epsilon}{2})$.
\item If $\widetilde{O}_0$ and $\widetilde{O}_1$ do not have any common randomized order, then return $\perp$. Else \squishlistnum \item Select two uniform bits $b_1$ and $b_2$ and a randomized order $\Gamma^*$ common to both $\widetilde{O}_0$ and $\widetilde{O}_1$. \item If \scalebox{0.9}{$b_2=0$}, compute $Y_{b_1b_2}\leftarrow\E_{\epsilon}( \widetilde{O}_{b_1},\textsf{S},\Gamma^*,\mathcal{O}_{b_1}, \scalebox{0.7}{$\infty$})\footnote{equivalent to running \scalebox{0.9}{$\E_{\epsilon}( \textsf{S},X_{b_10},\Gamma^*,\Pa_{b_1}, \epsilon):=\langle \widetilde{O}_{b_1}\leftarrow \name(X_{b_10},\Pa_{b_1},\epsilon/2),$}  \scalebox{0.9}{$\E(\widetilde{O}_{b_1},\s,\Gamma^*)\rangle$} where $\mathcal{O}_{b_1}$ is the corresponding encoding domain for $\Pa_{b_1}$ (see App. \ref{app:thm:main})}$. Else, compute $Y_{b_1b_2}\leftarrow\E_{\epsilon}( X_{b_11},\textsf{S},\Gamma^*,\Pa_{b_1}, \frac{\epsilon}{2})$. 
\squishendnum
\item \scalebox{0.95}{$(c_1,c_2)\leftarrow\advppt(Y_{b_1,b_2})$} where $c_1(c_2)$ is $\advppt$'s guess for $b_1(b_2)$ \squishendnum\label{def:indfaocpae}\vspace{-0.1cm}
\end{definition}$\advppt$ is said to win the above game if $b_1=c_1$ or $b_2=c_2$. 
\\\\\textbf{Example 7.} We illustrate the above definition using the following example. 
Consider  $X_{00}=\langle 22, 94, 23,94,36,95\rangle$, $X_{10}=\langle 9,40,11,76,15,76\rangle$,
$X_{01}=\langle 24, 94, 23 , 94 , 36, 95 \rangle$  and $X_{11}=\langle 9, 40, 8,76, 15,76\rangle$ where $\{X_{00},X_{10}\}$ share a randomized order,  $\langle 1,4,2,5,3,6\rangle$, and  $\{X_{00},X_{01}\}$ and $\{X_{10},X_{11}\}$ are $3$-adjacent. For the ease of understanding, we consider the default case of $\Pa_0=\mathcal{O}_0=\mathcal{X}_0$ and $\Pa_1=\mathcal{O}_1=\mathcal{X}_1$. This means that  $\Pa_0(X_{00})=X_{00}$ and so on.
\begin{tcolorbox}[sharp corners]\vspace{-0.2cm}\textbf{\name.} If only \name were to be used to encode the above datasets, then only the pairs $\{X_{00},X_{01}\}$ and $\{X_{10},X_{11}\}$ would be indistinguishable to the adversary (albeit an information theoretic one) because of the $\epsilon$-\cdp guarantee (Defn. \ref{def:cdp}). However, there would be no formal guarantee on the pairs $\{X_{01},X_{11}\},\{X_{01},X_{10}\},\{X_{00},X_{11}\},\{X_{00}, X_{10}\}$.\vspace{-0.2cm}\end{tcolorbox}
\begin{tcolorbox}[sharp corners]\vspace{-0.2cm}\textbf{\OPE.} If we were to use just the \OPE scheme, then only the pair $\{X_{00},X_{10}\}$ would be indistinguishable for $\advppt$ as the rest of the pairs do not share any randomized order. \vspace{-0.2cm}\end{tcolorbox}\vspace{-0.1cm}
\begin{tcolorbox}[sharp corners]\vspace{-0.2cm}\textbf{\nam.}  Using \nam makes all 6 pairs \scalebox{0.9}{$\{X_{00},X_{01}\},$} \scalebox{0.9}{$\{X_{00},X_{11}\},\{X_{00},X_{10}\}, \{X_{01},X_{11}\},\{X_{01},X_{10}\},\{X_{11},X_{10}\}$} indistinguishable for $\advppt$. This is because \nam essentially preserves the order of a $\epsilon$-\cdp scheme.\vspace{-0.2cm}\end{tcolorbox}\vspace{-0.1cm}
Hence, \nam enjoys \textit{strictly stronger security} than both \name and \OPE. \vspace{-0.2cm}
\begin{theorem}The proposed encryption scheme, $\nam$  satisfies  ${\epsilon}-\textsf{IND}$-$\textsf{FA-OCPA}$  security guarantee.\label{thm:faocpae}\end{theorem}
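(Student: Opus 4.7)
My plan is to reduce the claim to a hybrid argument that separately exploits the two guarantees built into \nam: the $\tfrac{\epsilon}{2}$-\cdp property of the \name layer (Lem.~\ref{lem:ddp}), and the \indfaocpa property of the underlying \OPE $\mathcal{E}$. Concretely, given any adversary $\advppt$ and any two bit pairs $(b_1,b_2)$ and $(b_1',b_2')$, I will interpolate between the corresponding games via the two intermediate hybrids $(b_1,b_2')$ and $(b_1',b_2)$ and bound each transition separately.

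First I would analyze the transition that toggles only $b_2$ while keeping $b_1$ fixed. By inspection of step~4 of the game, the two runs differ only in whether the challenger encrypts the pre-computed encoding $\widetilde{O}_{b_1}\leftarrow\name(X_{b_1 0},\Pa_{b_1},\epsilon/2)$ or a fresh encoding $\name(X_{b_1 1},\Pa_{b_1},\epsilon/2)$. Since $\{X_{b_10},X_{b_11}\}$ are $t$-adjacent (item~1(c) in the game), the $\tfrac{\epsilon}{2}$-\ddp guarantee of $\name$ combined with the post-processing theorem (Thm.~\ref{thm:post})—viewing the subsequent \OPE encryption together with $\advppt$'s output as a randomized post-processing—yields a multiplicative bound of $e^{t\epsilon/2}$ between the two output distributions of the game. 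This directly accounts for the ``\cdp'' half of the $e^{t\epsilon}$ slack.

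Next I would analyze the transition that toggles only $b_1$ while keeping $b_2$ fixed. Conditioned on the game not aborting at step~4, the encodings $\widetilde{O}_0$ and $\widetilde{O}_1$ share at least one common randomized order, and the challenger selects one, $\Gamma^*$, uniformly at random among them. For $b_2=0$ the OPE encryption is applied directly to $\widetilde{O}_{b_1}$ under $\Gamma^*$; for $b_2=1$ it is applied to a fresh encoding of $X_{b_1 1}$ under $\Gamma^*$. In either case, the two candidate inputs to $\E$ share a common randomized order (the $b_2=1$ case requires a short additional argument: because $\{X_{b_10},X_{b_11}\}$ are $t$-adjacent and $\Pa_{b_1}(X_{b_10})$, $\Pa_{1-b_1}(X_{1-b_1,0})$ share a randomized order, one can argue that with overwhelming probability the fresh encodings also admit the same randomized order, or otherwise the event is absorbed into the abort probability). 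Hence the \indfaocpa guarantee of $\mathcal{E}$ applied to these two input sequences bounds the distinguishing advantage between the two games by $\textsf{negl}(\kappa)$, giving the second half of~\eqref{Eq:main}.

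Finally I would glue the two transitions: by the triangle-style combination along the chain $(b_1,b_2)\to(b_1,b_2')\to(b_1',b_2')$, the multiplicative loss is at most $e^{t\epsilon/2}$ from the \name step plus a negligible additive term from the OPE step, and symmetry absorbs the remaining slack into the stated factor $e^{t\epsilon}$ (with room to spare). The main obstacle I anticipate is the handling of the abort event in step~4 of the game: the condition ``$\widetilde{O}_0$ and $\widetilde{O}_1$ share a common randomized order'' is a joint event over the randomness of two \name invocations, and I need to argue that this abort probability itself is stable under the $b_2$-switch (so that it does not distort the \cdp bound) and is well-defined under the hypothesis in item~1(b) of the game that $\Pa_0(X_{00})$ and $\Pa_1(X_{10})$ share a randomized order, which by the ordinal property (Def.~\ref{def:popec}, clause 2) makes this event occur with non-negligible probability. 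Once this conditioning issue is formalized, the rest of the proof reduces to the two indistinguishability invocations described above.
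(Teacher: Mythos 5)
Your overall strategy coincides with the paper's: treat the two bits separately, charging the $b_2$-switch to the $\epsilon/2$-\cdp property of the \name layer plus post-processing (Thm.~\ref{thm:post}, Lemma~\ref{lem:cdp}), charging the $b_1$-switch to the \indfaocpa guarantee of the underlying \OPE via a reduction, and then chaining hybrids to obtain the $e^{t\epsilon}\cdot(\cdot)+\textsf{negl}(\kappa)$ bound of Eq.~\eqref{Eq:main}. The paper additionally records the small but necessary observation that $\E_{\epsilon}(\widetilde{O}_{b_1},\s,\Gamma^*,\mathcal{O}_{b_1},\infty)$ is literally a run of the full \nam encryption of $X_{b_10}$ (because \name with $\epsilon=\infty$ on its own encoding domain is the identity), which is what makes the $b_2=0$ branch a genuine instance of the \indfaocpa game, and it disposes of the abort exactly as you anticipate: the abort is decided before $b_1,b_2$ are drawn (so it leaks nothing and is identically distributed across all four games), and condition 1(b) together with the ordinal property of \name ensures non-abort with positive probability.

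The genuine gap is in your $b_1$-toggle at fixed $b_2=1$. There the two games encrypt \emph{fresh} encodings $\name(X_{01},\Pa_0,\epsilon/2)$ versus $\name(X_{11},\Pa_1,\epsilon/2)$, and your justification---that ``with overwhelming probability the fresh encodings also admit the same randomized order, or otherwise the event is absorbed into the abort probability''---does not hold: because \name is genuinely randomized, the fresh encodings admit $\Gamma^*$ (or any common randomized order) only with some non-negligible but far-from-overwhelming probability depending on $\epsilon$, $\Pa$, and the data, and the Step-4 abort is determined solely by $\widetilde{O}_0,\widetilde{O}_1$ (the encodings of $X_{00},X_{10}$), so there is no abort event into which this failure can be absorbed; moreover Def.~\ref{def:faocpa} simply does not apply to a pair of inputs without a common randomized order. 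Your closing chain $(b_1,b_2)\to(b_1,b_2')\to(b_1',b_2')$ inherits this problem whenever $b_2'=1$. The repair is exactly what the paper does: perform the \OPE-based switch \emph{only} between the two $b_2=0$ games, where (conditioned on non-abort) the encrypted inputs are $\widetilde{O}_0$ and $\widetilde{O}_1$, which share $\Gamma^*$ by construction, and reach the $b_2=1$ games on either side via the \cdp step, i.e.\ route every comparison through $(\cdot,0)$; the two \cdp steps cost $e^{t\epsilon/2}\cdot e^{t\epsilon/2}=e^{t\epsilon}$, which is precisely the slack allowed in Eq.~\eqref{Eq:main}, so the reordered hybrids close the argument without any claim about fresh encodings sharing an order.
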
\vspace{-0.2cm}
\textit{Proof Sketch}. 
The proof of the above theorem follows directly from the $\indfaocpa$ guarantee of the \OPE scheme and the fact that  \nam  satisfies $\epsilon/2$-\cdp guarantee (Lemma \ref{lem:cdp}, App. \ref{app:thm:main}). The full proof is in App. \ref{app:thm:main}.
\par
Let $\mathcal{N}_{\textsf{G}}(X)=\{X'| X' \in \mathcal{X}^n \mbox{ and } \{X,X'\} \mbox{ are }$ indistinguishable to  $\advppt \mbox{ under guarantee  \textsf{G}}\}$. Additionally, we assume $\mathcal{P}=\mathcal{X}$ for the ease of understanding. Thus, in a nutshell, the $\epsilon$-\cdp guarantee allows a pair of datasets $\{X,X'\}$
to be indistinguishable\footnote{the ratio of their output distributions are bounded by $e^{t\epsilon}$, holds against an information theoretic adversary as well} only if they are $t$-adjacent (for relatively small values of $t$). Referring back to our example, we have $X_{01} \in \mathcal{N}_{\epsilon\text{-}\cdp}(X_{00})$  and $X_{11} \in \mathcal{N}_{\epsilon\text{-}\cdp}(X_{10})$. \par On the other hand, under the \indfaocpa guarantee, $\{X,X'\}$ is indistinguishable\footnote{computational indistinguishability \cite{Oded}} to $\advppt$ only if they share a common randomized order. For instance, $X_{10} \in \mathcal{N}_{\textsf{IND-FA-OCPA}}(X_{00})$. 
\par In addition to the above cases, the \indfaocpae~guarantee allows a pair of datasets $\{X,X'\}$ to be indistinguishable\footnote{Formally given by Eq. \ref{Eq:main} which is structurally similar to that of the $\textsf{IND-CDP}$ guarantee \cite{CDP} which is a computational differential privacy guarantee.}  for $\advppt$ if $\{X,X'\}$ 
\squishlist
\item do not share a randomized order \item are not adjacent, \squishend 
but there exists another dataset $X''$ such that 
\squishlist
\item $\{X',X''\}$ are  adjacent, i.e. $X' \in \mathcal{N}_{\epsilon\text{-}\cdp}(X'')$
\item $\{X,X''\}$ share a randomized order, i.e., $X'' \in \mathcal{N}_{\textsf{IND-FA-OCPA}}(X)$. 
\squishend
From our aforementioned example, we have $X_{11} \not \in \mathcal{N}_{\textsf{IND-FA-OCPA}}(X_{00})$ and $ X_{11} \not \in \mathcal{N}_{\epsilon\text{-}\cdp}(X_{00})$. But still,  $X_{11} \in \mathcal{N}_{\textsf{IND-FA-OCPA}_{\epsilon}}(X_{00})$ since $X_{11} \in \mathcal{N}_{\epsilon\text{-}\cdp}(X_{10})$ and $X_{10} \in \mathcal{N}_{\textsf{IND-FA-OCPA}}(X_{00})$.   
 Thus, formally \begin{gather}\vspace{-0.4cm}\mathcal{N}_{\textsf{IND-FA-OCPA}_{\epsilon}}(X) =\bigcup_{X''\in \mathcal{N}_{\textsf{IND-FA-OCPA}}(X)}  \mathcal{N}_{\epsilon\text{-}\cdp}(X'')\vspace{-0.4cm}\end{gather}  
 Since, trivially \scalebox{0.9}{$X\in\vspace{-0.1cm} \mathcal{N}_{\textsf{IND-FA-OCPA}}(X)$} and \scalebox{0.9}{$X \in \mathcal{N}_{\epsilon\text{-}\cdp}(X)$}, we have \scalebox{0.9}{$\mathcal{N}_{\textsf{IND-FA-OCPA}_{\epsilon}}(X) \supseteq \mathcal{N}_{\textsf{IND-FA-OCPA}}(X)$} and \scalebox{0.9}{$\mathcal{N}_{\textsf{IND-FA-OCPA}_{\epsilon}}(X) \supseteq \mathcal{N}_{\epsilon\text{-}\cdp}(X)$}.
\begin{tcolorbox}[sharp corners]\vspace{-0.2cm}\textbf{Key Insight.}
The key insight of the \indfaocpae~ security guarantee is that the \OPE scheme preserves the order of the outputs of a $\epsilon$-\cdp mechanism. As a result, the adversary is now restricted to only an $\epsilon$-\cdp order leakage from the ciphertexts. Hence, even if the security guarantee of the \OPE layer is completely broken, the outputs of \nam would still satisfy $\epsilon$-\cdp due to Thm.~\ref{thm:post}. 
Referring to Example 1, in the very least input pairs $\{X_{00},X_{01}\}$ and $\{X_{10},X_{11}\}$ will remain indistinguishable under all inference attacks. Thus, \nam  is the first encryption scheme to satisfy a formal security guarantee against all possible inference attacks and still provide some ordering information about the inputs.\vspace{-0.2cm} \end{tcolorbox}
\begin{tcolorbox}[sharp corners]\vspace{-0.2cm}\textbf{Remark 2.} The \indfaocpae~ guarantee of the \nam scheme is \textit{strictly stronger than both \cdp (\ddp) and \indfaocpa (the strongest possible guarantee for any \OPE)}. Further, it depends only on the \ddp guarantee of the underlying \name primitive which is independent of the partition $\Pa$ used (as discussed in Sec. \ref{sec:construction}). We discuss the role of $\Pa$ in Sec. \ref{sec:application}. \vspace{-0.2cm}\end{tcolorbox}

\section{\nam and Inference Attacks} \label{sec:attack} \vspace{-0.3cm}
  \begin{figure}[ht]
      \includegraphics[width=\linewidth]{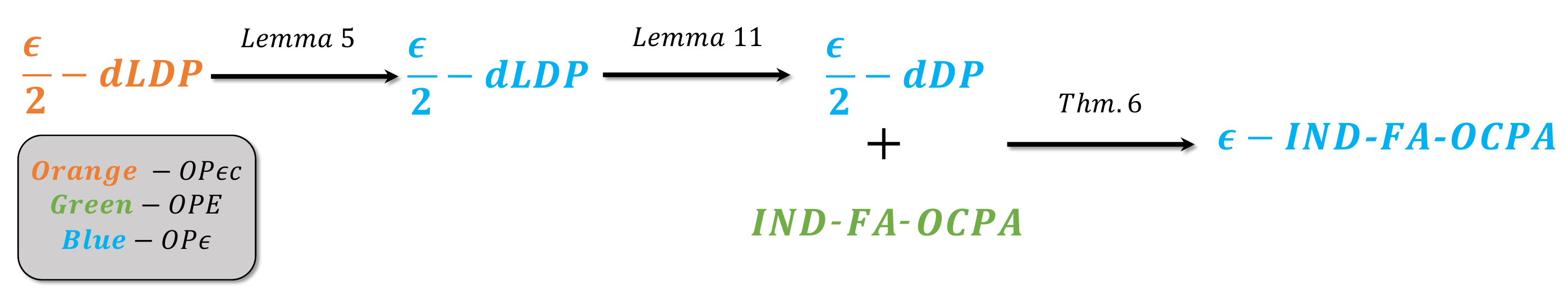}\vspace{-0.4cm}
        \caption{Relationships between \ddp, \cdp and \indfaocpa guarantees.} \label{fig:relation}
\vspace{-0.45cm}    \end{figure}
   In this section, we discuss the implications of \nam's security guarantee  in the face of inference attacks. Specifically, we formalize the protection provided by \nam's (relaxed) \DP guarantee -- this is the worst case guarantee provided by \nam. 
   \par 
 Recall that the \indfaocpae~ guarantee of a \nam bolsters the cryptographic guarantee of a \OPE (\indfaocpa) with an additional layer of a (relaxed) \DP guarantee. For the rest of the discussion, we focus on the worst case scenario where the \OPE scheme provides no protection at all and study what formal guarantee we can achieve from just the (relaxed) \DP guarantee. As discussed in Sec. \ref{sec:background:DP}, our proposed distance-based relaxation of \DP comes in two flavors -- local (\ddp, Def. \ref{def:ddp})  and central (\cdp, Def. \ref{def:cdp}). Intuitively, \ddp is a guarantee for each individual data point while \cdp is a guarantee for a dataset.  As a refresher, Fig. \ref{fig:relation} showcases the relationships between them.  The most salient point is that the \ddp is a stronger guarantee than \cdp -- $\epsilon$-\ddp implies $\epsilon$-\cdp. Thus, owing to the \ddp guarantee of the underlying \name primitive, \nam trivially satisfies both \ddp (Lemma \ref{lem:ddp}) and \cdp (Lemma \ref{lem:cdp} App. \ref{app:thm:main}) guarantees. \par For our discussion in Sec. \ref{sec:security}, we use the \cdp guarantee since the \indfaocpa guarantee of \OPEs is also defined on datasets. 
 In what follows,  we show how to interpret the protection provided by \nam's \ddp guarantee since it is stronger and holds for every data point. We do so with the help of an indistinguishability game, as is traditional for cryptographic security definitions. Let the input be drawn from a discrete domain of size $N$, i.e., $|\mathcal{X}|=N$. The record indistinguishability game, $\mathcal{G}^{\mathcal{A}}_{\beta-RI}$, is characterized by a precision parameter $\beta \in [\frac{1}{N},1]$.  In this game, the adversary has to distinguish among a single record (data point) $x$  and set of values $Q(x)$ that differ from $x$ by at most $\lceil\beta N\rceil$. For instance, for $x=3$, $N=10$ and $\beta=1/5$, the adversary has to distinguish among the values \scalebox{0.9}{$3$} and \scalebox{0.9}{$Q(x)= \{1,2,4,5\}$ ($\lceil\beta N\rceil=2$)}. Let $y_i$ denote the ciphertext for $x_i$ after encryption with \nam. The game is formally defined as follows: 
 \\\textbf{Game} $\mathcal{G}^{\mathcal{A}}_{\beta-RI}(p)$ 
 \squishlistnum\item $x_0\leftarrow \mathcal{A}$
 \item $Q(x)=\{x_1, \cdots, x_q\}$ where $x_i \in \mathcal{X}, i \in [q]$ s.t. $|x_0-x_i|\leq \lceil \beta N\rceil$ and $x_i\neq x_0$  \item Select $p \in \{0,1,\cdots, q\}$ uniformly at random 
\item $p'\leftarrow \mathcal{A}\big(y_p\big)$ 
\squishendnum
$\mathcal{A}$ is said to win the above game if $p'=p$. Let $rand$ be a random variable indicating the output of the baseline strategy  where the adversary just performs random guessing.
\begin{theorem} \vspace{-0.1cm}
For a \nam scheme satisfying $\frac{\epsilon}{2}$-\ddp, we have \vspace{-0.2cm} 
\begin{gather}
\vspace{-0.2cm}
\Big|\mathrm{Pr}\big[p'=p\big]-\mathrm{Pr}\big[rand=p\big]\Big|\leq \frac{e^{\epsilon^{*}}}{q+e^{\epsilon^{*}}}-\frac{1}{q+1} \label{Eq:attack}\vspace{-0.8cm}\end{gather} where $\epsilon^*=\epsilon \lceil\beta N\rceil$ and $q=|Q(x_0)|$ (Step $(2)$ of game $\mathcal{G}^{\mathcal{A}}_{\beta-RI}$).\label{thm:attack:record}
\end{theorem}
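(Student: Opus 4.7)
The plan is to recast the game $\mathcal{G}^{\mathcal{A}}_{\beta-RI}$ as a standard Bayesian hypothesis-testing problem with $q+1$ equally likely hypotheses $\{x_0,x_1,\ldots,x_q\}$, and then invoke the $\tfrac{\epsilon}{2}$-\ddp guarantee of \nam (Lem.~\ref{lem:ddp}) to uniformly bound the posterior probability of any single hypothesis.

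First I would observe that, for \emph{any} (even computationally unbounded) adversary $\mathcal{A}$, the success probability is upper-bounded by that of the maximum a posteriori (MAP) decoder. Writing $P_i(y) := \mathrm{Pr}[y_p = y \mid p = i]$ and using the uniform prior on $p$, Bayes' rule gives
\begin{equation*}
\mathrm{Pr}[p = i \mid y_p = y] \;=\; \frac{P_i(y)}{\sum_{j=0}^{q} P_j(y)},
\end{equation*}
so $\mathrm{Pr}[p' = p] \le \mathbb{E}_{y_p}\!\left[\max_i \mathrm{Pr}[p = i \mid y_p]\right]$. The analysis thus reduces to bounding $P_i(y)/\sum_j P_j(y)$ uniformly in $y$.

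The key step invokes \ddp. By construction, every $x_i \in \{x_0\} \cup Q(x_0)$ lies in the interval $[x_0 - \lceil \beta N \rceil,\, x_0 + \lceil \beta N \rceil]$, so any pair $(x_i, x_j)$ from this set satisfies $|x_i - x_j| \le 2\lceil \beta N \rceil$. Applying the $\tfrac{\epsilon}{2}$-\ddp guarantee of \nam (equivalently, post-processing the $\tfrac{\epsilon}{2}$-\ddp guarantee of \name through the \OPE layer via Thm.~\ref{thm:post}) yields $P_i(y) \le e^{(\epsilon/2)\cdot 2\lceil \beta N \rceil}\, P_j(y) = e^{\epsilon^*} P_j(y)$ for every $i, j$. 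Summing the reverse bound over $j \neq i$ then gives $\sum_{j=0}^{q} P_j(y) \ge P_i(y)\bigl(1 + q\, e^{-\epsilon^*}\bigr)$, which rearranges to $\max_i \mathrm{Pr}[p = i \mid y_p = y] \le e^{\epsilon^*}/(q + e^{\epsilon^*})$. Because the bound is independent of $y$, taking expectation over $y_p$ transfers it directly to $\mathrm{Pr}[p' = p]$.

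For the lower bound inside the absolute value, I would note that $\mathcal{A}$ can always ignore the ciphertext and guess uniformly at random, achieving success probability exactly $1/(q+1) = \mathrm{Pr}[rand = p]$; the optimal adversary must do at least as well, so the absolute value in \eqref{Eq:attack} collapses to the one-sided gap $e^{\epsilon^*}/(q + e^{\epsilon^*}) - 1/(q+1)$. The one subtle point I would expect to dwell on is the factor-of-two scaling that converts $\tfrac{\epsilon}{2}$-\ddp on the relevant \emph{diameter} $2\lceil \beta N \rceil$ of $\{x_0\}\cup Q(x_0)$ (rather than its radius $\lceil \beta N \rceil$ around $x_0$) into the advertised exponent $\epsilon^* = \epsilon \lceil \beta N \rceil$.
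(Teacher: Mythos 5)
Your proposal is correct and follows essentially the same route as the paper's proof: the paper also uses the pairwise $\frac{\epsilon}{2}$-\ddp bound over the diameter $2\lceil\beta N\rceil$ of $\{x_0\}\cup Q(x_0)$ to get the factor $e^{\epsilon^*}$, sums the likelihoods of the $q$ alternative inputs for the observed ciphertext to conclude $\mathrm{Pr}[p'=p]\leq e^{\epsilon^*}/(q+e^{\epsilon^*})$, and compares against the random-guessing baseline $1/(q+1)$. Your MAP/posterior formulation is just a cleaner rendering of the paper's informal step that the remaining likelihood mass is at most $1-d$, so no substantive difference.
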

From the above theorem, observe that for low values of $\epsilon^{*}$ (i.e., low $\epsilon$ and $\beta$) the R.H.S of the Eq. \eqref{Eq:attack} is low. This means that \textit{for reasonably low values of $\epsilon$ (high privacy), with very high probability an adversary cannot distinguish among input values that are close to each other (small $\beta$) any better than just random guessing}. Now, recall that owing to the \ddp guarantee of the underlying \name primitive, every data point encrypted with \nam is also protected by the \ddp guarantee (Lemma \ref{lem:ddp}). 
This implies that, for any dataset $X$, the above indistinguishability result holds for every individual data point (record) simultaneously. Thus, \textit{the \ddp guarantee rigorously limits the accuracy of any inference attack for every record of a dataset}. The proof follows directly from the \ddp guarantee  (App. \ref{app:attack}). \par 
As a concrete example, let us look at the binomial attack~\cite{Grubbs1} on \OPE schemes satisfying \indfaocpa. The attack uses a biased coin model to locate the range of ciphertexts corresponding to a particular plaintext. Experimental results on a dataset of first names show that the attack can recover records corresponding to certain high frequency plaintexts (such as, first name ``Michael'') with high accuracy. In this context, the implications of the above result is as follows. Consider a dataset with plaintext records corresponding to first names ``Michael'' and ``Michele''.  For \nam, the recovery rate for either would not be better than the random guessing baseline since both the values are close to each other in alphabetic order.

Note that the above result is information-theoretic and holds for \textit{any} adversary --  active or passive, both in the persistent (access to volume/access-pattern/search-pattern leakage) and snapshot attack models (access to a single snapshot of the encrypted data) \cite{sok}.
\begin{tcolorbox}[sharp corners]\vspace{-0.2cm}
\textbf{Remark 4.} In the very least, \nam \textit{rigorously limits the accuracy of any inference attack for every record of a dataset for all adversaries.} (Thm. \ref{thm:attack:record}). 
\vspace{-0.2cm}
\end{tcolorbox}

\vspace{-0.5cm}

 \vspace{-0.2cm}
\section{\nam for Encrypted Databases} \label{sec:application} 
In this section, we describe how to use a \nam scheme in practice for encrypted databases. We discuss how we can leverage the partition parameter, $\Pa$, of the underlying \name primitive for improved utility. 
\\\textbf{Problem Setting.} For encrypted databases, a data owner has access to the entire database in the clear and encrypts it before outsourcing it to an untrusted server. 
 The queriers of the encrypted databases are authorized entities with access to the secret keys. In many practical settings the data owner themselves is the querier \cite{sok}.
 \par The most popular use case for databases encrypted with \OPEs is retrieving the set of records belonging to a queried range. However, due to randomization, encryption with \nam leads to loss in utility. Specifically in the context of range queries, it might miss some of the correct data records and return some incorrect ones. For the former, constraining \nam to maintain only a partial order is found to be helpful. As discussed in Sec. \ref{sec:construction}, the more coarse grained the partition is (the lesser the number of intervals), the larger is the probability for \name to output the correct encoding.  Hence, if any given range $[a,b]$ is covered by a relatively small number of intervals in $\Pa$, then with high probability the set of records corresponding to the encodings $\{\tilde{o}|\tilde{o}\in O \wedge \Pa(a) \leq \tilde{o} \leq \Pa(b)]\}$ will contain most of the correct records. This results in better accuracy for the subsequent \nam scheme since it's accuracy is determined by the underlying \name primitive (Thm. \ref{thm:utility}).   
 \par The problem of returning incorrect records can be mitigated by piggybacking every ciphertext encrypted with \nam with another ciphertext obtained from encrypting the corresponding plaintext with a standard authenticated encryption scheme \cite{encryption}, $\overline{\mathcal{E}}:=\langle \overline{\K},\overline{\E},\overline{\D}\rangle$. We refer to this as the augmented \nam  \footnote{The augmented \nam scheme still upholds the \indfaocpae~ guarantee  owing to the semantic security of the encryption scheme $\overline{\mathcal{E}}$.} scheme:  \\\textbf{Augmented \nam, $\mathcal{E}^{\dagger}$} \squishlist
\item \textit{Key Generation }($\textsf{K}_{\epsilon}^{\dagger}$).  This algorithm generates a pair of keys $(\s,K)$ where $\s\leftarrow\K_{\epsilon}(\kappa)$ and $K\leftarrow \overline{\K}(\kappa)$
\item \textit{Encryption} ($\textsf{E}_{\epsilon}^{\dagger}$). The encryption algorithm generates  $(\s',y_0,y_1)$ where $\tilde{o}\leftarrow\name(x,\Pa,\epsilon/2)$, \scalebox{0.95}{$(\s',y_{0})\leftarrow \E(\s, \tilde{o},\Gamma)$},  \scalebox{0.95}{ $y_{1}\leftarrow \overline{\E}(K,x)$} 
\item \textit{Decryption} ($\textsf{D}_{\epsilon}^{\dagger}$). The decryption algorithm uses $\s$ and $K$ to decrypt both the ciphertexts, $(x,\tilde{o})$ as \scalebox{0.9}{$\tilde{o}\leftarrow \D_{\epsilon}(\s,y_0)$} and \scalebox{0.9}{$x\leftarrow \overline{\D}(K,y_{1})$}.
\squishend 
After receiving the returned records from the server, the querier can decrypt $\{y_{i1}\}$ and discard the irrelevant ones. The cost of this optimization for the querier is the processing overhead for the extra records (see discussion later).
\par For most input distributions, an equi-depth partitioning works well (as demonstrated by our evaluation in Sec. \ref{sec:evaluation:results}). Nevertheless, the partition can be updated dynamically as well (App. \ref{app:discussion}). 
\begin{tcolorbox}[sharp corners]\vspace{-0.2cm}\textbf{Remark 5.} The partitioning of the input domain ($\Pa$)  has no bearing on the formal security guarantee. It is performed completely from an utilitarian perspective in the context of encrypted databases -- it results in an accuracy-overhead trade-off  (accuracy -- number of correct records retrieved; overhead -- number of extra records processed).\vspace{-0.2cm} \end{tcolorbox}

\textbf{Range Query Protocol.} The end-to-end range query protocol is described in Alg. \ref{alg:range}. Before detailing it, we will briefly discuss the protocol for answering range queries for a \OPE scheme, $\mathcal{E}$, that satisfies the $\indfaocpa$ guarantee (see \cite{IND-FAOCPA} for details).  Recall that every ciphertext is unique for such a \OPE scheme.  Hence, a querier has to maintain some state information for every plaintext. Specifically, if $Y=\{y_1,\cdots,y_n\}$ denotes the corresponding ciphertexts for an input set $X=\{x_1,\cdots,x_n\}$, then the querier stores the maximum and minimum ciphertext in $Y$ that corresponds to the plaintext $x_i$, denoted by $\max_\mathcal{E}(x_i)$ and $\min_{\mathcal{E}}(x_i)$, respectively.  For answering a given range query $[a,b]$, the querier asks for all the records in $Y$ that belong to $[\min_{\mathcal{E}}(a),\max_{\mathcal{E}}(b)]$. Recall that in \nam, the \OPE scheme is applied to the output (encodings) of the \name primitive. So now for answering $[a,b]$, the querier has to retreive records corresponding to $[\Pa(a),\Pa(b)]$ instead where $\Pa$ is the partition for the encoding. Hence, the querier first maintains the state information for the encodings (Steps 1-6, Alg. \ref{alg:range}). Note that since the size of the encoding space is smaller than the input domain $\mathcal{X}$, the amount of state information to be stored for a \nam is less than that for a \OPE (see \ifpaper full paper~\cite{anom}\else App. \ref{app:discussion}\fi). Next, the querier asks for all the encrypted records in the set $Y'=\big\{\langle y'_{i0},y'_{i1}\rangle|i\in [n] \mbox{ and } y'_{i0} \in [\min_{\mathcal{E}^{\dagger}}(\Pa(a)),\max_{\mathcal{E}^{\dagger}}(\Pa(b)]\big\}$ from the server (Steps 7-10). On receiving them, the querier only retains those records that fall in the queried range (Steps 11-18). 
\par There are two ways the utility can be further improved. The first is including records from some of the  intervals preceding $\Pa(a)$ and following $\Pa(b)$. The querier can ask for the records in \scalebox{0.95}{$[\max(o_1,o_{a-l}),\min(o_{b+l},o_k))], l \in \mathbb{Z}_{\geq 0}$} where $o_a:=\Pa(a)$ and $o_b:=\Pa(b)$.  However, the cost is an increased number of extra records.   \par Another optimization is to answer a workload of range queries at a time. Under \nam, queries can be made only at the granularity of the partition. Thus, if a queried range $[a,b]$ is much smaller than $[\Pa(a),\Pa(b)]$, then the querier has to pay the overhead of processing extra records. This cost can be reduced in the case of a workload of range queries where multiple queries fall within $[\Pa(a),\Pa(b)]$ (records that are irrelevant for one query might be relevant for some other in the workload). Additionally, the number of missing records for the query $[a,b]$ is also reduced if the records from the neighboring intervals of $[\Pa(a),\Pa(b)]$ are also included in the response (owing to the other queries in the workload).
\\\textbf{Discussion.} As described above, the server side interface for range query protocols is the same for both \nam and a \OPE scheme with the $\indfaocpa$ guarantee (with a nominal change to accommodate the extra ciphertexts $\{y_{i1}\}$). 
The cost is the extra storage for $\{y_{i1}\}$. However, in this age of cloud services, outsourced storage ceases to be a bottleneck \cite{storage:not:bottleneck}. \par The querier, on the other hand, needs to decrypt  all the returned records (specifically, $\{y_{i1}\}$}). However, decryption is in general an efficient operation. For instance, the decryption of $1$ million ciphertexts encrypted with AES-$256$ GCM  requires $<3$ minutes in our experimental setup. Thus, on the overall there is no tangible overhead in adopting \nam.
\vspace{-0.2cm} \begin{tcolorbox}[sharp corners, breakable]\vspace{-0.2cm}\textbf{Remark 6.} \nam could be used for secure data analytics in settings where (1) the $\epsilon$-\cdp guarantee is acceptable, i.e., the main security concern is preventing the distinction between input values close to each other, and $(2)$ the application can tolerate a small loss in utility. 
Specifically in such settings, replacing encrypted databases that are already deploying \OPE schemes (satisfying $\indfaocpa)$
with a \nam scheme would give a strictly stronger security guarantee against all attacks with \textit{nominal change in infrastructure or performance} -- a win-win situation. \vspace{-0.2cm}
\end{tcolorbox}


 \setlength{\textfloatsep}{2pt}
 \begin{algorithm}
\small
\caption{Range Query Protocol}\label{alg:range}
\begin{algorithmic}[1]
\Statex \textbf{Notations:} $Z$ - Input dataset with $n$ records $(r_i, x_i)$ where \scalebox{0.9}{$x_i\in \mathcal{X}$} \Statex \hspace{1.75cm} denotes the sensitive attribute to be encrypted under  \Statex \hspace{1.75cm} \nam  and $r_i$ denotes the rest of associated  data (other  \Statex \hspace{1.75cm} attributes could be encrypted as well);
 \Statex \hspace{1.4cm} $\s$ - Secret key for \nam; $\Pa$- Partition for \nam;
\Statex \hspace{1.4cm} $K$- Secret key for the authenticated encryption  scheme $\overline{\mathcal{E}}$
\Statex \textbf{Input:} Range Query $[a,b], a,b \in \mathcal{X}$
\Statex \textbf{Output:} Set of records $V=\{r_i| ( r_i,x_i ) \in Z, x_i \in [a,b]\}^{10}$

\Statex \textbf{Initialization: Querier}
\State $X=\langle x_1,\cdots,x_n\rangle$ 
\State $Y=\mathcal{E}^{\dagger}(X,\s,K,\Gamma,\Pa,\frac{\epsilon}{2})$ \hfill\textcolor{blue}{$\rhd$} Contains encrypted attributes  $\{( y_{i0},y_{i1})\}$
\State \textbf{for} $o \in \mathcal{O}$
\State \hspace{0.2cm}  $\max_{\mathcal{E}^{\dagger}}(o)=\max\{y_{i0}|(y_{i0},y_{i1}) \in Y \text{ and $y_{i0}$ decrypts to } o\}$
\State \hspace{0.2cm}  $\min_{\mathcal{E}^{\dagger}}(o)=\min\{y_{i0}|(y_{i0},y_{i1}) \in Y \text{ and $y_{i0}$ decrypts to } o \}$
\State \textbf{end for} \hfill \textcolor{blue}{$\rhd$} Querier maintains state information
\Statex \textbf{Range Query Protocol}
\Statex \textbf{Querier}
\State $C=\{\min_{\mathcal{E}^{\dagger}}(\Pa(a)),\max_{\mathcal{E}^{\dagger}}(\Pa(b)\}$ \hfill \textcolor{blue}{$\rhd$} Transformed range query \Statex \hfill  based  on state information 
\State\textit{Querier} $\xrightarrow{C}$ \textit{Server} 
\Statex \textbf{Server}
\State 
 \scalebox{0.95}{$Y'=\big\{( y_{i0},y_{i1})|i\in [n] \mbox{ and } y_{i0} \in \big[\min_{\mathcal{E}^{\dagger}}\big(\Pa(a)\big),\max_{\mathcal{E}^{\dagger}}\big(\Pa(b)\big)\big]\big\}$} 
 \Statex \hfill \textcolor{blue}{$\rhd$} Server returns the set of records matching the query
 \State \textit{Server} $\xrightarrow{Y'}$ \textit{Querier}
 \Statex\textbf{Querier}
 \State $V=\phi$
 \State \textbf{for} $y_{i1} \in Y'$
 \State \hspace{0.51cm} $x_i' \leftarrow \overline{\D}(K,y_{i1}')$
 \State \hspace{0.51cm} \textbf{if} $(x_i' \in [a,b])$  \hfill \textcolor{blue}{$\rhd$} Verifying whether record falls in $[a,b]$ 
 \State \hspace{0.8cm} $V=V\cup r_i$
 \State \hspace{0.51cm} \textbf{end if}
 \State \textbf{end for}
  \State \textbf{Return} $V$ 
 \end{algorithmic}
\label{alg:range}
\end{algorithm}
\section{Experimental Evaluation}\label{sec:evaluation}
 In this section, we answer the following three questions:\textcolor{white}{f\footnote{$V$ has a small utility loss as explained before.}}
\squishlist\item \textbf{Q1:} Does \nam retrieve the queried records with high accuracy? \item \textbf{Q2:} Is the processing overhead of \nam  reasonable? \item \textbf{Q3:} Can \name answer statistical queries in the \ldp setting with high accuracy?
\squishend
\vspace{-0.05cm}\textbf{Evaluation Highlights}
\squishlist\item \nam retrieves almost all the records of the queried range. For instance, \nam only misses around $4$ in every $10$K correct records on average for a dataset of size $\sim 732$K with an attribute of domain size $\sim 18$K and $\epsilon=1$. \vspace{-0.1cm}\item The overhead of processing the extra records for \nam is low. For example, for the above dataset, the  number  of  extra  records  processed  is  just $0.3\%$ of the dataset size for \scalebox{0.9}{$\epsilon=1$}. \item We give an illustration of \nam's protection against inference attacks.  For an age dataset  and an adversary with real-world auxiliary knowledge, no inference attack in the snapshot attack model can distinguish between two age values $(x,x')$ such that $|x-x'|\leq 8$ for $\epsilon=0.1$.      \item 
\name can answer several queries in the \ldp setting with high accuracy. For instance, \name can answer ordinal queries with $94.5\%$ accuracy for a dataset of size $\sim38$K, an attribute of domain size $\sim240$K and $\epsilon=1$. Additionally, \name achieves $6
\times$ lower error than the state-of-the-art $\epsilon$-\ldp technique for frequency estimation for $\epsilon=0.1$. \squishend
\vspace{-0.4cm}\begin{figure*}[hbt!]
    \begin{subfigure}[b]{0.25\linewidth}
        \centering
         \includegraphics[width=0.9\linewidth]{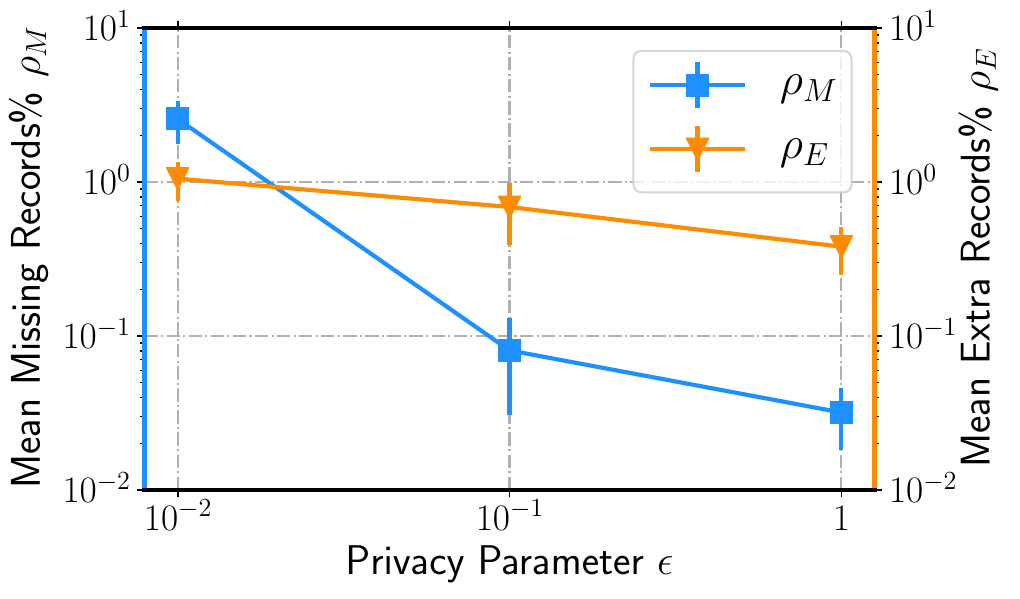}
            \vspace{-0.25cm} 
            \caption{\scalebox{0.8}{\textsf{PUDF}: Effect of $\epsilon$}}
        \label{fig:cdp:eps:1}
    \end{subfigure}
    \begin{subfigure}[b]{0.25\linewidth}
    \centering \includegraphics[width=0.9\linewidth]{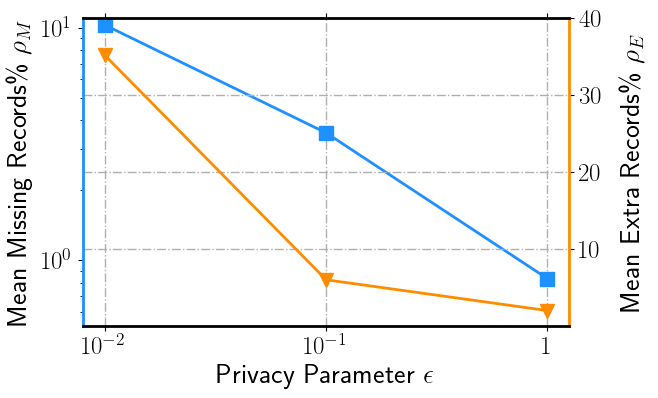}   
\vspace{-0.25cm} 
\caption{\scalebox{0.8}{\textsf{SPARC}: Effect of $\epsilon$}}
        \label{fig:cdp:eps:2}\end{subfigure}
    \begin{subfigure}[b]{0.25\linewidth}
    \centering    \includegraphics[width=0.9\linewidth]{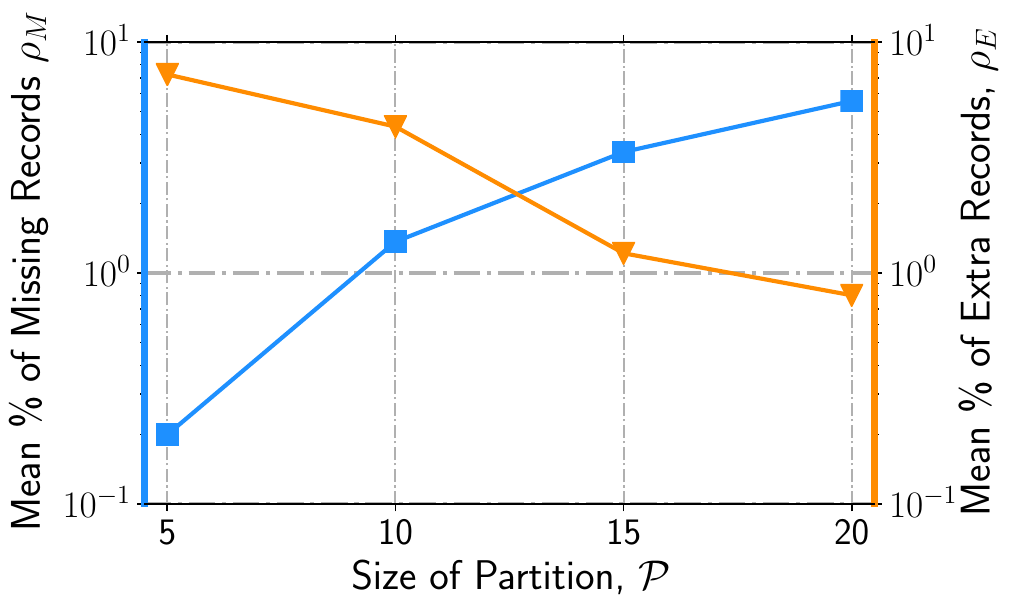}
    \vspace{-0.25cm}    
    \caption{\scalebox{0.8}{\textsf{Adult}: Effect of $\Pa$}}
        \label{fig:cdp:partition:1}\end{subfigure}
      \begin{subfigure}[b]{0.25\linewidth}
    \centering    \includegraphics[width=0.9\linewidth]{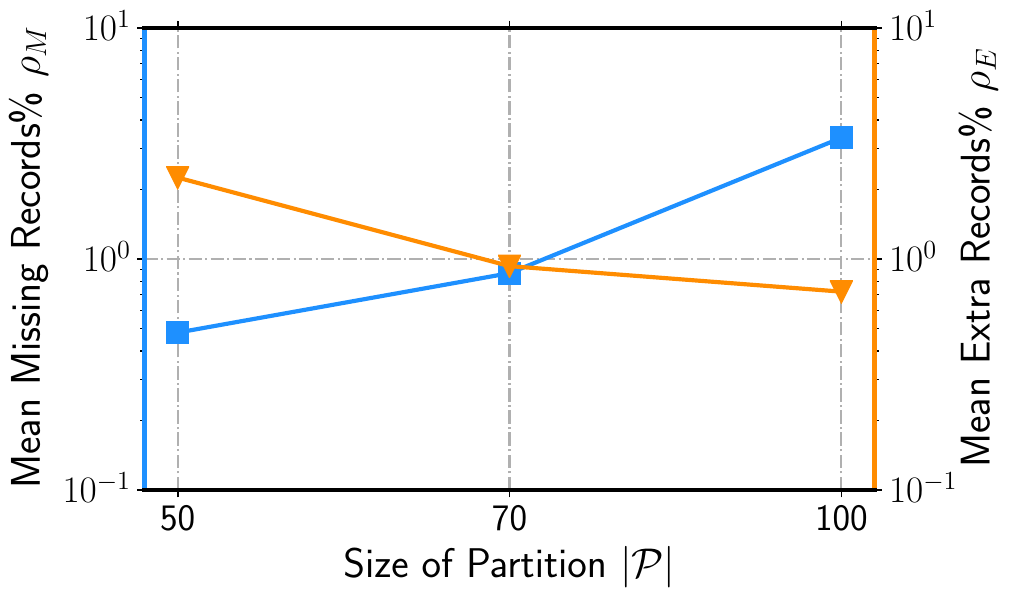}    
         \vspace{-0.25cm}  \caption{\scalebox{0.8}{\textsf{Salary}: Effect of $\Pa$}}
        \label{fig:cdp:partition:2}\end{subfigure}
   \begin{subfigure}[b]{0.25\linewidth}
        \centering
         \includegraphics[width=0.9\linewidth]{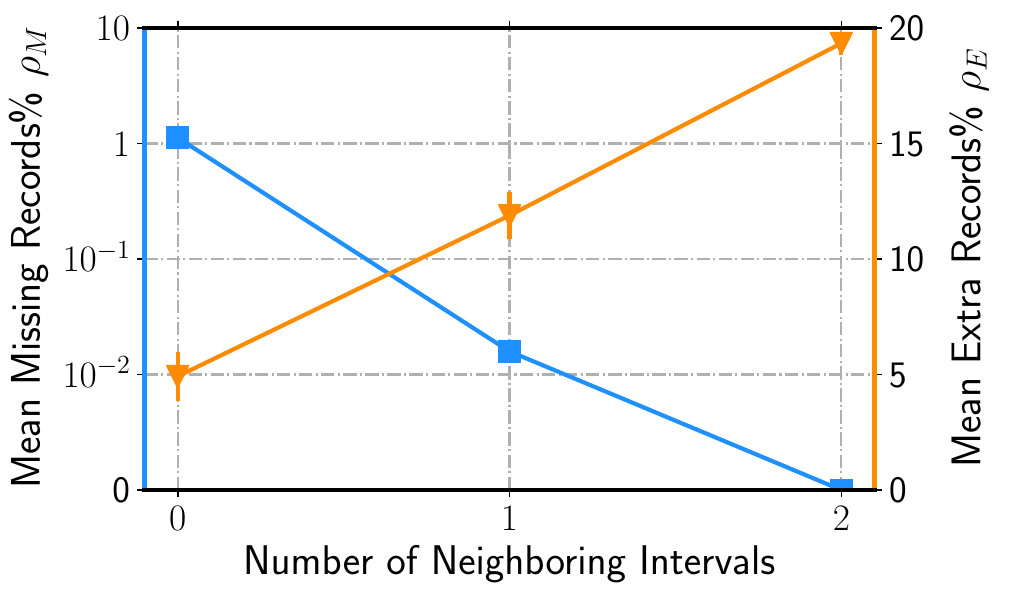}
      \vspace{-0.25cm}  \caption{\scalebox{0.8}{\textsf{Adult}: Effect of Neighboring intervals}}
        \label{fig:cdp:neighbor:1}
    \end{subfigure}
    \begin{subfigure}[b]{0.25\linewidth}
    \centering \includegraphics[width=0.9\linewidth]{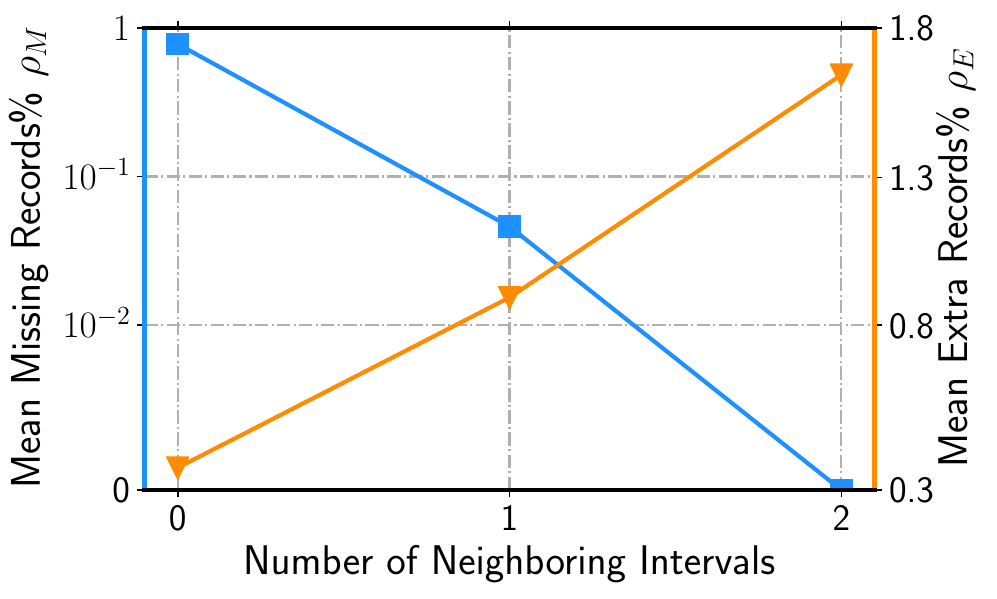}  
    \vspace{-0.25cm} \caption{\scalebox{0.8}{\textsf{Salary}: Effect of neighboring intervals}}
\label{fig:cdp:neighbor:2}\end{subfigure}
    \begin{subfigure}[b]{0.25\linewidth}
    \centering    \includegraphics[width=0.9\linewidth]{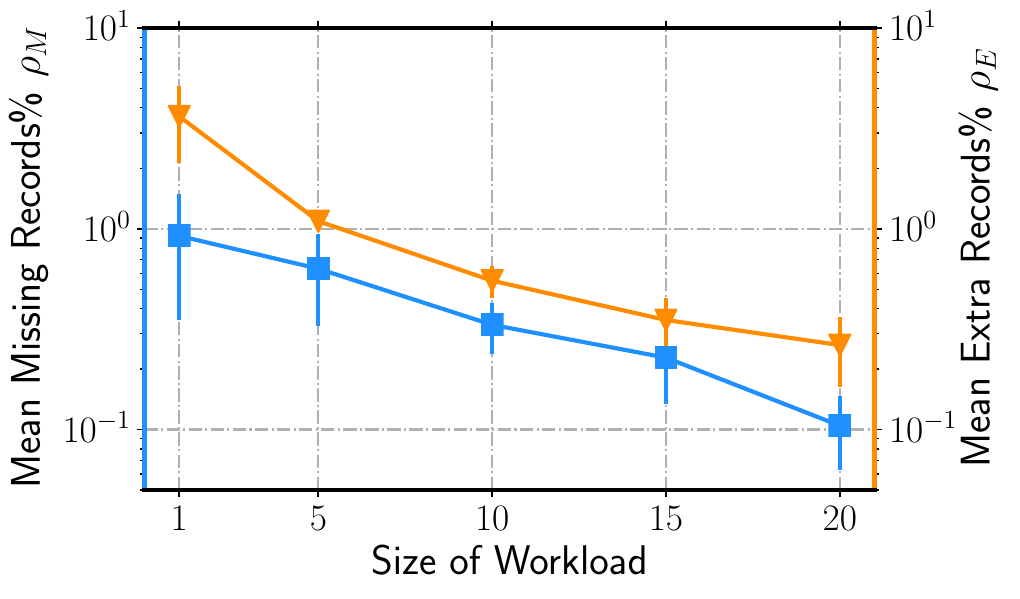}
       \vspace{-0.25cm} \caption{\scalebox{0.8}{\textsf{SPARC}: Effect of Workload}}
        \label{fig:cdp:workload:1}\end{subfigure}
      \begin{subfigure}[b]{0.25\linewidth}
    \centering    \includegraphics[width=0.9\linewidth]{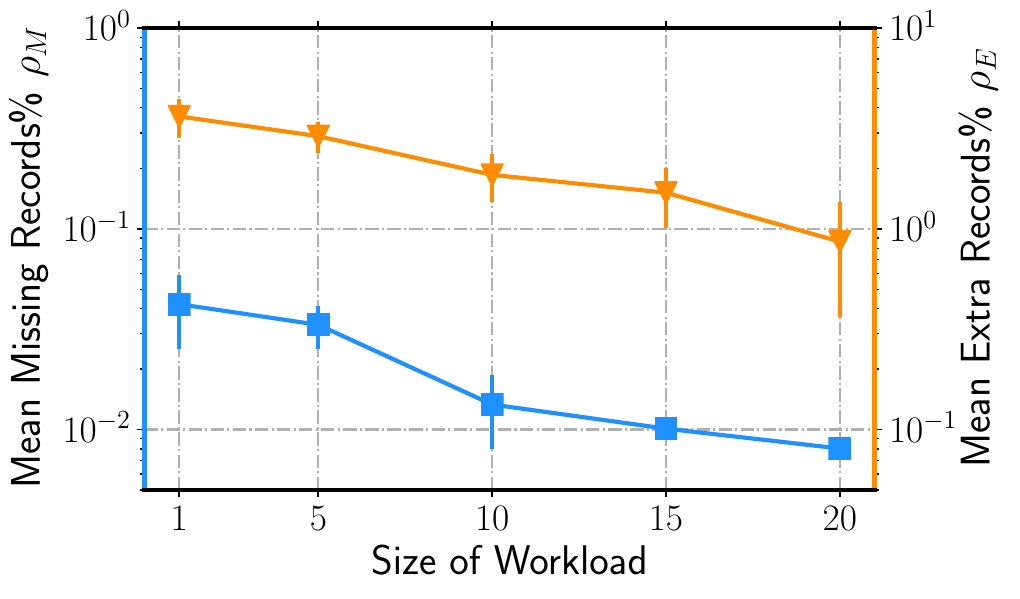}
       \vspace{-0.25cm} \caption{\scalebox{0.8}{\textsf{PUDF}: Effect of Workload}}
        \label{fig:cdp:workload:2} 
    \end{subfigure}
\vspace{-0.7cm}
   \caption{Accuracy Analysis of \nam in the Context of Encrypted Databases}
   \label{fig:cdp}\vspace{-0.5cm}
\end{figure*}
\subsection{Experimental Setup}
\textbf{Datasets.} We use the following datasets: \squishlist\item \textit{PUDF}~\cite{PUDF}. This is a hospital discharge data from Texas. We use the 2013 PUDF data and the attribute \textsf{PAT\_ZIP} ($\sim732$K records of patient's 5-digit zipcode from the domain  $[70601,88415]$). 
\item \textit{Statewide Planning and Research Cooperative System (SPARCS)}~\cite{NYC}. This is a hospital inpatient discharge dataset from the state of New York. This dataset has $\sim2532$K records and we use the \textsf{length\_of\_stay} (domain $[1,120]$) attribute for our experiments.  \item \textit{Salary}~\cite{salaries}. This dataset contains the compensation for San Francisco city employees. We use the attribute \textsf{BasePay} (domain $[1000,230000]$) from the years $2011$ ($\sim36$K records) and $2014$ ($\sim38$K records).  \item \textit{Adult}~\cite{UCI}. This  dataset is derived from the 1994 Census. The dataset has $\sim33$K records and we use the attribute \textsf{Age} (domain $[1,100]$) for our experiments. \item \textit{Population}~\cite{census:age}. This is a US Census dataset of annual estimates of the resident population by age and sex. We use the data for male Puerto Ricans for  $2011$ and $2019$. \squishend
Datasets \textsf{Adult} and \textsf{SPARCS} have small and dense domains while \textsf{PUDF} and \textsf{Salary} have larger and sparse domains.
\\\textbf{Metrics.} 
We evaluate \textbf{Q1} using the relative percentage of missing records, \scalebox{0.9}{$\rho_M=\frac{\# \mbox{missing records}}{\#\mbox{correct records}}\%$}. Note that  $\rho_M$ essentially captures false negatives which is the only type of error encountered -- the querier can remove all cases false positives as discussed in Sec. \ref{sec:application}. We evaluate \textbf{Q2}  via the percentage of extra records processed relative to the dataset size, \scalebox{0.9}{$\rho_E= \frac{\#\mbox{extra records}}
{\#\mbox{records in dataset}}\%$}. A key advantage of outsourcing is that the querier doesn’t have to store/process the entire database. $\rho_E$ 
measures this -- low $\rho_E$ implies that the (relative) count of extra records is low and it is still advantageous to outsource. Thus, low $\rho_E$ implies that the client's processing overhead is low (relative to the alternative of processing the whole dataset). We believe this is a good metric for assessing the overhead because: \squishlist \item For clients, the decryption of extra records doesn't result in a tangible overhead (\scalebox{0.9}{$1$} million records take \scalebox{0.9}{$<3$} minutes, Sec. \ref{sec:application}). \item\nam has no impact on the server since its interface (functionality) is the same as that for \OPEs. \squishend
For evaluating ordinal queries (Fig. \ref{fig:ldp:ordinal}), we use $\sigma_{k}=\%\mbox{ of points with }$  $|\Pa(x)-\tilde{o}_x|=k$ where  $\Pa(x)$ and $\tilde{o}_x$ denote the correct and noisy encoding for $x$, respectively.  For instance, $\sigma_0=90$  
means that 90\% of the input data points were mapped to the correct bins.  
For frequency and mean estimation (Figs. \ref{fig:ldp:freq} and \ref{fig:ldp:mean}),  we measure the absolute  error $|c-\tilde{c}|$ where $c$ is the true answer and $\tilde{c}$ is the noisy output. For Fig. \ref{fig:ldp:range}, we use the error metric $|c-\tilde{c}|/k$ where $k$ is the size of the query. We report the mean and s.t.d of error values over $100$ repetitions for every experiment.
\\\textbf{Configuration.} All experiments were conducted on a Macbook with i5, 8GB RAM and
OS X Mojave (v10.14.6). We used Python 3.7.6. 
The reported privacy parameter $\epsilon$ refers to the $\epsilon$-\indfaocpa guarantee, and implies $\frac{\epsilon}{2}$-\cdp and $\frac{\epsilon}{2}$-\ddp for \nam (Fig. \ref{fig:relation}). We instantiate the \name primitive using Alg. \ref{alg:1}. 
  Due to lack of space, we present the results for all only two datasets, 1 dense (\textsf{Adult}, \textsf{SPARCS}) and 1 sparse (\textsf{PUDF}, \textsf{Salary}), in Fig. \ref{fig:cdp}. The default settings are $\epsilon=1$, equi-depth partitioning (based on a non-private prior) of sizes $|\Pa|=122$  for \textsf{PUDF}, $|\Pa|=8$  for \textsf{SPARC}, $|\Pa|= 10$ for \textsf{Adult}, and $|\Pa|=70$ for \textsf{Salary}. The range queries are chosen uniformly at random. We use the data from \textsf{Salary} for $2011$ as an auxiliary dataset for Fig. \ref{fig:ldp:ordinal}.
  
\vspace{-0.4cm}\subsection{Experimental Results}\label{sec:evaluation:results}
\begin{figure*}[ht]
    \begin{subfigure}[b]{0.25\linewidth}
        \centering
         \includegraphics[width=0.9\linewidth]{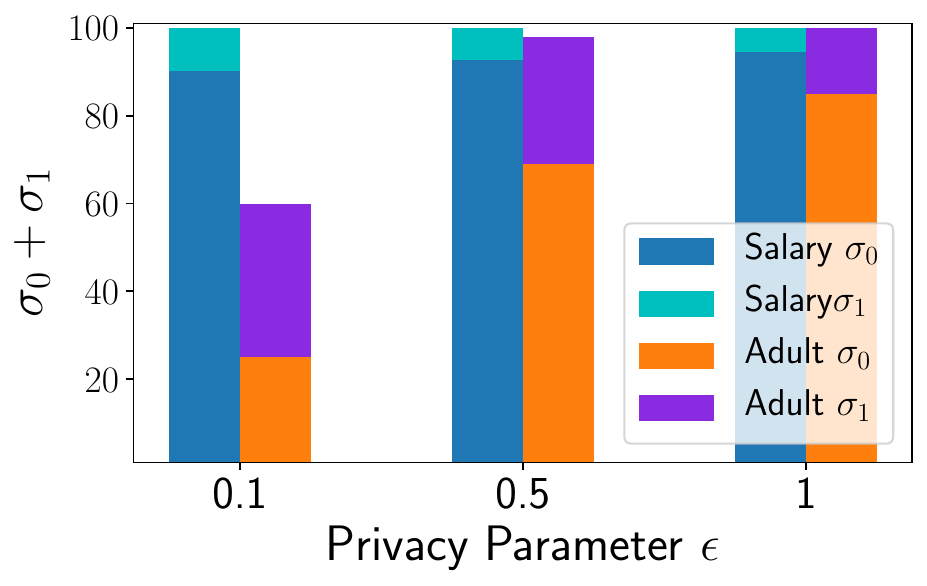}
         \vspace{-0.25cm}
        \caption{Ordinal Queries}
        \label{fig:ldp:ordinal}
    \end{subfigure}
    \begin{subfigure}[b]{0.25\linewidth}
    \centering \includegraphics[width=0.9\linewidth]{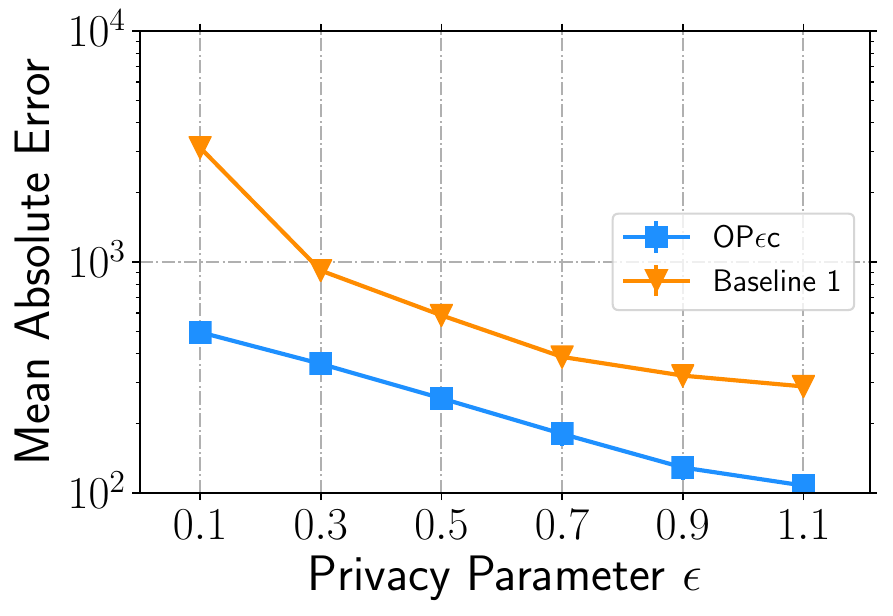}   \vspace{-0.25cm}
 \caption{\textsf{Adult}: Frequency Estimation}
        \label{fig:ldp:freq}\end{subfigure}
    \begin{subfigure}[b]{0.25\linewidth}
    \centering    \includegraphics[width=0.9\linewidth]{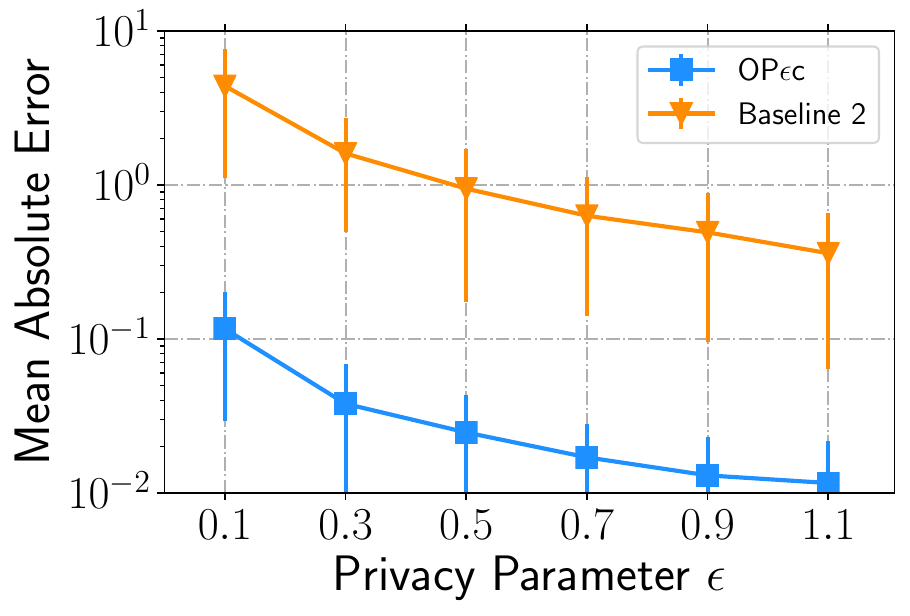}
    \vspace{-0.25cm}
        \caption{\textsf{Adult}: Mean Estimation}
        \label{fig:ldp:mean}\end{subfigure}
      \begin{subfigure}[b]{0.25\linewidth}
    \centering    \includegraphics[width=0.9\linewidth]{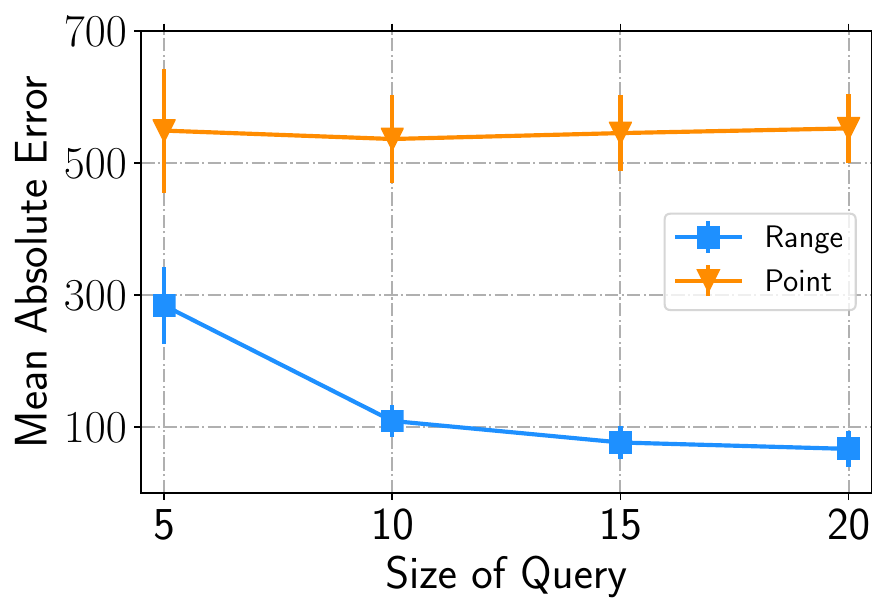}    \vspace{-0.25cm}
        \caption{\textsf{Adult}: Range and Point Queries}
        \label{fig:ldp:range}
    \end{subfigure}\vspace{-0.3cm}
   \caption{Accuracy Analysis of  \name  in the \ldp setting}
   \label{accuracy}\vspace{-0.2cm}
\end{figure*}
\subsubsection{Utility and Overhead of \nam}
In this section, we evaluate \textbf{Q1} and \textbf{Q2} by computing the efficacy of \nam in retrieving the queried records of a range query. Recall, that a \OPE scheme preserves the exact order of the plaintexts. Thus, the loss in accuracy (ordering information) arises solely from \nam's use of the \name primitive. Hence first, we study the effect of the parameters of \name. 
\par We start with the privacy parameter, $\epsilon$  (Figs.  \ref{fig:cdp:eps:1} and \ref{fig:cdp:eps:2}). We observe that $\nam$ achieves high utility even at high levels of privacy.  For example, \nam misses only about $2\%$ of the correct records (i.e., $\rho_M=2\%$) for  \textsf{PUDF} (Fig. \ref{fig:cdp:eps:1}) on average even for a low value of $\epsilon=0.01$ (i.e., the ratio of the output distributions of two datasets that are $100$-adjacent is bounded by $e$). The associated processing overhead is also reasonable -- the size of the extra records retrieved, $\rho_E$, is around $1\%$ of the total dataset size on average. Next, we observe that as the value of $\epsilon$ increases, both the number of missing and extra records drop. For instance, for $\epsilon=1$ we have $\rho_M=0.04\%$, i.e.,  only $4$ in every $10$K correct records are missed on average. Additionally, the number of extra records processed is just $0.3$\% of the total dataset size on average. We observe similar trends for  \textsf{SPARC} (Fig. \ref{fig:cdp:eps:2}) as well. However, the utility for \textsf{SPARC} is lower than that for \textsf{PUDF}. For instance, $\rho_M=10\%$ and $\rho_E=35\%$ for $\epsilon=0.01$ for \textsf{SPARC}. This is so because the ratio of the domain size (120) and partition size (8) for \textsf{SPARC} is smaller than that for \textsf{PUDF} (domain size $\sim 18$K, $|\Pa|=122$). As a result, the individual intervals for \textsf{SPARC} are relatively small which results in lower utility. 
\par Next, we study the effect of the size of the partition (number of intervals) on \nam's utility. As expected, we observe that for \textsf{Adult}, decreasing the partition size from $20$ to $5$ decreases $\rho_M$ from $5\%$ to $0.2\%$ (Fig. \ref{fig:cdp:partition:1}).  However, this increases the number of extra records processed -- $\rho_E$  increases from $0.8\%$ to $7\%$. Similar trends are observed for \textsf{Salary} (Fig. \ref{fig:cdp:partition:2}). \par
Next, we study the effect of including neighboring intervals (Sec. \ref{sec:application})  in Figs. \ref{fig:cdp:neighbor:1} and \ref{fig:cdp:neighbor:2}. For instance, for  \textsf{Salary}, including records from $2$ extra neighboring intervals drops $\rho_M$ from $1\%$ to $0\%$. However, $\rho_E$ increases from $0.4\%$ to $1.7\%$. The increase in $\rho_E$ is more significant for \textsf{Adult} because the domain size for \textsf{Adult} is small and the dataset is dense. On the other hand, \textsf{Salary} has a larger and sparse domain. 
\par Another way for improving utility is to answer a workload of range queries at a time (Sec. \ref{sec:application}).  
We present the empirical results for this in Figs. \ref{fig:cdp:workload:1} and \ref{fig:cdp:workload:2}. For \textsf{SPARC},  we observe that $\rho_M$ and $\rho_E$ drop from $0.9\%$ to $0.1\%$ and $4\%$ to $0.2\%$, respectively as the size of the workload is increased from $1$ to $20$. Further, we note that this effect is more pronounced for \textsf{SPARC}  than for \textsf{PUDF}. This is because, the domain of \textsf{PUDF} is larger and hence, the probability that the queried ranges in the workload are close to each other is reduced.
\vspace{-0.2cm}\subsubsection{ Utility of \name in the \ldp Setting} In this section, we evaluate \textbf{Q3} by studying the utility of the \name primitive in the \ldp setting. \par First, we consider ordinal queries. For \textsf{Adult}, we define an equi-length partition $\Pa=\{[1,10],\cdots,[91,100]\}$ over the domain and our query of interest is --  \textit{Which age group (as defined by $\Pa$) does each data point belong to?} For \textsf{Salary}, we define an equi-depth partition of size $10$ over the domain and our query of interest is -- \textit{Which decile does each data point belong to?} Our results are reported in Fig. \ref{fig:ldp:ordinal}. The first observation is that \name reports the correct encodings with good accuracy. For instance, for $\epsilon=1$, $\sigma_0=94.5\%$ and $\sigma_1=5.5\%$ for the \textsf{Salary} dataset. 
Another interesting observation is that for low values of $\epsilon$, the accuracy for \textsf{Salary} is significantly higher than that for \textsf{Adult}. Specifically, for $\epsilon=0.1$, $\sigma_0=90\%$ and $\sigma_0=35\%$ for \textsf{Salary}  and \textsf{Adult}, respectively. The reason for this is two fold. Firstly, we use an auxiliary dataset for \textsf{Salary} to compute the weighted medians for the central tendencies.  On the other hand, we do not use any auxiliary dataset for \textsf{Adult} and use the median of each interval as the central tendency. Secondly, the domain size of \textsf{Salary} ($230$K) is relatively large compared to the number of intervals $(10)$ which results in higher utility (as explained in Sec. \ref{sec:construction}). \par Fig. \ref{fig:ldp:freq} shows our results for using \name for frequency estimation. \textsf{Baseline1} denotes the state-of-the art $\epsilon$-\ldp frequency oracle \cite{fo}. We observe that \name achieves significantly lower error than \textsf{Baseline1}. For instance, the error of \name is $6\times$ lower than that of \textsf{Baseline1} for $\epsilon=0.1$. This gain in accuracy is due to \name's relaxed $\epsilon$-\ddp guarantee.  From Fig. \ref{fig:ldp:mean}, we  observe that the frequency oracle designed via \name can be used for mean estimation with high utility. Here \textsf{Baseline2} refers to the state-of-the-art protocol \cite{Microsoft} for mean estimation for $\epsilon$-\ldp. We observe that for $\epsilon=0.1$, \name achieves $\sim40\times$ lower error than \textsf{Baseline2}.  Another interesting observation is that \name's frequency oracle gives better accuracy for a range query of size $k$ than $k$ individual point queries (Fig. \ref{fig:ldp:range}). For instance, a range query of size $20$ gives $5\times$ lower error than $20$ point queries. The reason behind this is that the output distribution of \name is the exponential distribution centered at the input $x$. Hence, with high probability $x$ either gets mapped to itself or some other point in its proximity. Thus, the probability for accounting for most copies of $x$ is higher for the case of answering a range query $x \in [a,b]$ than for answering a point estimation for $x$.
\vspace{-0.3cm}
\subsection{An Illustration of \nam's Protection}\vspace{-0.1cm}
Here, we give an illustration of \nam's protection against inference attacks on a real-world dataset.  We use the snapshot\footnote{We use the snapshot attack for the ease of exposition; there are some caveats to its practicality \cite{GRS17}.} attack model (the adversary obtains a one-time copy or snapshot of the encrypted data~\cite{sok}). Our analysis is based on a formal model that captures a generic inference attack in the snapshot model -- we create a bitwise leakage profile for the plaintexts from the revealed order and adversary's auxiliary knowledge as described below.     \\\textbf{Model Description.} 
 We assume the input domain to be discrete and finite, and w.l.o.g denote it as $\mathcal{X}=[0,2^{m-1}]$. Additionally, let $\mathcal{D}$ represent the true input distribution and $X=\{x_1,\cdots,x_n\}$ be a dataset of size $n$ with each data point sampled i.i.d from $\mathcal{D}$. We model our adversary, $\advppt$, to have access to $(1)$ auxiliary knowledge about a distribution, $\mathcal{D}'$, over the input domain, $\mathcal{X}$ and $(2)$ the  ciphertexts, $\mathcal{C}$, corresponding to $X$ which represent the snapshot of the encrypted data store.
The adversary's goal is to recover as many bits of the plaintexts as possible.
Let $X(i), i \in [n]$ represent the plaintext in $X$ with rank~\cite{rank} $i$  and let $X(i,j), j \in [m] $ represent the $j$-th bit for $X(i)$. Additionally, let $b(i,j)$ represent the adversary's guess for $X(i,j)$. Let $\mathcal{L}$ be a $n\times m$ matrix where $\mathcal{L}(i,j)=\mathrm{Pr}\Big[X(i,j)=b(i,j)\big|\mathcal{D},\mathcal{D}'\Big]$ represent the probability that  $\advppt$ correctly identifies the $j$-th bit of the plaintext with rank $i$.
Hence, $\mathcal{L}$ allows analysis of bitwise information leakage from $\mathcal{C}$ to  $\advppt$. The rationale behind using this model is that it \textit{captures a generic inference attack in the snapshot model} allowing analysis at the granularity of bits. \par The exact computation of the bitwise leakage matrix, $\mathcal{L}$, when $X$ is encrypted with \nam and \OPE is given by Thms. \ref{thm:formal:POPE} and \ref{thm:formal:OPE}, respectively,  in \ifpaper full paper~\cite{anom}\else App. \ref{app:snapshot}\fi. Using these theorems, we analytically compute $\mathcal{L}$ for a real-world dataset. We use the age data from the \textsf{Population} dataset for the year $2019$ as the true input distribution, $\mathcal{D}$, and  the data from the year $2011$ is considered to be the adversary's auxiliary distribution, $\mathcal{D}'$. We consider the dataset size to be $200$ and the number of bits considered is $7$ (domain of age is $[1,100]$). Additionally, the partition $\Pa$ for the \name primitive is set to be \scalebox{0.9}{$\Pa=\mathcal{O}=\mathcal{X}=[1,100]$}. The reported privacy parameter $\epsilon$ refers to the $\epsilon$-\indfaocpa guarantee, and implies $\frac{\epsilon}{2}$-\cdp and $\frac{\epsilon}{2}$-\ddp for \nam (Fig. \ref{fig:relation}).  As shown in Fig. \ref{fig:formalmodel}, we observe that the probability of successfully recovering the plaintext bits is significantly lower for  \nam as compared to that of a \OPE. Moreover, the probability of recovering the lower-order bits (bits in the right half) is lower than that of  higher-order bits -- the probability of recovering bits $5$-$7$ is $\approx 0.5$ which is the random guessing baseline. Recall that Thm. \ref{thm:attack:record} implies that the adversary would not be able to distinguish between pairs of inputs that are close to each other. Hence, the above observation is expected since values that are close to each other are most likely to differ only in the lower-order bits.   
Additionally, as expected, the probability of the adversary's success decreases with decreasing value of $\epsilon$. For instance, the average probability of success for the adversary for bit $4$ reduces from $0.77$ in the case of \OPEs (Fig. \ref{fig:formalmodel:L1}) to $0.62$ and $0.51$ for $\epsilon=1$ (Fig. \ref{fig:formalmodel:L2:1}) and $\epsilon=0.1$ (Fig. \ref{fig:formalmodel:L2:2}), respectively, for \nam. Concretely, \textit{no inference attack in the snapshot model that uses the given auxiliary knowledge can distinguish between two age values $(x,x')$ such that \scalebox{0.9}{$|x-x'|\leq 8$} for $\epsilon=0.1$}.
\begin{figure}
  \begin{subfigure}[b]{0.33\linewidth}
    \centering \includegraphics[width=1\linewidth]{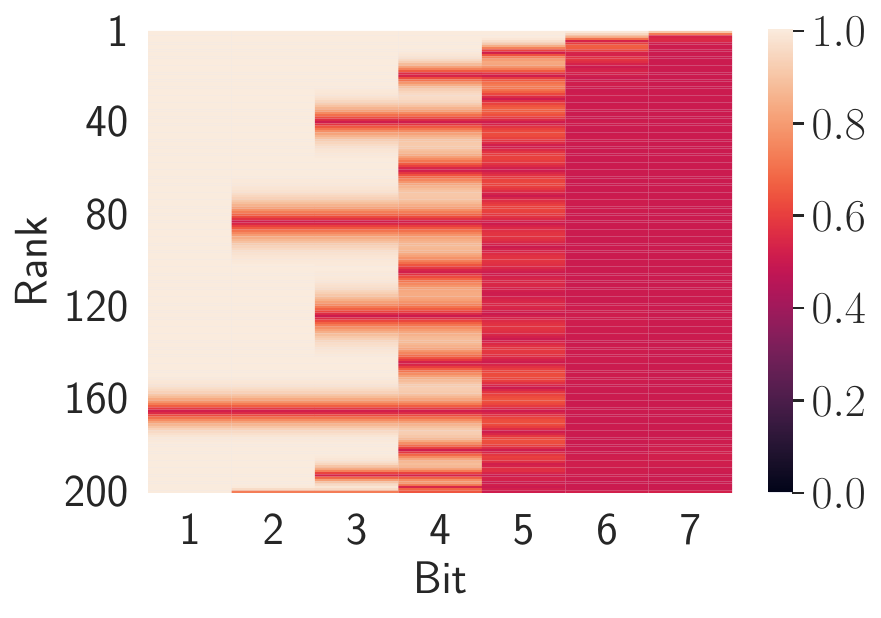}\vspace{-0.3cm}
        \caption{\scalebox{0.8}{$\mathcal{L}$ for \OPE}}
        \label{fig:formalmodel:L1}\end{subfigure}
        \begin{subfigure}[b]{0.33\linewidth}
        \includegraphics[width=1\linewidth]{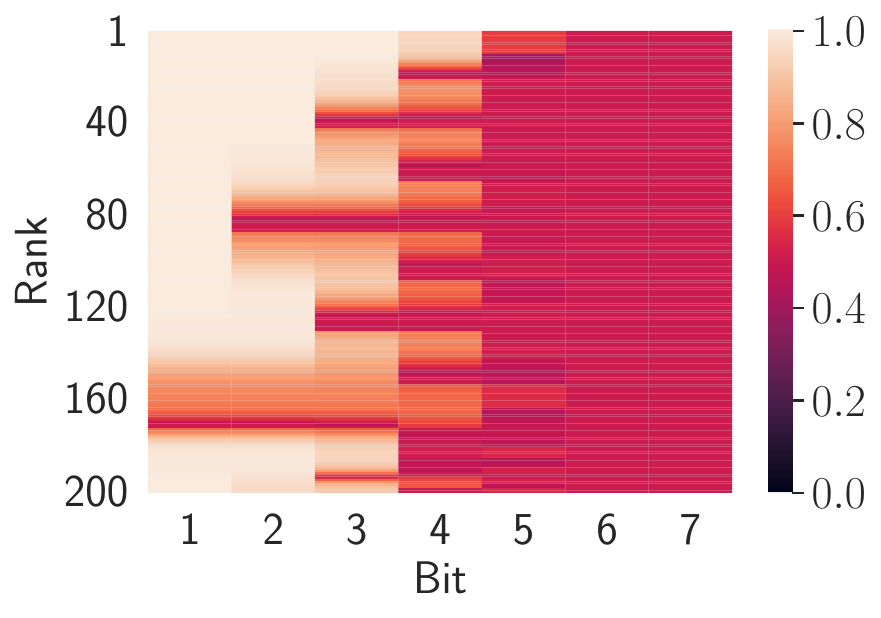}\vspace{-0.3cm}
        \caption{\scalebox{0.8}{$\mathcal{L}$ for \nam for $\epsilon=1$}}
        \label{fig:formalmodel:L2:1}
        \end{subfigure}
        \begin{subfigure}[b]{0.33\linewidth}
        \includegraphics[width=1\linewidth]{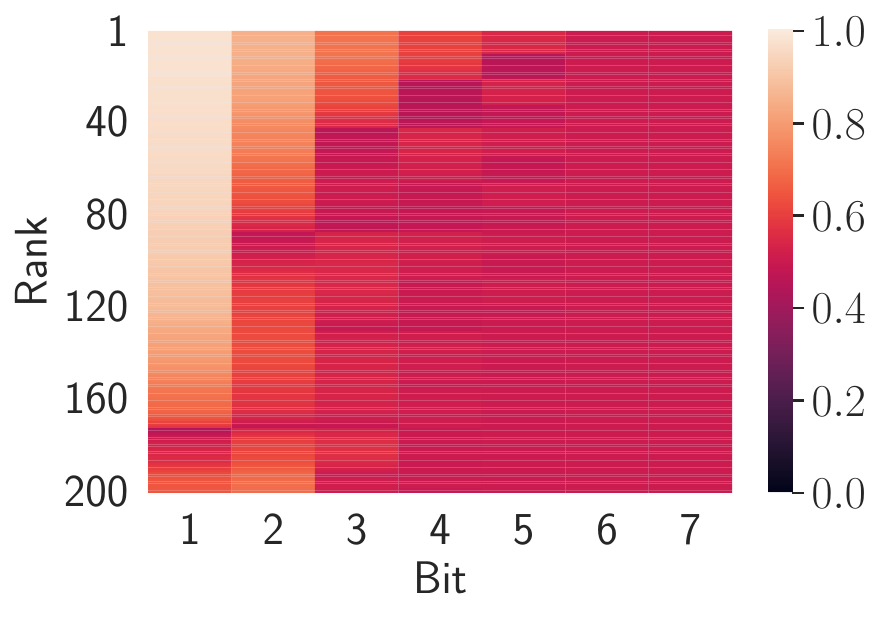}\vspace{-0.3cm}
        \caption{\scalebox{0.8}{$\mathcal{L}$ for \nam for $\epsilon=0.1$}}
        \label{fig:formalmodel:L2:2}
        \end{subfigure}
  \vspace{-0.7cm}
        \caption{\scalebox{0.95}{Numerical Analysis of the Bitwise Leakage Matrix, $\mathcal{L}$}} \label{fig:formalmodel} 
   \end{figure} 
\vspace{-0.7cm}
\section{Related Work}\label{sec:relatedwork}
\textbf{Relaxation of \DP.} \ddp is equivalent to metric-based \ldp \cite{metricDP} where the metric used is $\ell_1$-norm. Further, metric-\ldp is a generic form of Blowfish \cite{Blowfish} and d$_\chi$-privacy \cite{dx} adapted to \ldp. Other works  \cite{linear1,Xiang2019LinearAR, Geo, linear2, Ordinal1, Ordinal2} have also considered metric spaces. 
\\\textbf{\DP and Cryptography.} A growing number of work has been exploring the association between \DP and cryptography. One such line of work proposes to allow a \DP leakage of information for gaining efficiency in cryptographic protocols \cite{DPAP,Shrinkwrap,PSI,DPORam1, DPORam2}. A parallel line of work involves efficient use of cryptographic primitives for differentially private
functionalities \cite{CryptE,EncryptedDP,Prochlo, shuffle2, HH, keyvalue}. Additionally, recent works have combined \DP and cryptography for distributed learning  \cite{kairouz2021distributed, cpSGD, capc} (see \ifpaper full paper \cite{anom}\else App. \ref{app:relatedwork}  \fi).

\vspace{-0.2cm}
\section{Conclusion}\label{sec:extra}
\vspace{-0.1cm}

We have proposed a novel 
$\epsilon$-\cdp 
order preserving encryption scheme, \nam. \nam enjoys a formal guarantee of $\epsilon$-\cdp, in the least, even in the face of inference attacks. To the best of our knowledge, this is the first work to combine \DP with a property-preserving encryption scheme. 

\bibliographystyle{ACM-Reference-Format}
\bibliography{references.bib}


\begin{thebibliography}{119}


\ifx \showCODEN    \undefined \def \showCODEN     #1{\unskip}     \fi
\ifx \showDOI      \undefined \def \showDOI       #1{#1}\fi
\ifx \showISBNx    \undefined \def \showISBNx     #1{\unskip}     \fi
\ifx \showISBNxiii \undefined \def \showISBNxiii  #1{\unskip}     \fi
\ifx \showISSN     \undefined \def \showISSN      #1{\unskip}     \fi
\ifx \showLCCN     \undefined \def \showLCCN      #1{\unskip}     \fi
\ifx \shownote     \undefined \def \shownote      #1{#1}          \fi
\ifx \showarticletitle \undefined \def \showarticletitle #1{#1}   \fi
\ifx \showURL      \undefined \def \showURL       {\relax}        \fi
\providecommand\bibfield[2]{#2}
\providecommand\bibinfo[2]{#2}
\providecommand\natexlab[1]{#1}
\providecommand\showeprint[2][]{arXiv:#2}

\bibitem[TPC(1992)]%
        {TPCH}
 \bibinfo{year}{1992}\natexlab{}.
\newblock \bibinfo{title}{TPCH Benchmark}.
\newblock \bibinfo{howpublished}{\url{http://www.tpc.org/tpch/}}.
\newblock


\bibitem[enc(2001)]%
        {encryption}
 \bibinfo{year}{2001}\natexlab{}.
\newblock \bibinfo{title}{NIST, Block Cipher Techniques}.
\newblock
  \bibinfo{howpublished}{\url{https://csrc.nist.gov/projects/block-cipher-techniques/bcm/modes-development/}}.
\newblock


\bibitem[cen(2010)]%
        {census:age}
 \bibinfo{year}{2010}\natexlab{}.
\newblock \bibinfo{title}{National Population by Characteristics: 2010-2019}.
\newblock
  \bibinfo{howpublished}{\url{https://www.census.gov/data/tables/time-series/demo/popest/2010s-national-detail.html/}}.
\newblock


\bibitem[NYC(2012)]%
        {NYC}
 \bibinfo{year}{2012}\natexlab{}.
\newblock \bibinfo{title}{Hospital Inpatient Discharges}.
\newblock
  \bibinfo{howpublished}{\url{https://health.data.ny.gov/Health/Hospital-Inpatient-Discharges-SPARCS-De-Identified/u4ud-w55t/}}.
\newblock


\bibitem[PUD(2013)]%
        {PUDF}
 \bibinfo{year}{2013}\natexlab{}.
\newblock \bibinfo{title}{Hospital Discharge Data Public Use Data File}.
\newblock
  \bibinfo{howpublished}{\url{http://www.dshs.state.tx.us/THCIC/Hospitals/Download.shtm/}}.
\newblock


\bibitem[sal(2015)]%
        {salaries}
 \bibinfo{year}{2015}\natexlab{}.
\newblock \bibinfo{title}{SF Salaries, Kaggle}.
\newblock
  \bibinfo{howpublished}{\url{https://www.kaggle.com/kaggle/sf-salaries/}}.
\newblock


\bibitem[dat(2016a)]%
        {databreach3}
 \bibinfo{year}{2016}\natexlab{a}.
\newblock \bibinfo{title}{Anthem. Anthem data breach.}
\newblock \bibinfo{howpublished}{\url{https://www.anthemfacts.com/}}.
\newblock


\bibitem[com(2016a)]%
        {commercial5}
 \bibinfo{year}{2016}\natexlab{a}.
\newblock \bibinfo{title}{Ciphercloud}.
\newblock \bibinfo{howpublished}{\url{http://www.ciphercloud.com/}}.
\newblock


\bibitem[com(2016b)]%
        {commercial3}
 \bibinfo{year}{2016}\natexlab{b}.
\newblock \bibinfo{title}{Microsoft, Always Encrypted (Database Engine)}.
\newblock \bibinfo{howpublished}{\url{https:
  //msdn.microsoft.com/en-us/library/mt163865.aspx/}}.
\newblock


\bibitem[com(2016c)]%
        {commercial2}
 \bibinfo{year}{2016}\natexlab{c}.
\newblock \bibinfo{title}{Perspecsys: A Blue Coat Company}.
\newblock \bibinfo{howpublished}{\url{http://perspecsys.com/}}.
\newblock


\bibitem[dat(2016b)]%
        {databreach5}
 \bibinfo{year}{2016}\natexlab{b}.
\newblock \bibinfo{title}{Yahoo Data Breach}.
\newblock
  \bibinfo{howpublished}{\url{https://money.cnn.com/2016/09/22/technology/yahoo-data-breach/}}.
\newblock


\bibitem[dat(2017a)]%
        {databreach1}
 \bibinfo{year}{2017}\natexlab{a}.
\newblock \bibinfo{title}{Wikipedia. Sony pictures entertainment hack}.
\newblock \bibinfo{howpublished}{\url{https://en.
  wikipedia.org/wiki/Sony_Pictures_Entertainment_hack/}}.
\newblock


\bibitem[dat(2017b)]%
        {databreach2}
 \bibinfo{year}{2017}\natexlab{b}.
\newblock \bibinfo{title}{Wikipedia. Target customer privacy}.
\newblock
  \bibinfo{howpublished}{\url{https://en.wikipedia.org/wiki/Target_Corporation\#
  Customer_privacy/}}.
\newblock


\bibitem[dat(2018)]%
        {databreach4}
 \bibinfo{year}{2018}\natexlab{}.
\newblock \bibinfo{title}{Wikipedia. Facebook–Cambridge Analytica data
  scandal.}
\newblock
  \bibinfo{howpublished}{\url{https://en.wikipedia.org/wiki/Facebook\%E2\%80\%93Cambridge\_Analytica\_data\_scanda/}}.
\newblock


\bibitem[dat(2019)]%
        {databreach6}
 \bibinfo{year}{2019}\natexlab{}.
\newblock \bibinfo{title}{Facebook data breach}.
\newblock
  \bibinfo{howpublished}{\url{https://www.forbes.com/sites/daveywinder/2019/09/05/facebook-security-snafu-exposes-419-million-user-phone-numbers/}}.
\newblock


\bibitem[sto(2020)]%
        {storage:not:bottleneck}
 \bibinfo{year}{2020}\natexlab{}.
\newblock \bibinfo{title}{AWS Pricing}.
\newblock \bibinfo{howpublished}{\url{https://aws.amazon.com/s3/pricing/}}.
\newblock


\bibitem[Cen(2020)]%
        {Census1}
 \bibinfo{year}{2020}\natexlab{}.
\newblock \bibinfo{title}{Disclosure Avoidance and the 2020 Census}.
\newblock
  \bibinfo{howpublished}{\url{https://www.census.gov/about/policies/privacy/statistical_safeguards/disclosure-avoidance-2020-census.html/}}.
\newblock


\bibitem[ran(2020)]%
        {rank}
 \bibinfo{year}{2020}\natexlab{}.
\newblock \bibinfo{title}{Ranking}.
\newblock \bibinfo{howpublished}{\url{https://en.wikipedia.org/wiki/Ranking/}}.
\newblock


\bibitem[A.Asuncion and D.Newman(2010)]%
        {UCI}
\bibfield{author}{\bibinfo{person}{A.Asuncion} {and}
  \bibinfo{person}{D.Newman}.} \bibinfo{year}{2010}\natexlab{}.
\newblock \bibinfo{title}{UCI Machine Learning Repository}.
\newblock
\newblock


\bibitem[Acharya et~al\mbox{.}(2019)]%
        {acharya2019contextaware}
\bibfield{author}{\bibinfo{person}{Jayadev Acharya}, \bibinfo{person}{Keith
  Bonawitz}, \bibinfo{person}{Peter Kairouz}, \bibinfo{person}{Daniel Ramage},
  {and} \bibinfo{person}{Ziteng Sun}.} \bibinfo{year}{2019}\natexlab{}.
\newblock \bibinfo{title}{Context-Aware Local Differential Privacy}.
\newblock
\newblock
\showeprint[arxiv]{1911.00038}~[cs.LG]


\bibitem[Agarwal et~al\mbox{.}(2018a)]%
        {EncryptedDP}
\bibfield{author}{\bibinfo{person}{A. Agarwal}, \bibinfo{person}{M. Herlihy},
  \bibinfo{person}{S. Kamara}, {and} \bibinfo{person}{Tarik Moataz}.}
  \bibinfo{year}{2018}\natexlab{a}.
\newblock \showarticletitle{Encrypted Databases for Differential Privacy}.
\newblock \bibinfo{journal}{\emph{Proceedings on Privacy Enhancing
  Technologies}}  \bibinfo{volume}{2019} (\bibinfo{year}{2018}),
  \bibinfo{pages}{170 -- 190}.
\newblock


\bibitem[Agarwal et~al\mbox{.}(2018b)]%
        {cpSGD}
\bibfield{author}{\bibinfo{person}{Naman Agarwal},
  \bibinfo{person}{Ananda~Theertha Suresh}, \bibinfo{person}{Felix Xinnan~X
  Yu}, \bibinfo{person}{Sanjiv Kumar}, {and} \bibinfo{person}{Brendan
  McMahan}.} \bibinfo{year}{2018}\natexlab{b}.
\newblock \showarticletitle{cpSGD: Communication-efficient and
  differentially-private distributed SGD}. In
  \bibinfo{booktitle}{\emph{Advances in Neural Information Processing
  Systems}}, \bibfield{editor}{\bibinfo{person}{S.~Bengio},
  \bibinfo{person}{H.~Wallach}, \bibinfo{person}{H.~Larochelle},
  \bibinfo{person}{K.~Grauman}, \bibinfo{person}{N.~Cesa-Bianchi}, {and}
  \bibinfo{person}{R.~Garnett}} (Eds.), Vol.~\bibinfo{volume}{31}.
  \bibinfo{publisher}{Curran Associates, Inc.}
\newblock
\urldef\tempurl%
\url{https://proceedings.neurips.cc/paper/2018/file/21ce689121e39821d07d04faab328370-Paper.pdf}
\showURL{%
\tempurl}


\bibitem[Agrawal et~al\mbox{.}(2004)]%
        {Db3}
\bibfield{author}{\bibinfo{person}{Rakesh Agrawal}, \bibinfo{person}{Jerry
  Kiernan}, \bibinfo{person}{Ramakrishnan Srikant}, {and}
  \bibinfo{person}{Yirong Xu}.} \bibinfo{year}{2004}\natexlab{}.
\newblock \showarticletitle{Order Preserving Encryption for Numeric Data}. In
  \bibinfo{booktitle}{\emph{Proceedings of the 2004 ACM SIGMOD International
  Conference on Management of Data}} (Paris, France)
  \emph{(\bibinfo{series}{SIGMOD ’04})}. \bibinfo{publisher}{Association for
  Computing Machinery}, \bibinfo{address}{New York, NY, USA},
  \bibinfo{pages}{563–574}.
\newblock
\showISBNx{1581138598}
\urldef\tempurl%
\url{https://doi.org/10.1145/1007568.1007632}
\showDOI{\tempurl}


\bibitem[{Alvim} et~al\mbox{.}(2018)]%
        {metricDP}
\bibfield{author}{\bibinfo{person}{M. {Alvim}}, \bibinfo{person}{K.
  {Chatzikokolakis}}, \bibinfo{person}{C. {Palamidessi}}, {and}
  \bibinfo{person}{A. {Pazii}}.} \bibinfo{year}{2018}\natexlab{}.
\newblock \showarticletitle{Invited Paper: Local Differential Privacy on Metric
  Spaces: Optimizing the Trade-Off with Utility}. In
  \bibinfo{booktitle}{\emph{2018 IEEE 31st Computer Security Foundations
  Symposium (CSF)}}. \bibinfo{pages}{262--267}.
\newblock


\bibitem[Amjad et~al\mbox{.}(2019)]%
        {kamara}
\bibfield{author}{\bibinfo{person}{Ghous Amjad}, \bibinfo{person}{Seny Kamara},
  {and} \bibinfo{person}{Tarik Moataz}.} \bibinfo{year}{2019}\natexlab{}.
\newblock \showarticletitle{Breach-Resistant Structured Encryption}.
\newblock \bibinfo{journal}{\emph{Proceedings on Privacy Enhancing
  Technologies}}  \bibinfo{volume}{2019} (\bibinfo{date}{01}
  \bibinfo{year}{2019}), \bibinfo{pages}{245--265}.
\newblock
\urldef\tempurl%
\url{https://doi.org/10.2478/popets-2019-0014}
\showDOI{\tempurl}


\bibitem[Andr\'{e}s et~al\mbox{.}(2013)]%
        {Geo}
\bibfield{author}{\bibinfo{person}{Miguel~E. Andr\'{e}s},
  \bibinfo{person}{Nicol\'{a}s~E. Bordenabe}, \bibinfo{person}{Konstantinos
  Chatzikokolakis}, {and} \bibinfo{person}{Catuscia Palamidessi}.}
  \bibinfo{year}{2013}\natexlab{}.
\newblock \showarticletitle{Geo-Indistinguishability: Differential Privacy for
  Location-Based Systems}. In \bibinfo{booktitle}{\emph{Proceedings of the 2013
  ACM SIGSAC Conference on Computer \& Communications Security}} (Berlin,
  Germany) \emph{(\bibinfo{series}{CCS '13})}. \bibinfo{publisher}{Association
  for Computing Machinery}, \bibinfo{address}{New York, NY, USA},
  \bibinfo{pages}{901–914}.
\newblock
\showISBNx{9781450324779}
\urldef\tempurl%
\url{https://doi.org/10.1145/2508859.2516735}
\showDOI{\tempurl}


\bibitem[Arasu et~al\mbox{.}(2013)]%
        {academy1}
\bibfield{author}{\bibinfo{person}{Arvind Arasu}, \bibinfo{person}{Spyros
  Blanas}, \bibinfo{person}{Ken Eguro}, \bibinfo{person}{Raghav Kaushik},
  \bibinfo{person}{Donald Kossmann}, \bibinfo{person}{Ravi Ramamurthy}, {and}
  \bibinfo{person}{Ramaratnam Venkatesan}.} \bibinfo{year}{2013}\natexlab{}.
\newblock \showarticletitle{Orthogonal security with cipherbase}. In
  \bibinfo{booktitle}{\emph{Proc. of the 6th CIDR, Asilomar, CA}}.
\newblock


\bibitem[Bater et~al\mbox{.}(2018)]%
        {Shrinkwrap}
\bibfield{author}{\bibinfo{person}{Johes Bater}, \bibinfo{person}{Xi He},
  \bibinfo{person}{William Ehrich}, \bibinfo{person}{Ashwin Machanavajjhala},
  {and} \bibinfo{person}{Jennie Rogers}.} \bibinfo{year}{2018}\natexlab{}.
\newblock \showarticletitle{Shrinkwrap: Efficient SQL Query Processing in
  Differentially Private Data Federations}.
\newblock \bibinfo{journal}{\emph{Proc. VLDB Endow.}} \bibinfo{volume}{12},
  \bibinfo{number}{3} (\bibinfo{date}{Nov.} \bibinfo{year}{2018}),
  \bibinfo{pages}{307–320}.
\newblock
\showISSN{2150-8097}
\urldef\tempurl%
\url{https://doi.org/10.14778/3291264.3291274}
\showDOI{\tempurl}


\bibitem[Becker et~al\mbox{.}(1996)]%
        {Btree}
\bibfield{author}{\bibinfo{person}{Bruno Becker}, \bibinfo{person}{Stephan
  Gschwind}, \bibinfo{person}{Thomas Ohler}, \bibinfo{person}{Bernhard Seeger},
  {and} \bibinfo{person}{Peter Widmayer}.} \bibinfo{year}{1996}\natexlab{}.
\newblock \showarticletitle{An Asymptotically Optimal Multiversion B-Tree}.
\newblock \bibinfo{journal}{\emph{The VLDB Journal}} \bibinfo{volume}{5},
  \bibinfo{number}{4} (\bibinfo{date}{dec} \bibinfo{year}{1996}),
  \bibinfo{pages}{264–275}.
\newblock
\showISSN{1066-8888}
\urldef\tempurl%
\url{https://doi.org/10.1007/s007780050028}
\showDOI{\tempurl}


\bibitem[Bellare et~al\mbox{.}(2007)]%
        {DTE}
\bibfield{author}{\bibinfo{person}{Mihir Bellare}, \bibinfo{person}{Alexandra
  Boldyreva}, {and} \bibinfo{person}{Adam O'Neill}.}
  \bibinfo{year}{2007}\natexlab{}.
\newblock \showarticletitle{Deterministic and Efficiently Searchable
  Encryption}. In \bibinfo{booktitle}{\emph{Advances in Cryptology - CRYPTO
  2007}}, \bibfield{editor}{\bibinfo{person}{Alfred Menezes}} (Ed.).
  \bibinfo{publisher}{Springer Berlin Heidelberg}, \bibinfo{address}{Berlin,
  Heidelberg}, \bibinfo{pages}{535--552}.
\newblock
\showISBNx{978-3-540-74143-5}


\bibitem[{Bethencourt} et~al\mbox{.}(2007)]%
        {abe}
\bibfield{author}{\bibinfo{person}{J. {Bethencourt}}, \bibinfo{person}{A.
  {Sahai}}, {and} \bibinfo{person}{B. {Waters}}.}
  \bibinfo{year}{2007}\natexlab{}.
\newblock \showarticletitle{Ciphertext-Policy Attribute-Based Encryption}. In
  \bibinfo{booktitle}{\emph{2007 IEEE Symposium on Security and Privacy (SP
  '07)}}. \bibinfo{pages}{321--334}.
\newblock


\bibitem[Bindschaedler et~al\mbox{.}(2018)]%
        {Tao}
\bibfield{author}{\bibinfo{person}{Vincent Bindschaedler},
  \bibinfo{person}{Paul Grubbs}, \bibinfo{person}{David Cash},
  \bibinfo{person}{Thomas Ristenpart}, {and} \bibinfo{person}{Vitaly
  Shmatikov}.} \bibinfo{year}{2018}\natexlab{}.
\newblock \showarticletitle{The Tao of Inference in Privacy-Protected
  Databases}.
\newblock \bibinfo{journal}{\emph{Proc. VLDB Endow.}} \bibinfo{volume}{11},
  \bibinfo{number}{11} (\bibinfo{date}{July} \bibinfo{year}{2018}),
  \bibinfo{pages}{1715–1728}.
\newblock
\showISSN{2150-8097}
\urldef\tempurl%
\url{https://doi.org/10.14778/3236187.3236217}
\showDOI{\tempurl}


\bibitem[Bittau et~al\mbox{.}(2017)]%
        {Prochlo}
\bibfield{author}{\bibinfo{person}{Andrea Bittau}, \bibinfo{person}{\'{U}lfar
  Erlingsson}, \bibinfo{person}{Petros Maniatis}, \bibinfo{person}{Ilya
  Mironov}, \bibinfo{person}{Ananth Raghunathan}, \bibinfo{person}{David Lie},
  \bibinfo{person}{Mitch Rudominer}, \bibinfo{person}{Ushasree Kode},
  \bibinfo{person}{Julien Tinnes}, {and} \bibinfo{person}{Bernhard Seefeld}.}
  \bibinfo{year}{2017}\natexlab{}.
\newblock \showarticletitle{Prochlo: Strong Privacy for Analytics in the
  Crowd}. In \bibinfo{booktitle}{\emph{Proceedings of the 26th Symposium on
  Operating Systems Principles}} (Shanghai, China) \emph{(\bibinfo{series}{SOSP
  '17})}. \bibinfo{publisher}{Association for Computing Machinery},
  \bibinfo{address}{New York, NY, USA}, \bibinfo{pages}{441–459}.
\newblock
\showISBNx{9781450350853}
\urldef\tempurl%
\url{https://doi.org/10.1145/3132747.3132769}
\showDOI{\tempurl}


\bibitem[B{\"o}hler and Kerschbaum(2020)]%
        {HH}
\bibfield{author}{\bibinfo{person}{Jonas B{\"o}hler} {and}
  \bibinfo{person}{Florian Kerschbaum}.} \bibinfo{year}{2020}\natexlab{}.
\newblock \showarticletitle{Secure Multi-party Computation of Differentially
  Private Median}. In \bibinfo{booktitle}{\emph{29th {USENIX} Security
  Symposium ({USENIX} Security 20)}}. \bibinfo{publisher}{{USENIX}
  Association}, \bibinfo{pages}{2147--2164}.
\newblock
\showISBNx{978-1-939133-17-5}
\urldef\tempurl%
\url{https://www.usenix.org/conference/usenixsecurity20/presentation/boehler}
\showURL{%
\tempurl}


\bibitem[Boldyreva et~al\mbox{.}(2009)]%
        {Boldyreva}
\bibfield{author}{\bibinfo{person}{Alexandra Boldyreva},
  \bibinfo{person}{Nathan Chenette}, \bibinfo{person}{Younho Lee}, {and}
  \bibinfo{person}{Adam O'Neill}.} \bibinfo{year}{2009}\natexlab{}.
\newblock \showarticletitle{Order-Preserving Symmetric Encryption}. In
  \bibinfo{booktitle}{\emph{Proceedings of the 28th Annual International
  Conference on Advances in Cryptology - EUROCRYPT 2009 - Volume 5479}}.
  \bibinfo{publisher}{Springer-Verlag}, \bibinfo{address}{Berlin, Heidelberg},
  \bibinfo{pages}{224–241}.
\newblock
\showISBNx{9783642010002}


\bibitem[Boldyreva et~al\mbox{.}(2011)]%
        {ModularOPE2}
\bibfield{author}{\bibinfo{person}{Alexandra Boldyreva},
  \bibinfo{person}{Nathan Chenette}, {and} \bibinfo{person}{Adam O'Neill}.}
  \bibinfo{year}{2011}\natexlab{}.
\newblock \showarticletitle{Order-Preserving Encryption Revisited: Improved
  Security Analysis and Alternative Solutions}. In
  \bibinfo{booktitle}{\emph{Proceedings of the 31st Annual Conference on
  Advances in Cryptology}} (Santa Barbara, CA)
  \emph{(\bibinfo{series}{CRYPTO'11})}. \bibinfo{publisher}{Springer-Verlag},
  \bibinfo{address}{Berlin, Heidelberg}, \bibinfo{pages}{578–595}.
\newblock
\showISBNx{9783642227912}


\bibitem[{Borgs} et~al\mbox{.}(2018)]%
        {linear1}
\bibfield{author}{\bibinfo{person}{C. {Borgs}}, \bibinfo{person}{J. {Chayes}},
  \bibinfo{person}{A. {Smith}}, {and} \bibinfo{person}{I. {Zadik}}.}
  \bibinfo{year}{2018}\natexlab{}.
\newblock \showarticletitle{Revealing Network Structure, Confidentially:
  Improved Rates for Node-Private Graphon Estimation}. In
  \bibinfo{booktitle}{\emph{2018 IEEE 59th Annual Symposium on Foundations of
  Computer Science (FOCS)}}. \bibinfo{pages}{533--543}.
\newblock


\bibitem[Chan et~al\mbox{.}(2013)]%
        {academy2}
\bibfield{author}{\bibinfo{person}{Ellick~M. Chan}, \bibinfo{person}{Peifung~E.
  Lam}, {and} \bibinfo{person}{John~C. Mitchell}.}
  \bibinfo{year}{2013}\natexlab{}.
\newblock \showarticletitle{Understanding the Challenges with Medical Data
  Segmentation for Privacy}. In \bibinfo{booktitle}{\emph{Proceedings of the
  2013 USENIX Conference on Safety, Security, Privacy and Interoperability of
  Health Information Technologies}} (Washington, DC)
  \emph{(\bibinfo{series}{HealthTech’13})}. \bibinfo{publisher}{USENIX
  Association}, \bibinfo{address}{USA}, \bibinfo{pages}{2}.
\newblock


\bibitem[Chan et~al\mbox{.}(2019)]%
        {DPORam1}
\bibfield{author}{\bibinfo{person}{TH~Hubert Chan}, \bibinfo{person}{Kai-Min
  Chung}, \bibinfo{person}{Bruce~M Maggs}, {and} \bibinfo{person}{Elaine Shi}.}
  \bibinfo{year}{2019}\natexlab{}.
\newblock \showarticletitle{Foundations of differentially oblivious
  algorithms}. In \bibinfo{booktitle}{\emph{Proceedings of the Thirtieth Annual
  ACM-SIAM Symposium on Discrete Algorithms}}. SIAM,
  \bibinfo{pages}{2448--2467}.
\newblock


\bibitem[Chan et~al\mbox{.}(2012)]%
        {Shi2}
\bibfield{author}{\bibinfo{person}{T.~H.~Hubert Chan}, \bibinfo{person}{Elaine
  Shi}, {and} \bibinfo{person}{Dawn Song}.} \bibinfo{year}{2012}\natexlab{}.
\newblock \showarticletitle{Privacy-Preserving Stream Aggregation with Fault
  Tolerance}. In \bibinfo{booktitle}{\emph{Financial Cryptography and Data
  Security}}, \bibfield{editor}{\bibinfo{person}{Angelos~D. Keromytis}} (Ed.).
  \bibinfo{publisher}{Springer Berlin Heidelberg}, \bibinfo{address}{Berlin,
  Heidelberg}, \bibinfo{pages}{200--214}.
\newblock
\showISBNx{978-3-642-32946-3}


\bibitem[Chatzikokolakis et~al\mbox{.}(2013)]%
        {dx}
\bibfield{author}{\bibinfo{person}{Konstantinos Chatzikokolakis},
  \bibinfo{person}{Miguel~E. Andr{\'e}s}, \bibinfo{person}{Nicol{\'a}s~Emilio
  Bordenabe}, {and} \bibinfo{person}{Catuscia Palamidessi}.}
  \bibinfo{year}{2013}\natexlab{}.
\newblock \showarticletitle{Broadening the Scope of Differential Privacy Using
  Metrics}. In \bibinfo{booktitle}{\emph{Privacy Enhancing Technologies}},
  \bibfield{editor}{\bibinfo{person}{Emiliano De~Cristofaro} {and}
  \bibinfo{person}{Matthew Wright}} (Eds.). \bibinfo{publisher}{Springer Berlin
  Heidelberg}, \bibinfo{address}{Berlin, Heidelberg}, \bibinfo{pages}{82--102}.
\newblock


\bibitem[Chatzikokolakis et~al\mbox{.}(2017)]%
        {linear2}
\bibfield{author}{\bibinfo{person}{Kostas Chatzikokolakis},
  \bibinfo{person}{Ehab Elsalamouny}, {and} \bibinfo{person}{Catuscia
  Palamidessi}.} \bibinfo{year}{2017}\natexlab{}.
\newblock \showarticletitle{Efficient Utility Improvement for Location
  Privacy}.
\newblock \bibinfo{journal}{\emph{Proceedings on Privacy Enhancing
  Technologies}}  \bibinfo{volume}{2017} (\bibinfo{date}{10}
  \bibinfo{year}{2017}).
\newblock
\urldef\tempurl%
\url{https://doi.org/10.1515/popets-2017-0051}
\showDOI{\tempurl}


\bibitem[Chenette et~al\mbox{.}(2016)]%
        {ORE3}
\bibfield{author}{\bibinfo{person}{Nathan Chenette}, \bibinfo{person}{Kevin
  Lewi}, \bibinfo{person}{Stephen~A. Weis}, {and} \bibinfo{person}{David~J.
  Wu}.} \bibinfo{year}{2016}\natexlab{}.
\newblock \showarticletitle{Practical Order-Revealing Encryption with Limited
  Leakage}. In \bibinfo{booktitle}{\emph{Revised Selected Papers of the 23rd
  International Conference on Fast Software Encryption - Volume 9783}} (Bochum,
  Germany) \emph{(\bibinfo{series}{FSE 2016})}.
  \bibinfo{publisher}{Springer-Verlag}, \bibinfo{address}{Berlin, Heidelberg},
  \bibinfo{pages}{474–493}.
\newblock
\showISBNx{9783662529928}
\urldef\tempurl%
\url{https://doi.org/10.1007/978-3-662-52993-5_24}
\showDOI{\tempurl}


\bibitem[Cheu et~al\mbox{.}(2019)]%
        {shuffle2}
\bibfield{author}{\bibinfo{person}{Albert Cheu}, \bibinfo{person}{Adam Smith},
  \bibinfo{person}{Jonathan Ullman}, \bibinfo{person}{David Zeber}, {and}
  \bibinfo{person}{Maxim Zhilyaev}.} \bibinfo{year}{2019}\natexlab{}.
\newblock \showarticletitle{Distributed Differential Privacy via Shuffling}. In
  \bibinfo{booktitle}{\emph{Advances in Cryptology -- EUROCRYPT 2019}},
  \bibfield{editor}{\bibinfo{person}{Yuval Ishai} {and}
  \bibinfo{person}{Vincent Rijmen}} (Eds.). \bibinfo{publisher}{Springer
  International Publishing}, \bibinfo{address}{Cham},
  \bibinfo{pages}{375--403}.
\newblock
\showISBNx{978-3-030-17653-2}


\bibitem[Choquette-Choo et~al\mbox{.}(2021)]%
        {capc}
\bibfield{author}{\bibinfo{person}{Christopher~A. Choquette-Choo},
  \bibinfo{person}{Natalie Dullerud}, \bibinfo{person}{Adam Dziedzic},
  \bibinfo{person}{Yunxiang Zhang}, \bibinfo{person}{Somesh Jha},
  \bibinfo{person}{Nicolas Papernot}, {and} \bibinfo{person}{Xiao Wang}.}
  \bibinfo{year}{2021}\natexlab{}.
\newblock \showarticletitle{Ca{\{}PC{\}} Learning: Confidential and Private
  Collaborative Learning}. In \bibinfo{booktitle}{\emph{International
  Conference on Learning Representations}}.
\newblock
\urldef\tempurl%
\url{https://openreview.net/forum?id=h2EbJ4_wMVq}
\showURL{%
\tempurl}


\bibitem[Cormen et~al\mbox{.}(2009)]%
        {wm}
\bibfield{author}{\bibinfo{person}{Thomas~H. Cormen},
  \bibinfo{person}{Charles~E. Leiserson}, \bibinfo{person}{Ronald~L. Rivest},
  {and} \bibinfo{person}{Clifford Stein}.} \bibinfo{year}{2009}\natexlab{}.
\newblock \bibinfo{booktitle}{\emph{Introduction to Algorithms, Third Edition}
  (\bibinfo{edition}{3rd} ed.)}.
\newblock \bibinfo{publisher}{The MIT Press}.
\newblock
\showISBNx{0262033844}


\bibitem[Deep et~al\mbox{.}(2020)]%
        {deep2020matrix}
\bibfield{author}{\bibinfo{person}{Shaleen Deep}, \bibinfo{person}{Xiao Hu},
  {and} \bibinfo{person}{Paraschos Koutris}.} \bibinfo{year}{2020}\natexlab{}.
\newblock \showarticletitle{Join Project Query Evaluation using Matrix
  Multiplication}. In \bibinfo{booktitle}{\emph{Proceedings of the 39th ACM
  SIGMOD-SIGACT-SIGAI Symposium on Principles of Database Systems}}.
\newblock


\bibitem[Deep et~al\mbox{.}(2021a)]%
        {deep2021enumeration}
\bibfield{author}{\bibinfo{person}{Shaleen Deep}, \bibinfo{person}{Xiao Hu},
  {and} \bibinfo{person}{Paraschos Koutris}.} \bibinfo{year}{2021}\natexlab{a}.
\newblock \showarticletitle{Enumeration Algorithms for Conjunctive Queries with
  Projection}. In \bibinfo{booktitle}{\emph{24th International Conference on
  Database Theory (ICDT 2021)}}. Schloss Dagstuhl-Leibniz-Zentrum f{\"u}r
  Informatik.
\newblock


\bibitem[Deep et~al\mbox{.}(2021b)]%
        {deep2021space}
\bibfield{author}{\bibinfo{person}{Shaleen Deep}, \bibinfo{person}{Xiao Hu},
  {and} \bibinfo{person}{Paraschos Koutris}.} \bibinfo{year}{2021}\natexlab{b}.
\newblock \showarticletitle{Space-Time Tradeoffs for Answering Boolean
  Conjunctive Queries}.
\newblock \bibinfo{journal}{\emph{arXiv preprint arXiv:2109.10889}}
  (\bibinfo{year}{2021}).
\newblock


\bibitem[Deep and Koutris(2018)]%
        {deep2018compressed}
\bibfield{author}{\bibinfo{person}{Shaleen Deep} {and}
  \bibinfo{person}{Paraschos Koutris}.} \bibinfo{year}{2018}\natexlab{}.
\newblock \showarticletitle{Compressed representations of conjunctive query
  results}. In \bibinfo{booktitle}{\emph{Proceedings of the 37th ACM
  SIGMOD-SIGACT-SIGAI Symposium on Principles of Database Systems}}.
  \bibinfo{pages}{307--322}.
\newblock


\bibitem[Deep and Koutris(2021)]%
        {deep2021ranked}
\bibfield{author}{\bibinfo{person}{Shaleen Deep} {and}
  \bibinfo{person}{Paraschos Koutris}.} \bibinfo{year}{2021}\natexlab{}.
\newblock \showarticletitle{Ranked Enumeration of Conjunctive Query Results}.
  In \bibinfo{booktitle}{\emph{24th International Conference on Database Theory
  (ICDT 2021)}}. Schloss Dagstuhl-Leibniz-Zentrum f{\"u}r Informatik.
\newblock


\bibitem[DeWitt et~al\mbox{.}(1991)]%
        {bandjoins}
\bibfield{author}{\bibinfo{person}{David~J. DeWitt},
  \bibinfo{person}{Jeffrey~F. Naughton}, {and} \bibinfo{person}{Donovan~A.
  Schneider}.} \bibinfo{year}{1991}\natexlab{}.
\newblock \showarticletitle{An Evaluation of Non-Equijoin Algorithms}. In
  \bibinfo{booktitle}{\emph{Proceedings of the 17th International Conference on
  Very Large Data Bases}} \emph{(\bibinfo{series}{VLDB '91})}.
  \bibinfo{publisher}{Morgan Kaufmann Publishers Inc.}, \bibinfo{address}{San
  Francisco, CA, USA}, \bibinfo{pages}{443–452}.
\newblock
\showISBNx{1558601503}


\bibitem[Ding et~al\mbox{.}(2017)]%
        {Microsoft}
\bibfield{author}{\bibinfo{person}{Bolin Ding}, \bibinfo{person}{Janardhan
  Kulkarni}, {and} \bibinfo{person}{Sergey Yekhanin}.}
  \bibinfo{year}{2017}\natexlab{}.
\newblock \showarticletitle{Collecting Telemetry Data Privately}.
\newblock In \bibinfo{booktitle}{\emph{Advances in Neural Information
  Processing Systems 30}}, \bibfield{editor}{\bibinfo{person}{I.~Guyon},
  \bibinfo{person}{U.~V. Luxburg}, \bibinfo{person}{S.~Bengio},
  \bibinfo{person}{H.~Wallach}, \bibinfo{person}{R.~Fergus},
  \bibinfo{person}{S.~Vishwanathan}, {and} \bibinfo{person}{R.~Garnett}}
  (Eds.). \bibinfo{publisher}{Curran Associates, Inc.},
  \bibinfo{pages}{3571--3580}.
\newblock
\urldef\tempurl%
\url{http://papers.nips.cc/paper/6948-collecting-telemetry-data-privately.pdf}
\showURL{%
\tempurl}


\bibitem[Dreseler et~al\mbox{.}(2020)]%
        {TPCH20}
\bibfield{author}{\bibinfo{person}{Markus Dreseler}, \bibinfo{person}{Martin
  Boissier}, \bibinfo{person}{Tilmann Rabl}, {and} \bibinfo{person}{Matthias
  Uflacker}.} \bibinfo{year}{2020}\natexlab{}.
\newblock \showarticletitle{Quantifying TPC-H Choke Points and Their
  Optimizations}.
\newblock \bibinfo{journal}{\emph{Proc. VLDB Endow.}} \bibinfo{volume}{13},
  \bibinfo{number}{8} (\bibinfo{date}{apr} \bibinfo{year}{2020}),
  \bibinfo{pages}{1206–1220}.
\newblock
\showISSN{2150-8097}
\urldef\tempurl%
\url{https://doi.org/10.14778/3389133.3389138}
\showDOI{\tempurl}


\bibitem[Durak et~al\mbox{.}(2016)]%
        {Durak}
\bibfield{author}{\bibinfo{person}{F.~Bet\"{u}l Durak},
  \bibinfo{person}{Thomas~M. DuBuisson}, {and} \bibinfo{person}{David Cash}.}
  \bibinfo{year}{2016}\natexlab{}.
\newblock \showarticletitle{What Else is Revealed by Order-Revealing
  Encryption?}. In \bibinfo{booktitle}{\emph{Proceedings of the 2016 ACM SIGSAC
  Conference on Computer and Communications Security}} (Vienna, Austria)
  \emph{(\bibinfo{series}{CCS ’16})}. \bibinfo{publisher}{Association for
  Computing Machinery}, \bibinfo{address}{New York, NY, USA},
  \bibinfo{pages}{1155–1166}.
\newblock
\showISBNx{9781450341394}
\urldef\tempurl%
\url{https://doi.org/10.1145/2976749.2978379}
\showDOI{\tempurl}


\bibitem[Dwork and Roth(2014)]%
        {Dwork}
\bibfield{author}{\bibinfo{person}{Cynthia Dwork} {and} \bibinfo{person}{Aaron
  Roth}.} \bibinfo{year}{2014}\natexlab{}.
\newblock \showarticletitle{The Algorithmic Foundations of Differential
  Privacy}.
\newblock \bibinfo{journal}{\emph{Found. Trends Theor. Comput. Sci.}}
  \bibinfo{volume}{9}, \bibinfo{number}{3\&\#8211;4} (\bibinfo{date}{Aug.}
  \bibinfo{year}{2014}), \bibinfo{pages}{211--407}.
\newblock
\showISSN{1551-305X}


\bibitem[Erlingsson et~al\mbox{.}(2014)]%
        {Rappor1}
\bibfield{author}{\bibinfo{person}{{\'U}lfar Erlingsson},
  \bibinfo{person}{Vasyl Pihur}, {and} \bibinfo{person}{Aleksandra Korolova}.}
  \bibinfo{year}{2014}\natexlab{}.
\newblock \showarticletitle{Rappor: Randomized aggregatable privacy-preserving
  ordinal response}. In \bibinfo{booktitle}{\emph{CCS}}.
\newblock


\bibitem[Fanti et~al\mbox{.}(2015)]%
        {Rappor2}
\bibfield{author}{\bibinfo{person}{Giulia Fanti}, \bibinfo{person}{Vasyl
  Pihur}, {and} \bibinfo{person}{{\'U}lfar Erlingsson}.}
  \bibinfo{year}{2015}\natexlab{}.
\newblock \bibinfo{title}{Building a RAPPOR with the Unknown:
  Privacy-Preserving Learning of Associations and Data Dictionaries}.
\newblock
\newblock
\showeprint[arxiv]{1503.01214}~[cs.CR]


\bibitem[{Fuller} et~al\mbox{.}(2017)]%
        {sok}
\bibfield{author}{\bibinfo{person}{B. {Fuller}}, \bibinfo{person}{M. {Varia}},
  \bibinfo{person}{A. {Yerukhimovich}}, \bibinfo{person}{E. {Shen}},
  \bibinfo{person}{A. {Hamlin}}, \bibinfo{person}{V. {Gadepally}},
  \bibinfo{person}{R. {Shay}}, \bibinfo{person}{J.~D. {Mitchell}}, {and}
  \bibinfo{person}{R.~K. {Cunningham}}.} \bibinfo{year}{2017}\natexlab{}.
\newblock \showarticletitle{SoK: Cryptographically Protected Database Search}.
  In \bibinfo{booktitle}{\emph{2017 IEEE Symposium on Security and Privacy
  (SP)}}. \bibinfo{pages}{172--191}.
\newblock
\showISSN{2375-1207}
\urldef\tempurl%
\url{https://doi.org/10.1109/SP.2017.10}
\showDOI{\tempurl}


\bibitem[{Ge} and {Zdonik}(2007)]%
        {Db5}
\bibfield{author}{\bibinfo{person}{T. {Ge}} {and} \bibinfo{person}{S.
  {Zdonik}}.} \bibinfo{year}{2007}\natexlab{}.
\newblock \showarticletitle{Fast, Secure Encryption for Indexing in a
  Column-Oriented DBMS}. In \bibinfo{booktitle}{\emph{2007 IEEE 23rd
  International Conference on Data Engineering}}. \bibinfo{pages}{676--685}.
\newblock


\bibitem[Goldreich(2006)]%
        {Oded}
\bibfield{author}{\bibinfo{person}{Oded Goldreich}.}
  \bibinfo{year}{2006}\natexlab{}.
\newblock \bibinfo{booktitle}{\emph{Foundations of Cryptography: Volume 1}}.
\newblock \bibinfo{publisher}{Cambridge University Press},
  \bibinfo{address}{USA}.
\newblock
\showISBNx{0521035368}


\bibitem[Greenberg(2016)]%
        {Apple}
\bibfield{author}{\bibinfo{person}{Andy Greenberg}.}
  \bibinfo{year}{2016}\natexlab{}.
\newblock \showarticletitle{Apple's `Differential Privacy' Is About Collecting
  Your Data---But Not {\em Your} Data}.
\newblock \bibinfo{journal}{\emph{Wired}} (\bibinfo{date}{Jun 13}
  \bibinfo{year}{2016}).
\newblock


\bibitem[Groce et~al\mbox{.}(2019)]%
        {PSI}
\bibfield{author}{\bibinfo{person}{Adam Groce}, \bibinfo{person}{Peter Rindal},
  {and} \bibinfo{person}{Mike Rosulek}.} \bibinfo{year}{2019}\natexlab{}.
\newblock \showarticletitle{Cheaper Private Set Intersection via Differentially
  Private Leakage}.
\newblock \bibinfo{journal}{\emph{Proceedings on Privacy Enhancing
  Technologies}}  \bibinfo{volume}{2019} (\bibinfo{year}{2019}),
  \bibinfo{pages}{25 -- 6}.
\newblock


\bibitem[Grubbs et~al\mbox{.}(2020)]%
        {pancake}
\bibfield{author}{\bibinfo{person}{Paul Grubbs}, \bibinfo{person}{Anurag
  Khandelwal}, \bibinfo{person}{Marie-Sarah Lacharité}, \bibinfo{person}{Lloyd
  Brown}, \bibinfo{person}{Rachit Li, Lucy~Agarwal}, {and}
  \bibinfo{person}{Thomas Ristenpart}.} \bibinfo{year}{2020}\natexlab{}.
\newblock \bibinfo{title}{Pancake: Frequency Smoothing for Encrypted Data
  Stores}.
\newblock
\newblock


\bibitem[Grubbs et~al\mbox{.}(2018)]%
        {Grubbs2}
\bibfield{author}{\bibinfo{person}{Paul Grubbs}, \bibinfo{person}{Marie-Sarah
  Lacharite}, \bibinfo{person}{Brice Minaud}, {and} \bibinfo{person}{Kenneth~G.
  Paterson}.} \bibinfo{year}{2018}\natexlab{}.
\newblock \showarticletitle{Pump up the Volume: Practical Database
  Reconstruction from Volume Leakage on Range Queries}. In
  \bibinfo{booktitle}{\emph{Proceedings of the 2018 ACM SIGSAC Conference on
  Computer and Communications Security}} (Toronto, Canada)
  \emph{(\bibinfo{series}{CCS ’18})}. \bibinfo{publisher}{Association for
  Computing Machinery}, \bibinfo{address}{New York, NY, USA},
  \bibinfo{pages}{315–331}.
\newblock
\showISBNx{9781450356930}
\urldef\tempurl%
\url{https://doi.org/10.1145/3243734.3243864}
\showDOI{\tempurl}


\bibitem[{Grubbs} et~al\mbox{.}(2019a)]%
        {Grubbs3}
\bibfield{author}{\bibinfo{person}{P. {Grubbs}}, \bibinfo{person}{M.
  {Lacharité}}, \bibinfo{person}{B. {Minaud}}, {and} \bibinfo{person}{K.~G.
  {Paterson}}.} \bibinfo{year}{2019}\natexlab{a}.
\newblock \showarticletitle{Learning to Reconstruct: Statistical Learning
  Theory and Encrypted Database Attacks}. In \bibinfo{booktitle}{\emph{2019
  IEEE Symposium on Security and Privacy (SP)}}. \bibinfo{pages}{1067--1083}.
\newblock


\bibitem[{Grubbs} et~al\mbox{.}(2019b)]%
        {Grubbs19}
\bibfield{author}{\bibinfo{person}{P. {Grubbs}}, \bibinfo{person}{M.
  {Lacharité}}, \bibinfo{person}{B. {Minaud}}, {and} \bibinfo{person}{K.~G.
  {Paterson}}.} \bibinfo{year}{2019}\natexlab{b}.
\newblock \showarticletitle{Learning to Reconstruct: Statistical Learning
  Theory and Encrypted Database Attacks}. In \bibinfo{booktitle}{\emph{2019
  IEEE Symposium on Security and Privacy (SP)}}. \bibinfo{pages}{1067--1083}.
\newblock


\bibitem[Grubbs et~al\mbox{.}(2017)]%
        {GRS17}
\bibfield{author}{\bibinfo{person}{Paul Grubbs}, \bibinfo{person}{Thomas
  Ristenpart}, {and} \bibinfo{person}{Vitaly Shmatikov}.}
  \bibinfo{year}{2017}\natexlab{}.
\newblock \showarticletitle{Why Your Encrypted Database Is Not Secure}. In
  \bibinfo{booktitle}{\emph{Proceedings of the 16th Workshop on Hot Topics in
  Operating Systems}} (Whistler, BC, Canada) \emph{(\bibinfo{series}{HotOS
  '17})}. \bibinfo{publisher}{Association for Computing Machinery},
  \bibinfo{address}{New York, NY, USA}, \bibinfo{pages}{162–168}.
\newblock
\showISBNx{9781450350686}
\urldef\tempurl%
\url{https://doi.org/10.1145/3102980.3103007}
\showDOI{\tempurl}


\bibitem[{Grubbs} et~al\mbox{.}(2017)]%
        {Grubbs1}
\bibfield{author}{\bibinfo{person}{P. {Grubbs}}, \bibinfo{person}{K.
  {Sekniqi}}, \bibinfo{person}{V. {Bindschaedler}}, \bibinfo{person}{M.
  {Naveed}}, {and} \bibinfo{person}{T. {Ristenpart}}.}
  \bibinfo{year}{2017}\natexlab{}.
\newblock \showarticletitle{Leakage-Abuse Attacks against Order-Revealing
  Encryption}. In \bibinfo{booktitle}{\emph{2017 IEEE Symposium on Security and
  Privacy (SP)}}. \bibinfo{pages}{655--672}.
\newblock


\bibitem[Gunter et~al\mbox{.}(2014)]%
        {academy4}
\bibfield{author}{\bibinfo{person}{Carl Gunter}, \bibinfo{person}{Mike berry},
  {and} \bibinfo{person}{Martin French}.} \bibinfo{year}{2014}\natexlab{}.
\newblock \showarticletitle{Decision support for data segmentation (DS2):
  application to pull architectures for HIE}.
\newblock


\bibitem[Gursoy et~al\mbox{.}(2019)]%
        {Ordinal1}
\bibfield{author}{\bibinfo{person}{Mehmet~Emre Gursoy}, \bibinfo{person}{Acar
  Tamersoy}, \bibinfo{person}{Stacey Truex}, \bibinfo{person}{Wenqi Wei}, {and}
  \bibinfo{person}{Ling Liu}.} \bibinfo{year}{2019}\natexlab{}.
\newblock \showarticletitle{Secure and Utility-Aware Data Collection with
  Condensed Local Differential Privacy}.
\newblock \bibinfo{journal}{\emph{ArXiv}}  \bibinfo{volume}{abs/1905.06361}
  (\bibinfo{year}{2019}).
\newblock


\bibitem[Hacig\"{u}m\"{u}\c{s} et~al\mbox{.}(2002)]%
        {Db6}
\bibfield{author}{\bibinfo{person}{Hakan Hacig\"{u}m\"{u}\c{s}},
  \bibinfo{person}{Bala Iyer}, \bibinfo{person}{Chen Li}, {and}
  \bibinfo{person}{Sharad Mehrotra}.} \bibinfo{year}{2002}\natexlab{}.
\newblock \showarticletitle{Executing SQL over Encrypted Data in the
  Database-Service-Provider Model}. In \bibinfo{booktitle}{\emph{Proceedings of
  the 2002 ACM SIGMOD International Conference on Management of Data}}
  (Madison, Wisconsin) \emph{(\bibinfo{series}{SIGMOD ’02})}.
  \bibinfo{publisher}{Association for Computing Machinery},
  \bibinfo{address}{New York, NY, USA}, \bibinfo{pages}{216–227}.
\newblock
\showISBNx{1581134975}
\urldef\tempurl%
\url{https://doi.org/10.1145/564691.564717}
\showDOI{\tempurl}


\bibitem[He et~al\mbox{.}(2014)]%
        {Blowfish}
\bibfield{author}{\bibinfo{person}{Xi He}, \bibinfo{person}{Ashwin
  Machanavajjhala}, {and} \bibinfo{person}{Bolin Ding}.}
  \bibinfo{year}{2014}\natexlab{}.
\newblock \showarticletitle{Blowfish Privacy: Tuning Privacy-Utility Trade-Offs
  Using Policies}. In \bibinfo{booktitle}{\emph{Proceedings of the 2014 ACM
  SIGMOD International Conference on Management of Data}} (Snowbird, Utah, USA)
  \emph{(\bibinfo{series}{SIGMOD ’14})}. \bibinfo{publisher}{Association for
  Computing Machinery}, \bibinfo{address}{New York, NY, USA},
  \bibinfo{pages}{1447–1458}.
\newblock
\showISBNx{9781450323765}
\urldef\tempurl%
\url{https://doi.org/10.1145/2588555.2588581}
\showDOI{\tempurl}


\bibitem[Humphries et~al\mbox{.}(2021)]%
        {keyvalue}
\bibfield{author}{\bibinfo{person}{Thomas Humphries},
  \bibinfo{person}{Rasoul~Akhavan Mahdavi}, \bibinfo{person}{Shannon Veitch},
  {and} \bibinfo{person}{Florian Kerschbaum}.} \bibinfo{year}{2021}\natexlab{}.
\newblock \showarticletitle{Selective {MPC:} Distributed Computation of
  Differentially Private Key Value Statistics}.
\newblock \bibinfo{journal}{\emph{CoRR}}  \bibinfo{volume}{abs/2107.12407}
  (\bibinfo{year}{2021}).
\newblock
\showeprint[arXiv]{2107.12407}
\urldef\tempurl%
\url{https://arxiv.org/abs/2107.12407}
\showURL{%
\tempurl}


\bibitem[IQrypt(2016)]%
        {commercial4}
\bibfield{author}{\bibinfo{person}{IQrypt}.} \bibinfo{year}{2016}\natexlab{}.
\newblock \bibinfo{title}{IQrypt: Encrypt and query your database}.
\newblock
\newblock
\urldef\tempurl%
\url{http://www.iqrypt.com/}
\showURL{%
\tempurl}


\bibitem[Jensen et~al\mbox{.}(2004)]%
        {B+Tree}
\bibfield{author}{\bibinfo{person}{Christian~S. Jensen}, \bibinfo{person}{Dan
  Lin}, {and} \bibinfo{person}{Beng~Chin Ooi}.}
  \bibinfo{year}{2004}\natexlab{}.
\newblock \showarticletitle{Query and Update Efficient B+-Tree Based Indexing
  of Moving Objects}. In \bibinfo{booktitle}{\emph{Proceedings of the Thirtieth
  International Conference on Very Large Data Bases - Volume 30}} (Toronto,
  Canada) \emph{(\bibinfo{series}{VLDB '04})}. \bibinfo{publisher}{VLDB
  Endowment}, \bibinfo{pages}{768–779}.
\newblock
\showISBNx{0120884690}


\bibitem[Johnson et~al\mbox{.}(2018)]%
        {Uber}
\bibfield{author}{\bibinfo{person}{Noah Johnson}, \bibinfo{person}{Joseph~P.
  Near}, {and} \bibinfo{person}{Dawn Song}.} \bibinfo{year}{2018}\natexlab{}.
\newblock \showarticletitle{Towards Practical Differential Privacy for SQL
  Queries}.
\newblock \bibinfo{journal}{\emph{Proc. VLDB Endow.}} \bibinfo{volume}{11},
  \bibinfo{number}{5} (\bibinfo{date}{Jan.} \bibinfo{year}{2018}),
  \bibinfo{pages}{526--539}.
\newblock
\showISSN{2150-8097}
\urldef\tempurl%
\url{https://doi.org/10.1145/3187009.3177733}
\showDOI{\tempurl}


\bibitem[J{\'o}nsson et~al\mbox{.}(2011)]%
        {securesorting}
\bibfield{author}{\bibinfo{person}{Kristj{\'a}n~Valur J{\'o}nsson},
  \bibinfo{person}{Gunnar Kreitz}, {and} \bibinfo{person}{Misbah Uddin}.}
  \bibinfo{year}{2011}\natexlab{}.
\newblock \bibinfo{title}{Secure Multi-Party Sorting and Applications}.
\newblock \bibinfo{howpublished}{Cryptology ePrint Archive, Report 2011/122}.
\newblock
\newblock
\shownote{\url{https://eprint.iacr.org/2011/122}}.


\bibitem[Kadhem et~al\mbox{.}(2010)]%
        {Db7}
\bibfield{author}{\bibinfo{person}{Hasan Kadhem}, \bibinfo{person}{Toshiyuki
  Amagasa}, {and} \bibinfo{person}{Hiroyuki Kitagawa}.}
  \bibinfo{year}{2010}\natexlab{}.
\newblock \showarticletitle{A Secure and Efficient Order Preserving Encryption
  Scheme for Relational Databases}. In \bibinfo{booktitle}{\emph{KMIS}}.
\newblock


\bibitem[Kairouz et~al\mbox{.}(2021)]%
        {kairouz2021distributed}
\bibfield{author}{\bibinfo{person}{Peter Kairouz}, \bibinfo{person}{Ziyu Liu},
  {and} \bibinfo{person}{Thomas Steinke}.} \bibinfo{year}{2021}\natexlab{}.
\newblock \bibinfo{title}{The Distributed Discrete Gaussian Mechanism for
  Federated Learning with Secure Aggregation}.
\newblock
\newblock
\showeprint[arxiv]{2102.06387}~[cs.LG]


\bibitem[{Kepner} et~al\mbox{.}(2014)]%
        {academy3}
\bibfield{author}{\bibinfo{person}{J. {Kepner}}, \bibinfo{person}{V.
  {Gadepally}}, \bibinfo{person}{P. {Michaleas}}, \bibinfo{person}{N.
  {Schear}}, \bibinfo{person}{M. {Varia}}, \bibinfo{person}{A.
  {Yerukhimovich}}, {and} \bibinfo{person}{R.~K. {Cunningham}}.}
  \bibinfo{year}{2014}\natexlab{}.
\newblock \showarticletitle{Computing on masked data: a high performance method
  for improving big data veracity}. In \bibinfo{booktitle}{\emph{2014 IEEE High
  Performance Extreme Computing Conference (HPEC)}}. \bibinfo{pages}{1--6}.
\newblock


\bibitem[Kerschbaum(2015)]%
        {IND-FAOCPA}
\bibfield{author}{\bibinfo{person}{Florian Kerschbaum}.}
  \bibinfo{year}{2015}\natexlab{}.
\newblock \showarticletitle{Frequency-Hiding Order-Preserving Encryption}. In
  \bibinfo{booktitle}{\emph{Proceedings of the 22Nd ACM SIGSAC Conference on
  Computer and Communications Security}} (Denver, Colorado, USA)
  \emph{(\bibinfo{series}{CCS '15})}. \bibinfo{publisher}{ACM},
  \bibinfo{address}{New York, NY, USA}, \bibinfo{pages}{656--667}.
\newblock
\showISBNx{978-1-4503-3832-5}
\urldef\tempurl%
\url{https://doi.org/10.1145/2810103.2813629}
\showDOI{\tempurl}


\bibitem[Kerschbaum and Tueno(2019)]%
        {Kerschbaum19}
\bibfield{author}{\bibinfo{person}{F. Kerschbaum} {and} \bibinfo{person}{A.
  Tueno}.} \bibinfo{year}{2019}\natexlab{}.
\newblock \showarticletitle{An Efficiently Searchable Encrypted Data Structure
  for Range Queries}. In \bibinfo{booktitle}{\emph{In: Sako K., Schneider S.,
  Ryan P. (eds) Computer Security – ESORICS 2019 ESORICS 2019. Lecture Notes
  in Computer Science, vol 11736. Springer, Cham}}.
\newblock


\bibitem[Kornaropoulos et~al\mbox{.}(2019)]%
        {strongattack3}
\bibfield{author}{\bibinfo{person}{Evgenios~M. Kornaropoulos},
  \bibinfo{person}{Charalampos Papamanthou}, {and} \bibinfo{person}{Roberto
  Tamassia}.} \bibinfo{year}{2019}\natexlab{}.
\newblock \showarticletitle{Data Recovery on Encrypted Databases with k-Nearest
  Neighbor Query Leakage}. In \bibinfo{booktitle}{\emph{2019 IEEE Symposium on
  Security and Privacy (SP)}}. \bibinfo{pages}{1033--1050}.
\newblock
\urldef\tempurl%
\url{https://doi.org/10.1109/SP.2019.00015}
\showDOI{\tempurl}


\bibitem[Kornaropoulos et~al\mbox{.}(2020)]%
        {strongattack}
\bibfield{author}{\bibinfo{person}{Evgenios~M. Kornaropoulos},
  \bibinfo{person}{Charalampos Papamanthou}, {and} \bibinfo{person}{Roberto
  Tamassia}.} \bibinfo{year}{2020}\natexlab{}.
\newblock \showarticletitle{The State of the Uniform: Attacks on Encrypted
  Databases Beyond the Uniform Query Distribution}. In
  \bibinfo{booktitle}{\emph{2020 IEEE Symposium on Security and Privacy (SP)}}.
  \bibinfo{pages}{1223--1240}.
\newblock
\urldef\tempurl%
\url{https://doi.org/10.1109/SP40000.2020.00029}
\showDOI{\tempurl}


\bibitem[Kornaropoulos et~al\mbox{.}(2021)]%
        {strongattack2}
\bibfield{author}{\bibinfo{person}{Evgenios~M. Kornaropoulos},
  \bibinfo{person}{Charalampos Papamanthou}, {and} \bibinfo{person}{Roberto
  Tamassia}.} \bibinfo{year}{2021}\natexlab{}.
\newblock \bibinfo{title}{Response-Hiding Encrypted Ranges: Revisiting Security
  via Parametrized Leakage-Abuse Attacks}.
\newblock \bibinfo{howpublished}{Cryptology ePrint Archive, Report 2021/093}.
\newblock
\newblock
\shownote{\url{https://eprint.iacr.org/2021/093}}.


\bibitem[Kulkarni(2019)]%
        {range}
\bibfield{author}{\bibinfo{person}{Tejas Kulkarni}.}
  \bibinfo{year}{2019}\natexlab{}.
\newblock \showarticletitle{Answering Range Queries Under Local Differential
  Privacy}. In \bibinfo{booktitle}{\emph{Proceedings of the 2019 International
  Conference on Management of Data}} (Amsterdam, Netherlands)
  \emph{(\bibinfo{series}{SIGMOD '19})}. \bibinfo{publisher}{Association for
  Computing Machinery}, \bibinfo{address}{New York, NY, USA},
  \bibinfo{pages}{1832–1834}.
\newblock
\showISBNx{9781450356435}
\urldef\tempurl%
\url{https://doi.org/10.1145/3299869.3300102}
\showDOI{\tempurl}


\bibitem[Lacharit{\'e} and Paterson(2018)]%
        {Paterson_sol}
\bibfield{author}{\bibinfo{person}{Marie-Sarah Lacharit{\'e}} {and}
  \bibinfo{person}{Kenneth~G. Paterson}.} \bibinfo{year}{2018}\natexlab{}.
\newblock \showarticletitle{Frequency-smoothing encryption: preventing snapshot
  attacks on deterministically encrypted data}.
\newblock \bibinfo{journal}{\emph{IACR Transactions on Symmetric Cryptology}}
  \bibinfo{volume}{2018}, \bibinfo{number}{1} (\bibinfo{date}{Mar.}
  \bibinfo{year}{2018}), \bibinfo{pages}{277--313}.
\newblock
\urldef\tempurl%
\url{https://doi.org/10.13154/tosc.v2018.i1.277-313}
\showDOI{\tempurl}


\bibitem[Lacharit´e and Paterson(2015)]%
        {Paterson2}
\bibfield{author}{\bibinfo{person}{Marie-Sarah Lacharit´e} {and}
  \bibinfo{person}{Kenneth~G Paterson}.} \bibinfo{year}{2015}\natexlab{}.
\newblock \bibinfo{title}{A note on the optimality of frequency analysis vs.
  $l_p$-optimization}.
\newblock
\newblock
\urldef\tempurl%
\url{http://eprint.iacr.org/2015/1158.pdf, 2015.}
\showURL{%
\tempurl}


\bibitem[{Lacharité} et~al\mbox{.}(2018)]%
        {Paterson1}
\bibfield{author}{\bibinfo{person}{M. {Lacharité}}, \bibinfo{person}{B.
  {Minaud}}, {and} \bibinfo{person}{K.~G. {Paterson}}.}
  \bibinfo{year}{2018}\natexlab{}.
\newblock \showarticletitle{Improved Reconstruction Attacks on Encrypted Data
  Using Range Query Leakage}. In \bibinfo{booktitle}{\emph{2018 IEEE Symposium
  on Security and Privacy (SP)}}. \bibinfo{pages}{297--314}.
\newblock


\bibitem[Lee et~al\mbox{.}(2009)]%
        {Db1}
\bibfield{author}{\bibinfo{person}{Seungmin Lee}, \bibinfo{person}{Tae-Jun
  Park}, \bibinfo{person}{Donghyeok Lee}, \bibinfo{person}{Taekyong Nam}, {and}
  \bibinfo{person}{Sehun Kim}.} \bibinfo{year}{2009}\natexlab{}.
\newblock \showarticletitle{Chaotic Order Preserving Encryption for Efficient
  and Secure Queries on Databases}.
\newblock \bibinfo{journal}{\emph{IEICE Transactions}}  \bibinfo{volume}{92-D}
  (\bibinfo{date}{11} \bibinfo{year}{2009}), \bibinfo{pages}{2207--2217}.
\newblock
\urldef\tempurl%
\url{https://doi.org/10.1587/transinf.E92.D.2207}
\showDOI{\tempurl}


\bibitem[Li et~al\mbox{.}(2020)]%
        {numestimate}
\bibfield{author}{\bibinfo{person}{Zitao Li}, \bibinfo{person}{Tianhao Wang},
  \bibinfo{person}{Milan Lopuha\"{a}-Zwakenberg}, \bibinfo{person}{Ninghui Li},
  {and} \bibinfo{person}{Boris \v{S}koric}.} \bibinfo{year}{2020}\natexlab{}.
\newblock \showarticletitle{Estimating Numerical Distributions under Local
  Differential Privacy}. In \bibinfo{booktitle}{\emph{Proceedings of the 2020
  ACM SIGMOD International Conference on Management of Data}} (Portland, OR,
  USA) \emph{(\bibinfo{series}{SIGMOD '20})}. \bibinfo{publisher}{Association
  for Computing Machinery}, \bibinfo{address}{New York, NY, USA},
  \bibinfo{pages}{621–635}.
\newblock
\showISBNx{9781450367356}
\urldef\tempurl%
\url{https://doi.org/10.1145/3318464.3389700}
\showDOI{\tempurl}


\bibitem[Liu and Wang(2012)]%
        {Db2}
\bibfield{author}{\bibinfo{person}{Dongxi Liu} {and} \bibinfo{person}{Shenlu
  Wang}.} \bibinfo{year}{2012}\natexlab{}.
\newblock \showarticletitle{Programmable Order-Preserving Secure Index for
  Encrypted Database Query}. In \bibinfo{booktitle}{\emph{Proceedings of the
  2012 IEEE Fifth International Conference on Cloud Computing}}
  \emph{(\bibinfo{series}{CLOUD ’12})}. \bibinfo{publisher}{IEEE Computer
  Society}, \bibinfo{address}{USA}, \bibinfo{pages}{502–509}.
\newblock
\showISBNx{9780769547558}
\urldef\tempurl%
\url{https://doi.org/10.1109/CLOUD.2012.65}
\showDOI{\tempurl}


\bibitem[Liu and Wang(2013)]%
        {Db4}
\bibfield{author}{\bibinfo{person}{Dongxi Liu} {and} \bibinfo{person}{Shenlu
  Wang}.} \bibinfo{year}{2013}\natexlab{}.
\newblock \showarticletitle{Nonlinear order preserving index for encrypted
  database query in service cloud environments}.
\newblock \bibinfo{journal}{\emph{Concurr. Comput. Pract. Exp.}}
  \bibinfo{volume}{25} (\bibinfo{year}{2013}), \bibinfo{pages}{1967--1984}.
\newblock


\bibitem[{Machanavajjhala} et~al\mbox{.}(2008)]%
        {Census2}
\bibfield{author}{\bibinfo{person}{A. {Machanavajjhala}}, \bibinfo{person}{D.
  {Kifer}}, \bibinfo{person}{J. {Abowd}}, \bibinfo{person}{J. {Gehrke}}, {and}
  \bibinfo{person}{L. {Vilhuber}}.} \bibinfo{year}{2008}\natexlab{}.
\newblock \showarticletitle{Privacy: Theory meets Practice on the Map}. In
  \bibinfo{booktitle}{\emph{2008 IEEE 24th International Conference on Data
  Engineering}}. \bibinfo{pages}{277--286}.
\newblock


\bibitem[Maffei et~al\mbox{.}(2018)]%
        {frequencyHiding2}
\bibfield{author}{\bibinfo{person}{Matteo Maffei}, \bibinfo{person}{Manuel
  Reinert}, {and} \bibinfo{person}{Dominique Schr{\"o}der}.}
  \bibinfo{year}{2018}\natexlab{}.
\newblock \showarticletitle{On the Security of Frequency-Hiding
  Order-Preserving Encryption}. In \bibinfo{booktitle}{\emph{Cryptology and
  Network Security}}, \bibfield{editor}{\bibinfo{person}{Srdjan Capkun} {and}
  \bibinfo{person}{Sherman S.~M. Chow}} (Eds.). \bibinfo{publisher}{Springer
  International Publishing}, \bibinfo{address}{Cham}, \bibinfo{pages}{51--70}.
\newblock
\showISBNx{978-3-030-02641-7}


\bibitem[Mavroforakis et~al\mbox{.}(2015)]%
        {ModularOPE1}
\bibfield{author}{\bibinfo{person}{Charalampos Mavroforakis},
  \bibinfo{person}{Nathan Chenette}, \bibinfo{person}{Adam O’Neill},
  \bibinfo{person}{George Kollios}, {and} \bibinfo{person}{Ran Canetti}.}
  \bibinfo{year}{2015}\natexlab{}.
\newblock \showarticletitle{Modular Order-Preserving Encryption, Revisited}. In
  \bibinfo{booktitle}{\emph{Proceedings of the 2015 ACM SIGMOD International
  Conference on Management of Data}} (Melbourne, Victoria, Australia)
  \emph{(\bibinfo{series}{SIGMOD ’15})}. \bibinfo{publisher}{Association for
  Computing Machinery}, \bibinfo{address}{New York, NY, USA},
  \bibinfo{pages}{763–777}.
\newblock
\showISBNx{9781450327589}
\urldef\tempurl%
\url{https://doi.org/10.1145/2723372.2749455}
\showDOI{\tempurl}


\bibitem[Mazloom and Gordon(2018)]%
        {DPAP}
\bibfield{author}{\bibinfo{person}{Sahar Mazloom} {and} \bibinfo{person}{S.~Dov
  Gordon}.} \bibinfo{year}{2018}\natexlab{}.
\newblock \showarticletitle{Secure Computation with Differentially Private
  Access Patterns}. In \bibinfo{booktitle}{\emph{Proceedings of the 2018 ACM
  SIGSAC Conference on Computer and Communications Security}} (Toronto, Canada)
  \emph{(\bibinfo{series}{CCS '18})}. \bibinfo{publisher}{Association for
  Computing Machinery}, \bibinfo{address}{New York, NY, USA},
  \bibinfo{pages}{490–507}.
\newblock
\showISBNx{9781450356930}
\urldef\tempurl%
\url{https://doi.org/10.1145/3243734.3243851}
\showDOI{\tempurl}


\bibitem[McKenna et~al\mbox{.}(2020)]%
        {Workload}
\bibfield{author}{\bibinfo{person}{Ryan McKenna}, \bibinfo{person}{Raj~Kumar
  Maity}, \bibinfo{person}{Arya Mazumdar}, {and} \bibinfo{person}{Gerome
  Miklau}.} \bibinfo{year}{2020}\natexlab{}.
\newblock \bibinfo{title}{A workload-adaptive mechanism for linear queries
  under local differential privacy}.
\newblock
\newblock
\showeprint[arxiv]{2002.01582}~[cs.DB]


\bibitem[Mironov et~al\mbox{.}(2009)]%
        {CDP}
\bibfield{author}{\bibinfo{person}{Ilya Mironov}, \bibinfo{person}{Omkant
  Pandey}, \bibinfo{person}{Omer Reingold}, {and} \bibinfo{person}{Salil
  Vadhan}.} \bibinfo{year}{2009}\natexlab{}.
\newblock \showarticletitle{Computational Differential Privacy}. In
  \bibinfo{booktitle}{\emph{Advances in
  Cryptology{\textendash}-CRYPTO~{\textquoteleft}09}}
  \emph{(\bibinfo{series}{Lecture Notes in Computer Science},
  Vol.~\bibinfo{volume}{5677})}. Springer-Verlag,
  \bibinfo{publisher}{Springer-Verlag}, \bibinfo{address}{Santa Barbara, CA},
  \bibinfo{pages}{126{\textendash}142}.
\newblock
\urldef\tempurl%
\url{http://link.springer.com/chapter/10.1007\%2F978-3-642-03356-8_8}
\showURL{%
\tempurl}


\bibitem[Naveed et~al\mbox{.}(2015)]%
        {Naveed1}
\bibfield{author}{\bibinfo{person}{Muhammad Naveed}, \bibinfo{person}{Seny
  Kamara}, {and} \bibinfo{person}{Charles~V. Wright}.}
  \bibinfo{year}{2015}\natexlab{}.
\newblock \showarticletitle{Inference Attacks on Property-Preserving Encrypted
  Databases}. In \bibinfo{booktitle}{\emph{Proceedings of the 22nd ACM SIGSAC
  Conference on Computer and Communications Security}} (Denver, Colorado, USA)
  \emph{(\bibinfo{series}{CCS ’15})}. \bibinfo{publisher}{Association for
  Computing Machinery}, \bibinfo{address}{New York, NY, USA},
  \bibinfo{pages}{644–655}.
\newblock
\showISBNx{9781450338325}
\urldef\tempurl%
\url{https://doi.org/10.1145/2810103.2813651}
\showDOI{\tempurl}


\bibitem[{Popa} et~al\mbox{.}(2013a)]%
        {IND-OCPA}
\bibfield{author}{\bibinfo{person}{R.~A. {Popa}}, \bibinfo{person}{F.~H. {Li}},
  {and} \bibinfo{person}{N. {Zeldovich}}.} \bibinfo{year}{2013}\natexlab{a}.
\newblock \showarticletitle{An Ideal-Security Protocol for Order-Preserving
  Encoding}. In \bibinfo{booktitle}{\emph{2013 IEEE Symposium on Security and
  Privacy}}. \bibinfo{pages}{463--477}.
\newblock


\bibitem[{Popa} et~al\mbox{.}(2013b)]%
        {Popa}
\bibfield{author}{\bibinfo{person}{R.~A. {Popa}}, \bibinfo{person}{F.~H. {Li}},
  {and} \bibinfo{person}{N. {Zeldovich}}.} \bibinfo{year}{2013}\natexlab{b}.
\newblock \showarticletitle{An Ideal-Security Protocol for Order-Preserving
  Encoding}. In \bibinfo{booktitle}{\emph{2013 IEEE Symposium on Security and
  Privacy}}. \bibinfo{pages}{463--477}.
\newblock


\bibitem[Popa et~al\mbox{.}(2011)]%
        {CryptDB}
\bibfield{author}{\bibinfo{person}{Raluca~Ada Popa}, \bibinfo{person}{Catherine
  M.~S. Redfield}, \bibinfo{person}{Nickolai Zeldovich}, {and}
  \bibinfo{person}{Hari Balakrishnan}.} \bibinfo{year}{2011}\natexlab{}.
\newblock \showarticletitle{CryptDB: Protecting Confidentiality with Encrypted
  Query Processing}. In \bibinfo{booktitle}{\emph{Proceedings of the
  Twenty-Third ACM Symposium on Operating Systems Principles}} (Cascais,
  Portugal) \emph{(\bibinfo{series}{SOSP ’11})}.
  \bibinfo{publisher}{Association for Computing Machinery},
  \bibinfo{address}{New York, NY, USA}, \bibinfo{pages}{85–100}.
\newblock
\showISBNx{9781450309776}
\urldef\tempurl%
\url{https://doi.org/10.1145/2043556.2043566}
\showDOI{\tempurl}


\bibitem[Rastogi and Nath(2010)]%
        {Rastogi}
\bibfield{author}{\bibinfo{person}{Vibhor Rastogi} {and} \bibinfo{person}{Suman
  Nath}.} \bibinfo{year}{2010}\natexlab{}.
\newblock \showarticletitle{Differentially Private Aggregation of Distributed
  Time-series with Transformation and Encryption}. In
  \bibinfo{booktitle}{\emph{Proceedings of the 2010 ACM SIGMOD International
  Conference on Management of Data}} (Indianapolis, Indiana, USA)
  \emph{(\bibinfo{series}{SIGMOD '10})}. \bibinfo{publisher}{ACM},
  \bibinfo{address}{New York, NY, USA}, \bibinfo{pages}{735--746}.
\newblock
\showISBNx{978-1-4503-0032-2}
\urldef\tempurl%
\url{https://doi.org/10.1145/1807167.1807247}
\showDOI{\tempurl}


\bibitem[R.B.G.(1984)]%
        {central:tendency}
\bibfield{author}{\bibinfo{person}{Williams R.B.G.}}
  \bibinfo{year}{1984}\natexlab{}.
\newblock \bibinfo{booktitle}{\emph{Measures of Central Tendency}}.
\newblock


\bibitem[Roche et~al\mbox{.}(2016)]%
        {POPE}
\bibfield{author}{\bibinfo{person}{Daniel~S. Roche}, \bibinfo{person}{Daniel
  Apon}, \bibinfo{person}{Seung~Geol Choi}, {and} \bibinfo{person}{Arkady
  Yerukhimovich}.} \bibinfo{year}{2016}\natexlab{}.
\newblock \showarticletitle{POPE: Partial Order Preserving Encoding}. In
  \bibinfo{booktitle}{\emph{Proceedings of the 2016 ACM SIGSAC Conference on
  Computer and Communications Security}} (Vienna, Austria)
  \emph{(\bibinfo{series}{CCS ’16})}. \bibinfo{publisher}{Association for
  Computing Machinery}, \bibinfo{address}{New York, NY, USA},
  \bibinfo{pages}{1131–1142}.
\newblock
\showISBNx{9781450341394}
\urldef\tempurl%
\url{https://doi.org/10.1145/2976749.2978345}
\showDOI{\tempurl}


\bibitem[Roy~Chowdhury et~al\mbox{.}(2020)]%
        {CryptE}
\bibfield{author}{\bibinfo{person}{Amrita Roy~Chowdhury},
  \bibinfo{person}{Chenghong Wang}, \bibinfo{person}{Xi He},
  \bibinfo{person}{Ashwin Machanavajjhala}, {and} \bibinfo{person}{Somesh
  Jha}.} \bibinfo{year}{2020}\natexlab{}.
\newblock \showarticletitle{Crypt$\epsilon$: Crypto-Assisted Differential
  Privacy on Untrusted Servers}. In \bibinfo{booktitle}{\emph{Proceedings of
  the 2020 ACM SIGMOD International Conference on Management of Data}}
  (Portland, OR, USA) \emph{(\bibinfo{series}{SIGMOD ’20})}.
  \bibinfo{publisher}{Association for Computing Machinery},
  \bibinfo{address}{New York, NY, USA}, \bibinfo{pages}{603–619}.
\newblock
\showISBNx{9781450367356}
\urldef\tempurl%
\url{https://doi.org/10.1145/3318464.3380596}
\showDOI{\tempurl}


\bibitem[Schaad(2016)]%
        {commercial1}
\bibfield{author}{\bibinfo{person}{Andreas Schaad}.}
  \bibinfo{year}{2016}\natexlab{}.
\newblock \bibinfo{title}{SAP SEEED Project}.
\newblock
\newblock
\urldef\tempurl%
\url{https://www.sics. se/sites/default/files/pub/andreasschaad.pdf}
\showURL{%
\tempurl}


\bibitem[Shi et~al\mbox{.}(2011)]%
        {Shi}
\bibfield{author}{\bibinfo{person}{Elaine Shi}, \bibinfo{person}{T.-H
  Hubert~Chan}, \bibinfo{person}{Eleanor G.~Rieffel}, \bibinfo{person}{Richard
  Chow}, {and} \bibinfo{person}{Dawn Song}.} \bibinfo{year}{2011}\natexlab{}.
\newblock \showarticletitle{Privacy-Preserving Aggregation of Time-Series
  Data}.
\newblock \bibinfo{journal}{\emph{NDSS}}  \bibinfo{volume}{2}.
\newblock


\bibitem[{Tschantz} et~al\mbox{.}(2020)]%
        {sok:DP}
\bibfield{author}{\bibinfo{person}{M.~C. {Tschantz}}, \bibinfo{person}{S.
  {Sen}}, {and} \bibinfo{person}{A. {Datta}}.} \bibinfo{year}{2020}\natexlab{}.
\newblock \showarticletitle{SoK: Differential Privacy as a Causal Property}. In
  \bibinfo{booktitle}{\emph{2020 IEEE Symposium on Security and Privacy (SP)}}.
  \bibinfo{pages}{354--371}.
\newblock
\urldef\tempurl%
\url{https://doi.org/10.1109/SP40000.2020.00012}
\showDOI{\tempurl}


\bibitem[van~den Hooff et~al\mbox{.}(2015)]%
        {Vuvuzela}
\bibfield{author}{\bibinfo{person}{Jelle van~den Hooff}, \bibinfo{person}{David
  Lazar}, \bibinfo{person}{Matei Zaharia}, {and} \bibinfo{person}{Nickolai
  Zeldovich}.} \bibinfo{year}{2015}\natexlab{}.
\newblock \showarticletitle{Vuvuzela: Scalable Private Messaging Resistant to
  Traffic Analysis}. In \bibinfo{booktitle}{\emph{Proceedings of the 25th
  Symposium on Operating Systems Principles}} (Monterey, California)
  \emph{(\bibinfo{series}{SOSP ’15})}. \bibinfo{publisher}{Association for
  Computing Machinery}, \bibinfo{address}{New York, NY, USA},
  \bibinfo{pages}{137–152}.
\newblock
\showISBNx{9781450338349}
\urldef\tempurl%
\url{https://doi.org/10.1145/2815400.2815417}
\showDOI{\tempurl}


\bibitem[Vilhuber et~al\mbox{.}(2017)]%
        {Vilhuber17Proceedings}
\bibfield{author}{\bibinfo{person}{Lars Vilhuber}, \bibinfo{person}{Ian~M.
  Schmutte}, {and} \bibinfo{person}{John~M. Abowd}.}
  \bibinfo{year}{2017}\natexlab{}.
\newblock \bibinfo{title}{Proceedings from the 2016 {NSF}--{Sloan} Workshop on
  Practical Privacy}.
\newblock
\newblock


\bibitem[Wagh et~al\mbox{.}(2018)]%
        {DPORam2}
\bibfield{author}{\bibinfo{person}{Sameer Wagh}, \bibinfo{person}{Paul Cuff},
  {and} \bibinfo{person}{Prateek Mittal}.} \bibinfo{year}{2018}\natexlab{}.
\newblock \showarticletitle{Differentially Private Oblivious RAM}.
\newblock \bibinfo{journal}{\emph{Proceedings on Privacy Enhancing
  Technologies}} \bibinfo{volume}{2018}, \bibinfo{number}{4}
  (\bibinfo{year}{2018}), \bibinfo{pages}{64 -- 84}.
\newblock
\urldef\tempurl%
\url{https://doi.org/10.1515/popets-2018-0032}
\showDOI{\tempurl}


\bibitem[Wagh et~al\mbox{.}(2020)]%
        {DPmarry}
\bibfield{author}{\bibinfo{person}{Sameer Wagh}, \bibinfo{person}{Xi He},
  \bibinfo{person}{Ashwin Machanavajjhala}, {and} \bibinfo{person}{Prateek
  Mittal}.} \bibinfo{year}{2020}\natexlab{}.
\newblock \bibinfo{title}{DP-Cryptography: Marrying Differential Privacy and
  Cryptography in Emerging Applications}.
\newblock
\newblock
\showeprint[arxiv]{2004.08887}~[cs.CR]


\bibitem[{Wang} et~al\mbox{.}(2017)]%
        {Ordinal2}
\bibfield{author}{\bibinfo{person}{S. {Wang}}, \bibinfo{person}{Y. {Nie}},
  \bibinfo{person}{P. {Wang}}, \bibinfo{person}{H. {Xu}}, \bibinfo{person}{W.
  {Yang}}, {and} \bibinfo{person}{L. {Huang}}.}
  \bibinfo{year}{2017}\natexlab{}.
\newblock \showarticletitle{Local private ordinal data distribution
  estimation}. In \bibinfo{booktitle}{\emph{IEEE INFOCOM 2017 - IEEE Conference
  on Computer Communications}}. \bibinfo{pages}{1--9}.
\newblock


\bibitem[Wang et~al\mbox{.}(2017)]%
        {fo}
\bibfield{author}{\bibinfo{person}{Tianhao Wang}, \bibinfo{person}{Jeremiah
  Blocki}, \bibinfo{person}{Ninghui Li}, {and} \bibinfo{person}{Somesh Jha}.}
  \bibinfo{year}{2017}\natexlab{}.
\newblock \showarticletitle{Locally Differentially Private Protocols for
  Frequency Estimation}. In \bibinfo{booktitle}{\emph{Proceedings of the 26th
  USENIX Conference on Security Symposium}} (Vancouver, BC, Canada).
  \bibinfo{publisher}{USENIX Association}, \bibinfo{address}{Berkeley, CA,
  USA}, \bibinfo{pages}{729--745}.
\newblock
\showISBNx{978-1-931971-40-9}
\urldef\tempurl%
\url{http://dl.acm.org/citation.cfm?id=3241189.3241247}
\showURL{%
\tempurl}


\bibitem[Xiang et~al\mbox{.}(2019)]%
        {Xiang2019LinearAR}
\bibfield{author}{\bibinfo{person}{Zhuolun Xiang}, \bibinfo{person}{B. Ding},
  \bibinfo{person}{X. He}, {and} \bibinfo{person}{Jingren Zhou}.}
  \bibinfo{year}{2019}\natexlab{}.
\newblock \showarticletitle{Linear and Range Counting under Metric-based Local
  Differential Privacy}.
\newblock \bibinfo{journal}{\emph{arXiv: Cryptography and Security}}
  (\bibinfo{year}{2019}).
\newblock


\bibitem[Yao(1982)]%
        {Yao}
\bibfield{author}{\bibinfo{person}{A.~C. Yao}.}
  \bibinfo{year}{1982}\natexlab{}.
\newblock \showarticletitle{Protocols for secure computations}. In
  \bibinfo{booktitle}{\emph{2013 IEEE 54th Annual Symposium on Foundations of
  Computer Science}}. \bibinfo{publisher}{IEEE Computer Society},
  \bibinfo{address}{Los Alamitos, CA, USA}, \bibinfo{pages}{160--164}.
\newblock
\showISSN{0272-5428}
\urldef\tempurl%
\url{https://doi.org/10.1109/SFCS.1982.88}
\showDOI{\tempurl}


\end{thebibliography}
\urlstyle{rm}
\section{Appendix}
\subsection {Background Cntd.} \label{app:background}

\textbf{Notes on \OPE.} Although the notion of \indfaocpa was first introduced by Kerschbaum et al. \cite{IND-FAOCPA}, the proposed definition suffered from a subtle flaw which was subsequently rectified by Maffei et al. \cite{frequencyHiding2}.  The above definition, hence, follows from the one in \cite{frequencyHiding2} (denoted by in $\textsf{IND-FA-OCPA}^*$ in \cite{frequencyHiding2}). Additionally, Defn. \ref{def:OPE} in our paper corresponds to the notion of \textit{augmented} order-preserving encryption scheme (denoted by $\textsf{OPE}^*$ in \cite{frequencyHiding2}) which is crucial for the above security definition. The augmented \OPE scheme is in fact a generalization of the standard \OPE scheme (the only difference being the encryption algorithm $\E$ has an additional input, $\Gamma$). 

\par For a given order $\Gamma = \{\gamma_1, \cdots , \gamma_n\}$, let $\Gamma_{\downarrow i}$ denote the  
denote the order of the sequence $\{\gamma_1, \cdots , \gamma_i\}, i \in [n]$. Note that this order is
unique since $\Gamma$ is already an order. For example, for the randomized sequence
$\Gamma = \{1, 5, 4, 3, 2, 6\}$. Then, $\Gamma_{\downarrow 3}= \{1, 3, 2\}$ which is the order of $\{1, 5, 4\}$. 
Encryption of an input dataset  $X=\{x_1,\cdots,x_n\}$ is carried out as follows:
\squishlistnum 
\item Set $\s_0\leftarrow\K(1^\kappa)$
\item For $\forall i \in [n]$, compute $(\s_{i},y_i)\leftarrow \E(\s_{i-1}, x_i,\Gamma_{\downarrow i})$ \squishendnum
For notational simplicity, we abuse notations and use just $\Gamma$ for step $(2)$ throughout the paper. See Maffei et al. \cite{IND-FAOCPA} for details.\\\\
\textbf{Exponential Mechanism.} A classic \DP mechanism is the exponential mechanism \cite{Dwork} as defined below. 

\begin{definition}[Exponential Mechanism] An $\epsilon$-\DP exponential mechanism   $\mathcal{M}:\mathcal{X}^n\mapsto \mathcal{Y}$ selects an output $y \in \mathcal{Y}$ with probability proportional to  $e^{ \frac{\epsilon u(X,y)}{2\Delta u}}$ where $u:\mathcal{X}^n\times \mathcal{Y}\mapsto \mathbb{R}$ is a utility function mapping input and output pairs with a utility score and $\Delta u$ is the sensitivity defined as \begin{gather*}\Delta u=\max_{y \in \mathcal{Y}} \max_{\mbox{Adjacent pairs} X,X'} \Big|u(X,y)-u(X',y)\Big|   \end{gather*}\end{definition}

\textbf{Composition Theorem for \ddp.} When applied multiple times, the \ddp (and \cdp) guarantee degrades gracefully as follows. 

\begin{theorem}[Sequential Composition] If $\mathcal{M}_1:\mathcal{X}\mapsto \mathcal{Y}$ and $\mathcal{M}_2:\mathcal{X}\mapsto \mathcal{Z}$ are $\epsilon_1$ and $\epsilon_2$-\ddp mechanisms, respectively, then releasing the outputs \scalebox{0.9}{$\langle \mathcal{M}_1(x), \mathcal{M}_2(x) \rangle$} for any input $x \in \mathcal{X}$ satisfies $\epsilon_1+\epsilon_2$-\ddp. \label{thm:seq}
\end{theorem}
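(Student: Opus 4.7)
The plan is to reduce the claim to the classical sequential composition argument for \DP, carrying the $\ell_1$-distance $|x-x'|$ symbolically through the exponent. First, I would observe that the composed mechanism $\mathcal{M}(x) := \langle \mathcal{M}_1(x), \mathcal{M}_2(x)\rangle$ is well-defined because $\mathcal{M}_1$ and $\mathcal{M}_2$ use independent internal randomness. As a consequence, the joint pmf (resp.\ density) on $\mathcal{Y} \times \mathcal{Z}$ factors as a product of the marginals, which is the only structural fact beyond the definition of \ddp that I will need.

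Concretely, fix any pair $x, x' \in \mathcal{X}$ and any subset $\mathcal{T} \subseteq \mathcal{Y} \times \mathcal{Z}$. I would write
\[
\Pr\big[\mathcal{M}(x) \in \mathcal{T}\big] = \sum_{(y,z)\in \mathcal{T}} \Pr[\mathcal{M}_1(x)=y]\cdot\Pr[\mathcal{M}_2(x)=z]
\]
(replacing the sum by an integral in the continuous setting), and then apply Definition~\ref{def:ddp} pointwise to each factor with parameters $\epsilon_1$ and $\epsilon_2$, respectively, to obtain
\[
\Pr[\mathcal{M}_1(x)=y]\cdot\Pr[\mathcal{M}_2(x)=z] \;\leq\; e^{(\epsilon_1+\epsilon_2)\,|x-x'|}\cdot\Pr[\mathcal{M}_1(x')=y]\cdot\Pr[\mathcal{M}_2(x')=z].
\]
Since the exponential prefactor is independent of $(y,z)$, it factors out of the sum over $\mathcal{T}$, yielding $\Pr[\mathcal{M}(x) \in \mathcal{T}] \leq e^{(\epsilon_1+\epsilon_2)|x-x'|} \cdot \Pr[\mathcal{M}(x') \in \mathcal{T}]$, which is precisely the defining inequality for $(\epsilon_1+\epsilon_2)$-\ddp. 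Extending to the \cdp analogue follows verbatim by using the $\ell_1$-distance $|x_i - x_i'|$ between the differing records of two $t$-adjacent datasets in place of $|x-x'|$.

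The main technical point (rather than an obstacle) is the pointwise use of the \ddp bound, since Definition~\ref{def:ddp} is stated at the level of output subsets. For discrete output spaces this is immediate by instantiating the definition with the singletons $\{y\}$ and $\{z\}$; for continuous output spaces one obtains the pointwise density bound via the Radon--Nikodym derivative, or equivalently one reasons directly at the set level by Fubini-style slicing of $\mathcal{T}$ into fibers. In either case the argument is routine and decouples entirely from the distance-dependent scaling, which is the reason the classical composition theorem for standard \DP extends to \ddp without any change in the combinatorial structure of the proof.
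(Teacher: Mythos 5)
Your proposal is correct and follows essentially the same route as the paper, which simply defers to the classical sequential-composition argument for standard \DP (product decomposition over independent internal randomness, with the distance-scaled factors $e^{\epsilon_1|x-x'|}$ and $e^{\epsilon_2|x-x'|}$ multiplying to $e^{(\epsilon_1+\epsilon_2)|x-x'|}$). Your added remarks on passing from set-level to pointwise bounds (singletons in the discrete case, densities in the continuous case) are routine and do not change the structure of the argument.
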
 The proof follows directly from the corresponding proof for standard \DP \cite{Dwork}.
\subsection{Proof of Theorem \ref{thm:faocpae}} \label{app:thm:main}
\begin{proof}\textbf{Intuition.} The intuition of the proof is as follows. Recall that there are four input sequences the adversary has to distinguish among. If the adversary is able to guess bit $b_1$ correctly (with non-trivial probability), it is akin to breaking the \indfaocpa guarantee of \OPEs. Similarly,  if the adversary is able to guess bit $b_2$ correctly, (with non trivial probability) it would imply the violation of the $\epsilon$-\cdp guarantee. \par  The proof is structured as follows. First, we prove that \nam  satisfies $\epsilon/2$-\cdp (or $(\epsilon/2,0)$-\cdp following the notation in Definition \ref{def:cdp}). The rest of the proof follows directly from this result and the $\indfaocpa$ guarantee of the \OPE scheme. \begin{lemma}Let $\mathcal{M}$ be a mechanism that \begin{enumerate}\item inputs a dataset $X \in \mathcal{X}^n$  \item outputs $\widetilde{O}=\{\tilde{o}_1,\cdots,\tilde{o}_n\}$ where for all $i\in [n], \Pa \in \hat{\mathcal{X}}, \tilde{o}_i\leftarrow\name(
x_i,\Pa,\epsilon/2)$  \end{enumerate} Then, $\mathcal{M}$ satisfies $\epsilon/2$-\cdp. \label{lemma:cdp:gen}\end{lemma}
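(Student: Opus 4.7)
The plan is to derive the \cdp guarantee for $\mathcal{M}$ directly from the \ddp guarantee of the underlying \name primitive, exploiting the fact that $\mathcal{M}$ applies \name to each record independently. First I would fix two arbitrary $t$-adjacent datasets $X = \langle x_1, \dots, x_n\rangle$ and $X' = \langle x_1', \dots, x_n'\rangle$ in $\mathcal{X}^n$ that differ in exactly one coordinate, say coordinate $i$, so $x_j = x_j'$ for all $j \neq i$ and $|x_i - x_i'| \leq t$. I would also fix an arbitrary target tuple $\widetilde{O}^* = (\tilde{o}_1^*, \dots, \tilde{o}_n^*) \in \mathcal{O}^n$.

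The key observation is that because the $n$ invocations of \name in $\mathcal{M}$ are independent conditional on the input dataset, the joint output density factorizes:
\begin{gather*}
\Pr[\mathcal{M}(X) = \widetilde{O}^*] = \prod_{j=1}^{n} \Pr\big[\name(x_j, \Pa, \epsilon/2) = \tilde{o}_j^*\big],
\end{gather*}
and analogously for $X'$. For every index $j \neq i$ we have $x_j = x_j'$, so the corresponding factors cancel in the ratio $\Pr[\mathcal{M}(X) = \widetilde{O}^*] / \Pr[\mathcal{M}(X') = \widetilde{O}^*]$. Only the $i$-th factor remains, which by the $(\epsilon/2)$-\ddp guarantee of \name (Thm.~\ref{thm:opec}, property~3 of Defn.~\ref{def:popec}) is bounded by $\exp\!\big((\epsilon/2)\,|x_i - x_i'|\big)$.

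To lift this pointwise statement to arbitrary measurable sets $\mathcal{T} \subseteq \mathcal{O}^n$, I would sum (or integrate) the pointwise inequality over $\widetilde{O}^* \in \mathcal{T}$, obtaining
\begin{gather*}
\Pr[\mathcal{M}(X) \in \mathcal{T}] \leq e^{(\epsilon/2)\,|x_i - x_i'|} \cdot \Pr[\mathcal{M}(X') \in \mathcal{T}],
\end{gather*}
which is exactly the defining inequality of $(\epsilon/2)$-\cdp in Defn.~\ref{def:cdp}. Since $X$ and $X'$ were arbitrary $t$-adjacent datasets and $\mathcal{T}$ was arbitrary, $\mathcal{M}$ satisfies $(\epsilon/2)$-\cdp.

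There is no real obstacle here: the result is the standard ``local-to-central via single-record substitution'' argument, so the only subtle step is being explicit that the independence of the \name calls is what makes the $n-1$ unchanged factors cancel exactly. I would make sure to state this cancellation carefully (rather than invoking basic composition, which would cost $n$ times the privacy budget) since the whole point of the lemma is that only the \ddp parameter of the single differing record contributes to the \cdp bound.
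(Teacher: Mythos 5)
Your proposal is correct and matches the paper's own argument: the paper likewise factorizes the joint output probability over the independent \name invocations, cancels the $n-1$ identical factors, and bounds the single differing factor by the $\epsilon/2$-\ddp guarantee of \name (Eq.~\eqref{eq:popec:dp}), yielding the $e^{t\epsilon/2}$ bound. Your additional step of summing the pointwise inequality over an arbitrary output set $\mathcal{T}$ is a minor tightening of exposition that the paper leaves implicit, not a different approach.
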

\begin{proof} Let $X, X' \in \mathcal{X}^n$ be $t$-adjacent. Specifically, let $x_i\neq x_i', i \in [n]$. For brevity, we drop $\Pa$ and the privacy parameter $\epsilon/2$ from the notation $\name(\cdot)$. \begin{gather*}\frac{\mathrm{Pr}\big[\mathcal{M}(X)=\widetilde{O}\big]}{\mathrm{Pr}\big[\mathcal{M}(X')=\widetilde{O}\big]}=\frac{\prod_{j=1}^n \mathrm{Pr}\big[\name(x_j)=\tilde{o}_j\big]}{\prod_{j=1}^n\mathrm{Pr}\big[\name(x'_j)=\tilde{o}_j\big]}\\=\frac{\prod_{j=1, j\neq i}^n \mathrm{Pr}\big[\name(x_j)=\tilde{o}_j]}{\prod_{j=1, j\neq i}^n\mathrm{Pr}\big[\name(x_j)=\tilde{o}_j]}\times \frac{\mathrm{Pr}\big[\name(x_i)=\tilde{o}_i]}{\mathrm{Pr}\big[\name(x_i')=\tilde{o}_i\big]}\\\leq e^{\frac{t\epsilon}{2}} [\mbox{ From Eq. \ref{eq:popec:dp} of Definition \ref{def:popec}}]\end{gather*}    This concludes our proof. \end{proof}

\begin{lemma}\nam satisfies $\epsilon/2$-\cdp. \label{lem:cdp}\end{lemma}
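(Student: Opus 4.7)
The plan is to reduce this lemma to Lemma~\ref{lemma:cdp:gen} together with the post-processing property of \cdp (Theorem~\ref{thm:post}). The key observation is that, by inspection of Definition~\ref{def:pope}, the encryption of a dataset $X=\langle x_1,\dots,x_n\rangle$ under \nam decomposes cleanly into two stages: first, the \name primitive is invoked on each $x_i$ with budget $\epsilon/2$ to produce the encoding vector $\widetilde{O}=\langle \tilde o_1,\dots,\tilde o_n\rangle$; second, an \OPE encryption $\E$ is applied sequentially to $\widetilde{O}$ using a freshly generated key/state $\s\leftarrow\K(1^\kappa)$ and some order $\Gamma$. Crucially, in the second stage the only dependence on $X$ is through $\widetilde{O}$ -- neither $\s$, $\Gamma$, nor the randomness internal to $\E$ sees $X$ directly.

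First I would make this decomposition explicit by defining the randomized map $g\colon \mathcal{O}^n\mapsto \mathcal{Y}^n$ that takes an encoding vector $\widetilde{O}$ and outputs the ciphertext sequence $Y=\langle y_1,\dots,y_n\rangle$ produced by running $(\s_i,y_i)\leftarrow \E(\s_{i-1},\tilde o_i,\Gamma)$ with $\s_0\leftarrow\K(1^\kappa)$. By construction, the joint distribution of the ciphertexts produced by $\E_{\epsilon}$ on input $X$ coincides with the distribution of $g(\mathcal{M}(X))$, where $\mathcal{M}$ is the mechanism of Lemma~\ref{lemma:cdp:gen}.

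Next I would invoke Lemma~\ref{lemma:cdp:gen} to conclude that $\mathcal{M}$ satisfies $\epsilon/2$-\cdp. Since $g$ is a randomized mapping whose distribution depends on $X$ only through $\mathcal{M}(X)$, applying the post-processing theorem (Theorem~\ref{thm:post}) to the composed mechanism $g\circ \mathcal{M}$ yields that the composition is also $\epsilon/2$-\cdp, which is precisely the claim for \nam.

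The main obstacle I anticipate is a purely bookkeeping one: arguing rigorously that the OPE layer qualifies as post-processing in the sense of Theorem~\ref{thm:post}. Concretely, the state evolution $\s_{i-1}\mapsto \s_i$ inside $\E$ and the choice of $\Gamma$ must be shown to be independent of $X$ given $\widetilde{O}$; this follows from Definition~\ref{def:OPE} (where $\E$ takes only the encoding, the current state, and the order as inputs) but it is worth stating explicitly. Once this is articulated, the rest of the proof is immediate from the two cited results.
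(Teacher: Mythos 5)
Your proposal is correct and follows essentially the same route as the paper: the paper also establishes Lemma~\ref{lemma:cdp:gen} (the coordinate-wise \name encoding stage is $\epsilon/2$-\cdp) and then concludes Lemma~\ref{lem:cdp} by treating the \OPE layer as post-processing via Thm.~\ref{thm:post}. Your added bookkeeping that the key, state evolution, and $\Gamma$ depend on $X$ only through $\widetilde{O}$ is a fine explicit articulation of what the paper leaves implicit.
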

This result follows directly from Lemma \ref{lemma:cdp:gen}  from Thm. \ref{thm:post}.\par 
Now, note that $\widetilde{O}_{b_1}\in\mathcal{O}_{b_1}^n$. Thus, $\name(\widetilde{O}_{b_1},\mathcal{O}_{b_1},\infty)=\widetilde{O}_{b_1}$ (Sec. \ref{sec:OPec:def}) . As a result, $\E_{\epsilon}( \textsf{S},\widetilde{O}_{b_1},\Gamma^*,\mathcal{O}_{b_1}, \scalebox{0.8}{$\infty$})$ (Step $4b$) is equivalent to running $\E_{\epsilon}( \textsf{S},X_{b_10},\Gamma^*,\Pa_{b_1}, \epsilon):=\langle \widetilde{O}_{b_1}\leftarrow \name(X_{b_10},\Pa_{b_1},\epsilon/2), \E(\widetilde{O}_{b_1},\s,\Gamma^*)\rangle$. 
Thus, from Defn. \ref{def:faocpa},  if $\Big|\mathrm{Pr}\big[\mathcal{G}^{\advppt}_{\textsf{FA-OCPA}^\epsilon}(\kappa,0,0)=(c_1,c_2)\big] -$\\ $\mathrm{Pr}\big[\mathcal{G}^{\advppt}_{\textsf{FA-OCPA}^\epsilon}(\kappa,1,0)=(c_1,c_2)\big]\Big|\geq \textsf{negl}(\kappa)$, then another \textsf{PPT} adversary, $\advppt'$, can use $\advppt$ to win the  $\mathcal{G}^{\advppt}_{\textsf{IND-FA-OCPA}}(\cdot)$ game which leads to a contradiction.  Hence, we have \begin{gather*}\hspace{-3.5cm}\Big|\mathrm{Pr}\big[\mathcal{G}^{\advppt}_{\textsf{FA-OCPA}^\epsilon}(\kappa,0,0)=(c_1,c_2)\big]\\-\mathrm{Pr}\big[\mathcal{G}^{\advppt}_{\textsf{FA-OCPA}^\epsilon}(\kappa,1,0)=(c_1,c_2)\big]\Big|\leq \textsf{negl}(\kappa)\label{eq:faocpa}\numberthis\end{gather*} Without loss of generality, let us assume \begin{gather*}\hspace{-3.5cm}\mathrm{Pr}\big[\mathcal{G}^{\advppt}_{\textsf{FA-OCPA}^\epsilon}(\kappa,1,0)=(c_1,c_2)\big] \leq\\ \mathrm{Pr}\big[\mathcal{G}^{\advppt}_{\textsf{FA-OCPA}^\epsilon}(\kappa,0,0)=(c_1,c_2)\big]\label{eq:1000}\numberthis\end{gather*} Thus, from Eqs. \eqref{eq:faocpa} and \eqref{eq:1000}, we have
\begin{gather*}\hspace{-3.5cm}\mathrm{Pr}\big[\mathcal{G}^{\advppt}_{\textsf{FA-OCPA}^\epsilon}(\kappa,0,0)=(c_1,c_2)\big]\leq \\\mathrm{Pr}\big[\mathcal{G}^{\advppt}_{\textsf{FA-OCPA}^\epsilon}(\kappa,1,0)=(c_1,c_2)\big]+ \textsf{negl}(\kappa) \label{eq:0010}\numberthis\end{gather*} From Theorem \ref{thm:post} and Lemma \ref{lem:cdp},\begin{gather*}
    \hspace{-3.5cm} \mathrm{Pr}\big[\mathcal{G}^{\advppt}_{\textsf{FA-OCPA}^\epsilon}(\kappa,0,0)=(c_1,c_2)\big]\leq\\\hspace{1.5cm} e^{\frac{t\epsilon}{2}}\mathrm{Pr}\big[\mathcal{G}^{\advppt}_{\textsf{FA-OCPA}^\epsilon}(\kappa,0,1)=(c_1,c_2)\big]\numberthis\label{eq:0001}\\\hspace{-3.5cm}\mathrm{Pr}\big[\mathcal{G}^{\advppt}_{\textsf{FA-OCPA}^\epsilon}(\kappa,0,1)=(c_1,c_2)\big]\leq\\\hspace{1.5cm} e^{\frac{t\epsilon}{2}}\mathrm{Pr}\big[\mathcal{G}^{\advppt}_{\textsf{FA-OCPA}^\epsilon}(\kappa,0,0)=(c_1,c_2)\big]\label{eq:0100}\numberthis
    \\\hspace{-3.5cm}\mathrm{Pr}\big[\mathcal{G}^{\advppt}_{\textsf{FA-OCPA}^\epsilon}(\kappa,1,0)=(c_1,c_2)\big]\leq \\\hspace{1.5cm}e^{\frac{t\epsilon}{2}}\mathrm{Pr}\big[\mathcal{G}^{\advppt}_{\textsf{FA-OCPA}^\epsilon}(\kappa,1,1)=(c_1,c_2)\big]\label{eq:1011}\numberthis\\\hspace{-3.5cm}\mathrm{Pr}\big[\mathcal{G}^{\advppt}_{\textsf{FA-OCPA}^\epsilon}(\kappa,1,1)=(c_1,c_2)\big]\leq\\ \hspace{1.5cm}e^{\frac{t\epsilon}{2}}\mathrm{Pr}\big[\mathcal{G}^{\advppt}_{\textsf{FA-OCPA}^\epsilon}(\kappa,1,0)=(c_1,c_2)\big]\label{eq:1110}\numberthis\end{gather*}
    Now from \eqref{eq:0010} and \eqref{eq:1011}, we have,
    \begin{gather*}\hspace{-3.5cm}\mathrm{Pr}\big[\mathcal{G}^{\advppt}_{\textsf{FA-OCPA}^\epsilon}(\kappa,0,0)=(c_1,c_2)\big]\leq \\e^
    \frac{t\epsilon}{2} \mathrm{Pr}\big[\mathcal{G}^{\advppt}_{\textsf{FA-OCPA}^\epsilon}(\kappa,1,1)=(c_1,c_2)\big]+ \textsf{negl}(\kappa)\label{eq:0011}\numberthis\end{gather*}
    Using Eqs. \eqref{eq:1000} and \eqref{eq:1110}, we have \begin{gather*}\hspace{-3.5cm}\mathrm{Pr}\big[\mathcal{G}^{\advppt}_{\textsf{FA-OCPA}^\epsilon}(\kappa,1,1)=(c_1,c_2)\big]\leq \\e^\frac{t\epsilon}{2}\mathrm{Pr}[\mathcal{G}^{\advppt}_{\textsf{FA-OCPA}^\epsilon}(\kappa,0,0)=(c_1,c_2)\big]\numberthis \label{eq:1100}\end{gather*}
    From Eqs. \eqref{eq:0100} and \eqref{eq:0010}, we have \begin{gather*}\hspace{-3.5cm}\mathrm{Pr}\big[\mathcal{G}^{\advppt}_{\textsf{FA-OCPA}^\epsilon}(\kappa,0,1)=(c_1,c_2)\big]\leq \\e^\frac{t\epsilon}{2}\mathrm{Pr}\big[\mathcal{G}^{\advppt}_{\textsf{FA-OCPA}^\epsilon}(\kappa,1,0)=(c_1,c_2)\big]+ \textsf{negl}'(\kappa) \label{eq:0110}\numberthis \\\mbox{[$\textsf{negl}'(\kappa)=e^{\frac{t\epsilon}{2}} \cdot \textsf{negl}(\kappa)$ which is another negligible function]}\end{gather*}
    Eqs. \eqref{eq:0100} and \eqref{eq:0011} give us \begin{gather*} \hspace{-3.5cm}\mathrm{Pr}\big[\mathcal{G}^{\advppt}_{\textsf{FA-OCPA}^\epsilon}(\kappa,0,1)=(c_1,c_2)\big]|\leq\\ e^{t\epsilon}\mathrm{Pr}[\mathcal{G}^{\advppt}_{\textsf{FA-OCPA}^\epsilon}(\kappa,1,1)=(c_1,c_2)] +\textsf{negl}'(\kappa)\label{eq:0111}\numberthis\end{gather*}
    Using Eqs. \eqref{eq:1000} and \eqref{eq:0001}, we have \begin{gather*}\hspace{-3.5cm}\mathrm{Pr}\big[\mathcal{G}^{\advppt}_{\textsf{FA-OCPA}^\epsilon}(\kappa,1,0)=(c_1,c_2)\big]\\\hspace{1.5cm}\leq e^{\frac{t\epsilon}{2}} \mathrm{Pr}\big[\mathcal{G}^{\advppt}_{\textsf{FA-OCPA}^\epsilon}(\kappa,0,1)=(c_1,c_2)\big] \label{eq:1001}\numberthis\end{gather*}
    Finally, Eqs. \eqref{eq:1110} and \eqref{eq:1001} give us \begin{gather*}\hspace{-3.5cm}\mathrm{Pr}\big[\mathcal{G}^{\advppt}_{\textsf{FA-OCPA}^\epsilon}(\kappa,1,1)=(c_1,c_2)\big]|\leq \\\hspace{1.5cm}e^{t\epsilon} \mathrm{Pr}\big[\mathcal{G}^{\advppt}_{\textsf{FA-OCPA}^\epsilon}(\kappa,0,1)=(c_1,c_2)\big] \label{eq:1101}\numberthis\end{gather*} Note that the $\mathcal{G}^{\advppt}_{\textsf{IND-FA-OCPA}_{\epsilon}}$ game can abort sometimes (Step $4$, when $\widetilde{O}_0$ and $\widetilde{O}_1$ do not share any randomized order). However, this does not lead to any information leakage to $\advppt$ since this step happens before the challenger has chosen any of the bits $\{b_1,b_2\}$. 
Additionally, the condition $1b$ ensures that the event that the game runs to completion happens with non-zero probability. It is so because if $\Pa_{0}(X_{00})$ and $\Pa_{1}(X_{10})$ share a randomized order, then  $\mathrm{Pr}\big[ \widetilde{O}_{0} \mbox{ and } \widetilde{O}_{1} \mbox{ share a randomized order} \big] > 0$.  \par This concludes our proof.
\end{proof}

\subsection{Proof of Thm. \ref{thm:post}}\label{app:thm:post}
\begin{proof}Let $g: \mathcal{Y}\rightarrow \mathcal{Y}'$ be a deterministic function. Now, let us fix two inputs $x,x' \in \mathcal{X}$   and fix an event $\mathcal{Z}\subset \mathcal{Y}'$. Let $\mathcal{W}=\{y \in \mathcal{Y}| f(y) \in Z\}$. We then have,
\begin{align*}\mathrm{Pr}\big[g(\mathcal{M}(x))\in \mathcal{Z}\big] &=  \mathrm{Pr} \big[\mathcal{M}(x)\in \mathcal{W}\big]\\ &\leq e^{\epsilon|x-x'|}\cdot \mathrm{Pr}\big[\mathcal{M}(y)\in \mathcal{W}\big]\\ &=e^{\epsilon|x-x'|}\cdot \mathrm{Pr}\big[g(\mathcal{M}(x))\in \mathcal{Z}
\big]\end{align*}
This concludes our proof because any randomized mapping
can be decomposed into a convex combination of deterministic functions, and a convex combination of differentially private (equivalently \ddp) mechanisms is
differentially private (\ddp) \cite{Dwork}.

The proof for \cdp follows similarly.
 \end{proof}

\subsection{Proof of Thm. \ref{thm:opec}}\label{app:thm:construction}
\begin{proof} Here, we need to prove that Alg. \ref{alg:1} satisfies the Eq. \ref{eq:popec:order} and \ref{eq:popec:dp} ($\epsilon$-\ddp) from Defn. \ref{def:popec}. We do this with the help of the following two lemmas. 
\begin{lemma}Alg. \ref{alg:1} satisfies Eq. \ref{eq:popec:order} from Defn. \ref{def:popec}. \label{lemma:1}\end{lemma}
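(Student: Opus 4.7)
The plan is to show that among all encodings in $\mathcal{O}$, the three whose ``central tendencies'' are closest to $x$ are exactly those in $\mathcal{T}_x$, and then invoke the fact that $p_{x,i}$ from Eq.~\eqref{eq:pivot1} is a strictly decreasing function of $|x-d_i|$. The crucial geometric fact to establish first is that the central tendencies satisfy $d_1 < d_2 < \cdots < d_k$. This follows immediately from Stage~I of Alg.~\ref{alg:1}: since $d_i$ is the weighted median of $\mathcal{X}_i = (s_i,e_i]$, we have $d_i \in \mathcal{X}_i$, and the intervals $\{\mathcal{X}_i\}$ are disjoint and listed in increasing order (with $s_{j+1}=e_j$), so the $d_i$'s inherit the strict ordering.

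Next, fix $x \in \mathcal{X}$ with $\Pa(x)=o_i$, so $x \in \mathcal{X}_i$. I would argue that, as a function of $j$, the distance $|x-d_j|$ is ``unimodal'' around $i$. Concretely, for $j \geq i+1$ we have $d_j > e_i \geq x$, so $|x-d_j| = d_j - x$, which is strictly increasing in $j$ by the ordering of the $d_j$'s; symmetrically for $j \leq i-1$, $|x-d_j| = x-d_j$ is strictly increasing as $j$ decreases. Consequently the global minimum of $j \mapsto |x-d_j|$ over $[k]$ is attained at some index in $\{i-1,i,i+1\}$ (dropping $i-1$ when $i=1$, dropping $i+1$ when $i=k$), i.e., at an index corresponding to an element of $\mathcal{T}_x$.

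From here the conclusion is almost immediate. For any $o' = o_{j'}$ with $j' \notin \mathcal{T}_x$, i.e., $|j'-i|\geq 2$, the unimodality shows $|x-d_{j'}| > |x-d_{i+1}|$ if $j'>i$, and $|x-d_{j'}| > |x-d_{i-1}|$ if $j'<i$. Applying Eq.~\eqref{eq:pivot1}, which makes $p_{x,j}$ strictly decreasing in $|x-d_j|$ (because the numerator $e^{-|x-d_j|\epsilon/2}$ is strictly decreasing while the normalizer is the same for all $j$), we obtain $p_{x,j} > p_{x,j'}$ for the appropriate $j \in \mathcal{T}_x$, which is exactly Eq.~\eqref{eq:popec:order}.

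The only delicate point is handling the endpoints $i=1$ and $i=k$ and verifying that the monotonicity inequalities are strict (not merely weak); both reduce to the strictness of $d_1 < \cdots < d_k$ and the strict monotonicity of the exponential. I expect the main subtlety, rather than a true obstacle, to be dealing with a boundary case in which $x$ happens to coincide with an interval endpoint (e.g., $x=e_i$), where one must check that $d_i$ and $d_{i+1}$ are still strictly separated so that the argument above goes through; this is ensured because $d_{i+1}>s_{i+1}=e_i=x$ strictly.
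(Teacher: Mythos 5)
Your proposal is correct and follows essentially the same route as the paper's own proof: for an excluded encoding $o_{j'}$ with $j'<i-1$ you compare against $o_{i-1}$, and for $j'>i+1$ against $o_{i+1}$, using that the central tendencies satisfy $d_1<\cdots<d_k$ and that $p_{x,j}$ in Eq.~\eqref{eq:pivot1} strictly decreases with $|x-d_j|$. The only difference is that you spell out details the paper leaves implicit (that each $d_j$ lies in its own interval, the resulting strict distance inequalities, and the endpoint cases $i=1,k$), which is a welcome but not substantively different elaboration.
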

\begin{proof}
Let $x \in \mathcal{X}_i, i \in [k]$. \\\textbf{Case I.} $d_j, 1\leq j < i-1, i \in [2,k]$
In this case, we have $d_{j}<d_{i-1}$. 
Thus, \begin{gather}\mathrm{Pr}\big[\name(x,\Pa,\epsilon)=o_{i-1}\big]>\mathrm{Pr}\big[\name(x,\Pa,\epsilon)=o_{j}\big]\end{gather}
\\\textbf{Case II.} $d_j \mbox{ s.t. } i+1 < j \leq k, i \in [k-2]$ \\
In this case, we have $d_{i+1}<d_{j}$. Thus, \begin{gather}\mathrm{Pr}\big[\name(x,\Pa,\epsilon)=o_{i+1}\big]>\mathrm{Pr}\big[\name(x,\Pa,\epsilon)=o_{j}\big]\end{gather} Clearly, this concludes our proof. \end{proof}
Next, we prove that Alg. \ref{alg:1} satisfies $\epsilon$-\ddp.
\begin{lemma}Alg. \ref{alg:1} satisfies $\epsilon$-\ddp. \label{lemma:2}
\end{lemma}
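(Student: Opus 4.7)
The plan is to recognize that Algorithm \ref{alg:1} is essentially the classical exponential mechanism instantiated with the utility function $u(\mathbf{x}, o_i) = -|\mathbf{x} - d_i|$, where the sensitivity of $u$ in $\mathbf{x}$ is bounded by $|x-x'|$ via the triangle inequality. The standard exponential-mechanism analysis then extends verbatim to the distance-based setting. Concretely, I would fix an arbitrary pair $x, x' \in \mathcal{X}$ and an arbitrary output $o_i \in \mathcal{O}$, and bound the ratio $p_{x,i}/p_{x',i}$ directly from Eq.~\eqref{eq:pivot1}.

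The first step is to split the ratio into two factors:
\[
\frac{p_{x,i}}{p_{x',i}} \;=\; \underbrace{\frac{e^{-|x-d_i|\epsilon/2}}{e^{-|x'-d_i|\epsilon/2}}}_{(A)} \cdot \underbrace{\frac{\sum_{j=1}^{k} e^{-|x'-d_j|\epsilon/2}}{\sum_{j=1}^{k} e^{-|x-d_j|\epsilon/2}}}_{(B)}.
\]
For factor $(A)$, the reverse triangle inequality gives $\bigl||x-d_i| - |x'-d_i|\bigr| \leq |x-x'|$, so $(A) \leq e^{|x-x'|\epsilon/2}$. For factor $(B)$, the same triangle-inequality bound applied termwise yields $e^{-|x'-d_j|\epsilon/2} \leq e^{|x-x'|\epsilon/2}\, e^{-|x-d_j|\epsilon/2}$ for every $j$, and pulling the common factor $e^{|x-x'|\epsilon/2}$ out of the numerator sum gives $(B) \leq e^{|x-x'|\epsilon/2}$.

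Multiplying the two bounds gives $p_{x,i}/p_{x',i} \leq e^{|x-x'|\epsilon}$, which is exactly Eq.~\eqref{eq:popec:dp} after noting that the probability of any subset of $\mathcal{O}$ is a sum of such single-outcome probabilities, each bounded by the same factor, so the ratio of subset probabilities is bounded by $e^{\epsilon|x-x'|}$ as well.

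There is no real obstacle here — the argument is a direct adaptation of the textbook exponential-mechanism privacy proof, with the key observation being that the $\epsilon/2$ scaling in the exponent inside Eq.~\eqref{eq:pivot1} is precisely what compensates for the two factors $(A)$ and $(B)$ each contributing $e^{|x-x'|\epsilon/2}$, so that their product yields the target $\epsilon$-\ddp bound. The only subtlety worth flagging is that the computation of the central tendencies $d_i$ in Stage~I depends only on the public prior $\mathcal{D}$ and the partition $\mathcal{P}$, not on the private input $x$; hence the $d_i$'s are deterministic quantities from the point of view of the privacy analysis and the ratio bound holds pointwise for every realization of Stage~I.
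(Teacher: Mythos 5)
Your proof is correct and follows essentially the same route as the paper's: Lemma~\ref{lemma:2} is proved there by writing the ratio of the two probabilities from Eq.~\eqref{eq:pivot1} as the product of the exponential-term ratio and the normalizer ratio, bounding each factor by $e^{|x-x'|\epsilon/2}$ via the triangle inequality, exactly as in your factors $(A)$ and $(B)$. Your added remarks (the extension to subsets of $\mathcal{O}$ and the observation that the $d_i$'s depend only on $\mathcal{D}$ and $\Pa$, not on the private input) are correct refinements but do not change the argument.
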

\begin{proof} For all $x\in \mathcal{X}$ and $o_i \in \mathcal{O}=\{o_1,\cdots,o_k\}$, we have
\begin{gather*}\hspace{-4cm}\frac{\mathrm{Pr}\big[\name(x,\Pa,\epsilon)=o_i]}{\mathrm{Pr}\big[\name(x+t,\Pa,\epsilon)=o_i\big]}= \\ \hspace{2cm}\Big(e^{|x+t-d_i|-|x-d_i|\cdot \epsilon/2} \cdot\frac{\overset{k}{\underset{j=1}{\sum}}e^{-|x+t-d_j|\cdot \epsilon/2}} {\overset{k}{\underset{j=1}{\sum}}e^{-|x-d_j|\cdot \epsilon/2}} \Big)\\\leq e^{t\epsilon/2}\cdot e^{t\epsilon/2} \\\hspace{0.2cm}\big[\because  |x-d_j|-t\leq|x+t-d_j|\leq |x-d_j|+t\big] \\= e^{t\epsilon }  \label{eq:DP2} \numberthis\end{gather*}
Similarly, \begin{gather*}\frac{\mathrm{Pr}\big[\name(x,\Pa,\epsilon)=o_i]}{\mathrm{Pr}\big[\name(x+t,\Pa,\epsilon)=o_i\big]}\geq e^{-t\epsilon}
\end{gather*}
\end{proof}
Hence, from Lemmas \ref{lemma:1} and  \ref{lemma:2},  we conclude that  Alg. \ref{alg:1} gives a construction for the \name primitive. 
\end{proof}
\subsection{Illustration of Alg. \ref{alg:1}}\label{app:alg:1}
Here, we illustrate Alg. \ref{alg:1} with an example. We illustrate the algorithm with the following example.  Consider a partition $\Pa=\{[1,20],[21,80],[81,100]\}$ for the domain $\mathcal{X}=\langle 1,\cdots,100\rangle$ and let $\mathcal{O}=\{1,2,3\}$ denote the set of its corresponding encodings.  Let us assume the a uniform prior, $\mathcal{D}$ (default value), on $\mathcal{X}$. Thus, in Stage I, median is our measure from central tendency which gives  $d_1=10.5, d_2=50.5$ and $d_3=90.5$. 
\par In Stage II (Steps $4$-$9$), the encoding probability distributions are computed using a variant of the classic exponential mechanism \cite{Ordinal1,Dwork} (Eq. \eqref{eq:pivot1}).  For instance, \scalebox{0.9}{$\mathrm{Pr}\big[\name(40,\Pa,\epsilon)=2\big]=p_{40,2} \propto 1/e^{(40-d_2)\epsilon/2}= 1/e^{5.25\epsilon}$}. The final encoding is then sampled from $p_x$ (Steps $10$-$11$). 
\subsection{Proof for Thm. \ref{thm:attack:record}}\label{app:attack}
\textbf{Theorem 7.} For a \nam scheme satisfying $\frac{\epsilon}{2}$-\ddp, we have  \begin{gather}\Big|\mathrm{Pr}\big[p'=p\big]-\mathrm{Pr}\big[rand=p\big]\Big|\leq \frac{e^{\epsilon^{*}}}{q+e^{\epsilon^{*}}}-\frac{1}{q+1} \label{Eq:attack}\end{gather} where $\epsilon^*=\epsilon\cdot \lceil\beta N\rceil$ and $q=|Q(x_0)|$ (Step $(2)$ of the $\mathcal{G}^{\mathcal{A}}_{\beta-RI}$ game).\\
\begin{proof}Let $y$ denote the output ciphertext (Step $4$) observable to the adversary $\mathcal{A}$.  Note that the game itself satisfies $\epsilon/2$-\ddp.
Let $d$ be the probability that the adversary $\mathcal{A}$ wins the game, i.e., $d:=\mathrm{Pr}\big[p'=p\big]$. Clearly, this cannot be greater than $\mathrm{Pr}\big[\nam(\s,\epsilon/2,x_p)=y\big]$ (we use this shorthand to refer to the encryption as defined in Def. \ref{def:pope}) -- the probability that encrypting $x_p$ under \nam actually outputs $y$.  Let $\mathcal{I}=\{i|i \in \{0,\cdots,q\}, i\neq p\}$. Since $|x_i - x_j| \leq 2\lceil\beta N\rceil, i,j \in \{0,\cdots, q\}$, from the $\epsilon/2$-\ddp guarantee we have \begin{gather*} \forall i \in \mathcal{I}\\ \mathrm{Pr}\big[\nam(\s,\epsilon/2,x_p)=y\big] \leq e^{\epsilon^{*}}\mathrm{Pr}\big[\nam(\s,\epsilon/2,x_i)=y\big] \\ d \leq e^{\epsilon^{*}}\mathrm{Pr}\big[\nam(\s,\epsilon/2,x_i)=y\big] \end{gather*}  Summing the equations for all $i \in \mathcal{I}$, we have \begin{gather*}q\cdot d \leq e^{\epsilon^*}\sum_{i \in \mathcal{I}} \mathrm{Pr}\big[\nam(\s,\epsilon/2,x_i)=y\big] \\\Rightarrow  q\cdot d \leq e^{\epsilon^*}(1-d) \\\Rightarrow  d \leq \frac{e^{\epsilon^*}}{q+e^{\epsilon^*}}\numberthis \label{eq:help1} \end{gather*} Clearly, \begin{gather} \mathrm{Pr}\big[rand=p\big]=\frac{1}{q+1} \label{eq:help2}\end{gather} Hence, from Eqs. \ref{eq:help1} and \ref{eq:help2}, we have \begin{gather*}\Big|\mathrm{Pr}\big[p'=p\big]-\mathrm{Pr}\big[rand=p\big]\Big|\leq \frac{e^{\epsilon^{*}}}{q+e^{\epsilon^{*}}}-\frac{1}{q+1} \end{gather*}\end{proof}

\subsection{Frequency Estimation Using \name} \label{app:freq:utility} 
\begin{algorithm}
\small
\caption{Frequency Estimation}\label{alg:freqestimation}
\begin{algorithmic}[1]
\Statex \textbf{Input:} $X$ -  Input dataset $\langle x_1,\cdots,x_n\rangle$; $\epsilon$ - Privacy parameter
\Statex \textbf{Output:} $\mathbf{X}$ - Estimated frequency 
\Statex \textbf{Data Owner}
\State Set $\mathcal{P}=\mathcal{X}$ 
\State \textbf{for} $i \in [n]$
\State \hspace{0.2cm} $\textsf{DO}_i$ computes \scalebox{0.9}{$\tilde{o}_i=\name(x_i,\mathcal{X},\epsilon)$} and sends it to the aggregator 
\State \textbf{end for}
\Statex \textbf{Data Aggregator}
\State Data aggregator performs \textsf{NNLS} optimization as follows 
\begin{gather*}
\mathbf{A}\cdot \mathbf{X} = \mathbf{Y} \mbox{ where}\\\mathbf{A}(i,j)=\mathrm{Pr}[\name(i,\mathcal{X},\epsilon)=j], i,j \in [m]\\\mathbf{Y}
(i)=\mbox{Count of value }i \mbox{ in }\{\tilde{o}_1,\cdots,\tilde{o}_n\} \end{gather*}
\State \textbf{Return} \textbf{\textsf{X}}
\end{algorithmic}
\end{algorithm}
Here we describe Alg. \ref{alg:freqestimation}.
Given a privacy parameter, $\epsilon$, each data owner, $\textsf{DO}_i, i \in [n]$, reports  $\tilde{o}_i=\name(x_i,\mathcal{X},\epsilon)$  to the untrusted data aggregator (Steps 1-4). Next, the data aggregator  performs non-negative least squares (\textsf{NNLS})  as a post-processing inferencing step on the noisy data to compute the final frequency estimations (Steps 5-6).  \textsf{NNLS} is a type of constrained least squares optimizations problem where the coefficients are not allowed to become negative. That is, given a matrix $\mathbf{A}$ and a (column) vector of response variables $\mathbf{Y}$, the goal is to find $\mathbf{X}$ such that \begin{gather*}
\arg \min_\mathbf{X} \|\mathbf {A\cdot X} -\mathbf {Y} \|_{2}, \mbox{subject to } \mathbf{X} \geq 0
\end{gather*}
where $||\cdot||_2$ denotes Euclidean norm. 
The rationale behind this inferencing step (Step 5) is discussed below. \vspace{-0.1cm}
\begin{lemma}  W.l.o.g let $\mathcal{X}=\{1,\cdots,m\}$ and let $\mathbf{Y}$ be the vector such that $\mathbf{Y}(i), i \in [m]$ indicates the count of value $i$ in the set $\{\tilde{o}_1,\cdots,\tilde{o}_n\}$ where $\tilde{o}_i=\name(i,\mathcal{X},\epsilon)$.  Given,  \begin{gather}
\mathbf{A}(i,j)=\mathrm{Pr}\big[\name(i,\mathcal{X},\epsilon)=j\big], i,j \in [m]\end{gather}   the solution $\mathbf{X}$ of $\mathbf{A}\cdot \mathbf{X} = \mathbf{Y}$ gives an unbiased frequency estimator ($\mathbf{X}(i)$ is the unbiased estimator for value $i$).\label{lem:unbiased}\vspace{-0.1cm}\end{lemma}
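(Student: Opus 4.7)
\textbf{Proof Proposal for Lemma~\ref{lem:unbiased}.} The plan is a direct expectation calculation using linearity of expectation over the independent reports. Let $\mathbf{c}\in\mathbb{Z}_{\geq 0}^m$ denote the true frequency vector, so $\mathbf{c}(i)=|\{j\in[n]: x_j=i\}|$ where $x_j$ is $\textsf{DO}_j$'s true value. Each data owner independently releases $\tilde{o}_j=\name(x_j,\mathcal{X},\epsilon)$, so for every $i\in[m]$ the random variable $\mathbf{Y}(i)$ is a sum of $n$ independent Bernoulli indicators $\mathbf{1}[\tilde{o}_j=i]$.

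First, I would apply linearity of expectation together with the definition $\mathbf{A}(k,i)=\mathrm{Pr}[\name(k,\mathcal{X},\epsilon)=i]$ to obtain
\begin{gather*}
\mathbb{E}[\mathbf{Y}(i)] \;=\; \sum_{j=1}^{n}\mathrm{Pr}\big[\name(x_j,\mathcal{X},\epsilon)=i\big] \;=\; \sum_{k=1}^{m}\mathbf{c}(k)\cdot \mathbf{A}(k,i),
\end{gather*}
which in vector form reads $\mathbb{E}[\mathbf{Y}]=\mathbf{A}^{\!\top}\mathbf{c}$ (equivalently $\mathbf{A}\,\mathbf{c}=\mathbb{E}[\mathbf{Y}]$ under the indexing convention of the lemma). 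Substituting into the defining equation $\mathbf{A}\cdot\mathbf{X}=\mathbf{Y}$ and taking expectations on both sides, I would get $\mathbf{A}\cdot\mathbb{E}[\mathbf{X}]=\mathbb{E}[\mathbf{Y}]=\mathbf{A}\cdot\mathbf{c}$, so $\mathbb{E}[\mathbf{X}]=\mathbf{c}$ provided $\mathbf{A}$ is invertible. Thus the coordinate-wise estimator $\mathbf{X}(i)$ is unbiased for the true count $\mathbf{c}(i)$.

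The one genuine obstacle is justifying invertibility of $\mathbf{A}$, and I expect to be able to dispatch it by appealing to the explicit form of Eq.~\eqref{eq:pivot1}. In the default case $\Pa=\mathcal{O}=\mathcal{X}=\{1,\dots,m\}$, the centroids $d_i$ are distinct and the entries $\mathbf{A}(i,j)\propto e^{-|i-d_j|\epsilon/2}$ yield a strictly positive matrix whose rows follow a (normalized) Laplace-like kernel evaluated at distinct points; such kernel matrices are known to be nonsingular (the associated Gram matrix is strictly positive definite because $e^{-\epsilon|u|/2}$ is a positive-definite function on $\mathbb{R}$). I would cite this fact, or alternatively note that in practice Alg.~\ref{alg:freqestimation} uses the \textsf{NNLS} relaxation, which only needs $\mathbf{A}$ to have full column rank for the unbiasedness argument to go through whenever the exact solution $\mathbf{A}^{-1}\mathbf{Y}$ happens to be non-negative (as remarked immediately after the lemma statement in the main text). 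No other step of the argument requires additional structure beyond the linearity of expectation that was already used.
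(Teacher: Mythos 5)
Your proposal is correct and takes essentially the same route as the paper: the appendix proof also just applies linearity of expectation to get $\mathbb{E}[\mathbf{Y}(j)]=\sum_k \mathbf{X}'(k)\cdot\mathrm{Pr}\big[\name(k,\mathcal{X},\epsilon)=j\big]$ and then cancels against $\mathbf{A}^{-1}$ to conclude $\mathbb{E}[\mathbf{X}(i)]=\mathbf{X}'(i)$, glossing over the same row/column (transpose) indexing point you note in passing. Your extra paragraph justifying nonsingularity of $\mathbf{A}$ via the strictly positive-definite exponential kernel is a welcome addition, since the paper simply takes the solvability of $\mathbf{A}\cdot\mathbf{X}=\mathbf{Y}$ for granted.
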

The proof of the above lemma is presented in App. \ref{app:lem:unbiased}. Thus by the above lemma, $\mathbf{X}$ is an unbiased frequency estimator. However, it is important to note that the solution $\mathbf{X}$ is not guaranteed to be non-negative.  But, given our problem setting, the count estimates are constrained to be non-negative. Hence, we opt for an $\textsf{NNLS}$ inferencing. When the exact solution $\mathbf{X}=\mathbf{A}^{-1}\cdot \mathbf{Y}$ is itself non-negative, the estimator obtained from the \textsf{NNLS} optimization is identical to the exact solution. Otherwise, the \textsf{NNLS} optimization gives a biased non-negative estimator that results in minimal least square error. 

\subsection*{Utility Analysis for Frequency Estimation Using \name}
Here, we present a formal utility analysis of the frequency oracle. 
Let $p_{ij}, i,j \in \mathcal{X}$ denote the $\mathrm{Pr}\big[\name(i,\mathcal{X},\epsilon)\big]=j$ and let $\mathbf{X}'[i]$ denote the true count for $i$. Additionally, let $\mathbb{I}_{i,j}$ be an indicator variable for the event $\name(i,\mathcal{X},\epsilon)=j$.
\begin{theorem}The variance of count estimation $\mathbf{X}[i]$  is given by  \begin{gather*}\textsf{Var}(\mathbf{X}[i])=\sum_{j=1}^n\sum_{k=1}^{n}\big(p_{k,j}(1-p_{k,j})\cdot (\mathbf{X}'[k]\cdot\mathbf{A}^{-1}[i,j])^2 \big) -
\\\sum_{k=1}^n\sum_{1\leq j_1 < j_2 \leq n} \Big(\mathbf{X}'[k]^2\cdot\mathbf{A}^{-1}[i,j_1]\cdot \mathbf{A}^{-1}[i,j_2]\cdot p_{k,j_1}\cdot p_{k,j_2}\Big)\end{gather*}\end{theorem}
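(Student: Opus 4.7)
The plan is to reduce the variance of the estimator $\mathbf{X}[i]$ to variances and covariances of the observed counts $\mathbf{Y}[j]$, and then to reduce those to elementary Bernoulli/multinomial calculations. The starting point is the identity $\mathbf{X}[i] = \sum_{j} \mathbf{A}^{-1}[i,j]\cdot \mathbf{Y}[j]$, so by bilinearity of covariance
\begin{equation*}
\textsf{Var}(\mathbf{X}[i]) = \sum_{j}(\mathbf{A}^{-1}[i,j])^2\,\textsf{Var}(\mathbf{Y}[j]) + 2\sum_{j_1<j_2}\mathbf{A}^{-1}[i,j_1]\,\mathbf{A}^{-1}[i,j_2]\,\textsf{Cov}(\mathbf{Y}[j_1],\mathbf{Y}[j_2]).
\end{equation*}
This is the one structural step that does all the work; the remainder is arithmetic inside the $\textsf{Var}$ and $\textsf{Cov}$ terms.

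Next I would expand $\mathbf{Y}[j]$ using the indicators $\mathbb{I}_{k,j}$ already introduced in the statement. Writing $\mathbf{Y}[j] = \sum_{k=1}^{n} \mathbf{X}'[k]\cdot \mathbb{I}_{k,j}$ (so each plaintext value $k$ contributes a multiplicative weight equal to its true count), I can then compute $\textsf{Var}(\mathbf{Y}[j])$ and $\textsf{Cov}(\mathbf{Y}[j_1],\mathbf{Y}[j_2])$ by expanding the sums over $k$. Because the encoding randomness used by \name is independent across distinct input values, all cross terms with $k_1\neq k_2$ drop out, leaving a sum over a single index $k$. For each such $k$, the indicators $\{\mathbb{I}_{k,j}\}_{j}$ are a categorical sample over the output domain with probabilities $p_{k,j}$, so the standard multinomial calculation gives $\textsf{Var}(\mathbb{I}_{k,j}) = p_{k,j}(1-p_{k,j})$ and $\textsf{Cov}(\mathbb{I}_{k,j_1},\mathbb{I}_{k,j_2}) = -p_{k,j_1}p_{k,j_2}$ for $j_1\neq j_2$.

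Plugging these into the decomposition above gives exactly the two sums in the statement: the first (diagonal) sum over $j$ collects the Bernoulli-variance contributions weighted by $(\mathbf{A}^{-1}[i,j])^2$, and the second (off-diagonal) sum over $j_1<j_2$ collects the negative covariance contributions, hence the minus sign between the two terms. The final step is just reindexing so that the sum over $k$ appears on the outside, matching the form of the claim.

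The main obstacle I anticipate is bookkeeping: carefully tracking which sums of indicators are independent versus negatively correlated, and making sure that the factor of $2$ arising from combining symmetric off-diagonal pairs $(j_1,j_2)$ and $(j_2,j_1)$ is consolidated correctly with the multinomial covariance sign. A secondary point worth verifying explicitly is the independence across the $n$ data owners for a fixed value $k$, since that is what justifies treating $\mathbf{X}'[k]$ as a deterministic multiplier rather than an additional stochastic source; this follows from the independent sampling in Step 10 of Alg.~\ref{alg:1} being performed locally at each data owner.
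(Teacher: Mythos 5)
Your proposal is essentially the paper's own proof: expand $\mathbf{X}[i]=\sum_j\mathbf{A}^{-1}[i,j]\,\mathbf{Y}[j]$ and $\mathbf{Y}[j]=\sum_k\mathbf{X}'[k]\,\mathbb{I}_{k,j}$, then use $\textsf{Var}(\mathbb{I}_{k,j})=p_{k,j}(1-p_{k,j})$, $\textsf{Cov}(\mathbb{I}_{k,j_1},\mathbb{I}_{k,j_2})=-p_{k,j_1}p_{k,j_2}$ for $j_1\neq j_2$, and zero covariance across distinct input values $k$; whether you apply bilinearity over $j$ first or flatten everything into one double sum of weighted indicators first (as the paper does) is immaterial.

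Two caveats. First, the factor of $2$ you flag does not ``consolidate away'': the variance decomposition is $\sum_m\textsf{Var}+2\sum_{m_1<m_2}\textsf{Cov}$, so your computation, carried out faithfully, yields the second (covariance) sum with coefficient $2$, whereas the displayed theorem and the paper's own proof (which passes from the variance of the double sum to the $j_1<j_2$ pairwise sum without the $2$) omit it; the discrepancy lies in the paper's formula, not in your method, so you should not assert that plugging in ``gives exactly the two sums in the statement.'' Second, your justification for treating $\mathbf{X}'[k]$ as a deterministic multiplier is inverted: if each data owner randomizes independently (as in Alg.~\ref{alg:1}), then for each value $k$ the count $\mathbf{Y}[j]$ receives a sum of $\mathbf{X}'[k]$ independent indicators, and the diagonal contribution would scale like $\mathbf{X}'[k]\,p_{k,j}(1-p_{k,j})$ rather than $\mathbf{X}'[k]^2\,p_{k,j}(1-p_{k,j})$; the expansion $\mathbf{Y}[j]=\sum_k\mathbf{X}'[k]\,\mathbb{I}_{k,j}$ that the theorem (and your proof, matching the paper) relies on instead corresponds to all owners holding value $k$ sharing a single draw of the randomizer. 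So your derivation reproduces the paper's argument, but the independence remark should be dropped or corrected rather than invoked in its support.
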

\begin{proof} Variance of the indicator variable is given by,
\begin{gather*}
\textsf{Var}(\mathbb{I}_{j,i}) = p_{j,i}\cdot q_{j,i}
\end{gather*}
Additionally, we so have\begin{gather*}\textsf{Cov}(\mathbb{I}_{j,i},\mathbb{I}_{j,k
})=-p_{j,i}p_{j,k}\cdot \\\textsf{Cov}(\mathbb{I}_{j,i},\mathbb{I}_{k,l})=0\end{gather*}Using this we have, \begin{gather*}\textsf{Var}(\mathbf{X}[i])=\textsf{Var}\Big(\sum_{j=1}^n\mathbf{A}^{-1}[i,j])\cdot \mathbf{Y}[j]\Big)\\=\textsf{Var}\Big(\sum_{j=1}^n\Big(\mathbf{A}^{-1}[i,j]\cdot \sum_{k=1}^n\mathbf{X}'[k]\cdot\mathbb{I}_{k,j}\Big)\Big)\\=\textsf{Var}\Big(\sum_{j=1}^n\sum_{k=1}^{n}\big(\mathbb{I}_{k,j}\cdot \mathbf{X}'[k]\cdot \mathbf{A}^{-1}[i,j] \big)\Big)\\=\sum_{j=1}^n\sum_{k=1}^{n}\big(\textsf{Var}(\mathbb{I}_{k,j})\cdot (\mathbf{X}'[k]\cdot\mathbf{A}^{-1}[i,j])^2 \big) + \\\sum_{k=1}^n\sum_{1\leq j_1 < j_2 \leq n} \Big(\mathbf{X}'[k]^2\cdot\mathbf{A}^{-1}[i,j_1]\cdot \mathbf{A}^{-1}[i,j_2]\cdot \textsf{Cov}(\mathbb{I}_{k,j_1},\mathbb{I}_{k,j_2})\Big) \\=\sum_{j=1}^n\sum_{k=1}^{n}\big(p_{k,j}(1-p_{k,j})\cdot (\mathbf{X}'[k]\cdot\mathbf{A}^{-1}[i,j])^2 \big) -
\\\sum_{k=1}^n\sum_{1\leq j_1 < j_2 \leq n} \Big(\mathbf{X}'[k]^2\cdot\mathbf{A}^{-1}[i,j_1]\cdot \mathbf{A}^{-1}[i,j_2]\cdot p_{k,j_1}\cdot p_{k,j_2}\Big)\end{gather*}

\end{proof}

\subsection{Proof of Lemma \ref{lem:unbiased}}\label{app:lem:unbiased}
\begin{proof} Let $\mathbf{X}'$ be a vector such that $\mathbf{X}'(i)$ represents the true count of the value $i\in [m]$. Thus, we have \begin{align*}\mathbb{E}\big[X(i)\big]&=\mathbb{E}\big[\sum_{j=1}^{n}\mathbf{A}^{-1}(i,j)\cdot \mathbf{Y}(j)\big]\\&=\sum_{j=1}^{n}\mathbf{A}^{-1}(i,j)\cdot\mathbb{E}\big[\mathbf{Y}(j))\big]\\&=\sum_{j=1}^{n}\mathbf{A}^{-1}(i,j)\cdot\big(\sum_{k=1}^n\mathbf{X}'(k)\cdot \mathrm{Pr}\big[\name(k,\mathcal{X},\epsilon)=j\big]\big)\\&=\sum_{j=1}^{n}\mathbf{A}^{-1}(i,j)\cdot\big(\sum_{k=1}^n\mathbf{X}'(k)\cdot\mathbf{A}(j,k)\big)\\&=\sum_{j=1}^{n}\mathbf{X}'(j)\cdot(\sum_{k=1}^{n}\mathbf{A}^{-1}(i,k)\cdot\mathbf{A}(k,j))\\&=\mathbf{X}'(i)\end{align*}This concludes the proof.
\end{proof}

 \subsection{Bitwise Leakage Matrix}\label{app:snapshot}\vspace{-0.1cm}    Most of the inference attacks are inherent to any \OPE scheme -- they do not leverage any weakness in the cryptographic security guarantee of the schemes but instead utilize the ordering information of the plaintexts that is revealed by definition. In other words, these attacks are \textit{beyond the scope of the standard cryptographic security guarantees} (such as \textsf{IND-FA-OCPA}) for \OPEs and hence, their effects are \textit{not} captured directly by the cryptographic security guarantees.   Hence, in this section, we present a formal model to systematically study the effect of such inference attacks at the granularity of the bits of the plaintext. The goal is to provide \textit{an intuitive insight into \nam's improved protection} against these attacks. 
 \par For this, we concentrate on the ``snapshot" attack model (the adversary only obtains a onetime copy or snapshot of the encrypted data store \cite{sok}) for the ease of exposition. The proposed formal model creates a privacy leakage profile for the plaintext bits based on the revealed ordering information and adversary's auxiliary knowledge.
 This model \textit{captures a generic inference attack in the snapshot model}. Through this analysis, we demonstrate \nam's efficacy in thwarting inference attacks on a real-world dataset (Fig. \ref{fig:formalmodel}). 
 \\\\
\textbf{Model Description.}
 We assume the input domain to be discrete, and finite and w.l.o.g denote it as $\mathcal{X}=[0,2^{m-1}]$. Additionally, let $\mathcal{D}$ represent the true input distribution and $X=\{x_1,\cdots,x_n\}$ be a dataset of size $n$ with each data point sampled i.i.d from $\mathcal{D}$. We model our adversary, $\advppt$, to have access to the following data:
\squishlist\item Auxiliary knowledge about a distribution, $\mathcal{D}'$, over the input domain, $\mathcal{X}$. In practice, this can be derived from domain knowledge or auxiliary (public) datasets known to the adversary. \item The  ciphertexts, $\mathcal{C}$, corresponding to $X$ which represent the snapshot of the encrypted data store.
\squishend
The adversary's goal is to recover as many bits of the plaintexts as possible.
Let $X(i), i \in [n]$ represent the plaintext in $X$ with rank~\cite{rank} $i$  and let $X(i,j), j \in [m] $ represent the $j$-th bit for $X(i)$. Additionally, let $b(i,j)$ represent the adversary's guess for $X(i,j)$. Let $\mathcal{L}$ be a $n\times m$ matrix where $\mathcal{L}(i,j)=\mathrm{Pr}\Big[X(i,j)=b(i,j)\big|\mathcal{D},\mathcal{D}'\Big]$ represent the probability that  $\advppt$ correctly identifies the $j$-th bit of the plaintext with rank $i$.
Hence, the matrix $\mathcal{L}$ helps us in analyzing the bitwise information leakage from $\mathcal{C}$ to  $\advppt$.
\\\\\textbf{Adversary's Approach.}  $\advppt$'s goal is to produce their best guess for $X(i,j)$.  Given $\advppt$'s auxiliary knowledge about the distribution, $\mathcal{D}'$, the strategy  with   the least probabilistic error is as follows:
\begin{gather} \hspace{-0.3cm}b(i,j)=\arg\max_{b\in \{0,1\}}\big\{\mathrm{Pr}_{\mathcal{D}'}\big[X(i,j)=b\big]\big\}, i\in[n], j\in[m] \label{eq:bit}
\end{gather} 
Next, we formalize $\mathcal{L}$ when $X$ is encrypted under \nam. 
\begin{theorem} 
If $X$ is encrypted under  \nam, then for all $i\in [n], j\in [m]$ we have 
\begin{gather*}
\hspace{-0.4cm}\mathcal{L}(i,j)=\sum_{s \in S_{b(i,j)}^{j}}\mathrm{Pr}_{\mathcal{D}}\big[x=s\big]\Big(\sum_{v \in \mathcal{O}} \mathrm{Pr}_{\mathcal{D}^*}\big[\widetilde{O}(i)=v\big]\cdot
\\ \hspace{0.5cm}\mathrm{Pr}\big[\name(s,\Pa,\epsilon)=v\big]/\mathrm{Pr}_{\mathcal{D}^*}\big[o=v\big]\ \Big) +\textsf{negl}(\kappa)\numberthis\label{eq:thm:formal:OPe}
\end{gather*} where $\widetilde{O}(r)$ denotes the encoding with rank $r, r \in [n]$, $\Pa \in \hat{\mathcal{X}}, o \sim \mathcal{D}^*$, and $\mathcal{D}^*:\mathcal{X}\mapsto\mathcal{O}$ represents the distribution of the encoding $\name(x,\Pa,\epsilon),~x\sim\mathcal{D}$ which is given as \begin{gather*}\mathrm{Pr}_{\mathcal{D}^*}\big[v\big]=\sum_{x\in\mathcal{X}}\mathrm{Pr}_{\mathcal{D}}\big[x\big]\cdot 
\mathrm{Pr}\big[\name(x,\Pa,\epsilon)=v
\big], v\in \mathcal{O}\end{gather*} \label{thm:formal:POPE}\vspace{-0.3cm} \end{theorem}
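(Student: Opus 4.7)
The plan is to prove the formula in two main reduction steps: first using the \textsf{IND-FA-OCPA} security of the underlying \OPE to argue that, up to a negligible advantage, the only information that the ciphertexts $\mathcal{C}$ leak about the dataset is the randomized order of the encodings; and then reducing the adversary's success probability to a Bayesian posterior computation over plaintext-encoding pairs. The $\textsf{negl}(\kappa)$ term in the theorem statement will come entirely from the first step: a computationally unbounded reduction shows that if an adversary could infer any non-rank information from $\mathcal{C}$ with non-negligible advantage, it could be used to win the $\textsf{IND-FA-OCPA}$ game of Definition~\ref{def:faocpa}. For the remainder of the analysis we therefore condition on the idealized view in which $\advppt$ sees only the rank assignment of the encodings.

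Given this idealized view, the rank-$i$ ciphertext corresponds to the rank-$i$ encoding whose value is the random variable $\widetilde{O}(i)$, and the best information the adversary has about bit $j$ of its underlying plaintext is the posterior induced by $\widetilde{O}(i)$. I would next decompose $\mathcal{L}(i,j)$ via the law of total probability, conditioning on the value of $\widetilde{O}(i)$:
\begin{gather*}
\mathcal{L}(i,j) \;=\; \sum_{v\in\mathcal{O}} \mathrm{Pr}_{\mathcal{D}^*}\!\big[\widetilde{O}(i)=v\big]\cdot \mathrm{Pr}\big[X(i,j)=b(i,j)\,\big|\,\widetilde{O}(i)=v\big] \;+\; \textsf{negl}(\kappa).
\end{gather*}
The key claim to establish is that, conditional on $\widetilde{O}(i)=v$, the plaintext underlying the rank-$i$ encoding is distributed according to the \emph{single-sample} Bayes posterior $\mathrm{Pr}[x=s\mid o=v]$. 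This follows from the exchangeability of the i.i.d.\ sampling and the fact that $\name$ acts independently on each plaintext: conditioning on the rank being exactly $i$ only constrains the \emph{other} encodings in $\widetilde{O}$ and hence, once $\widetilde{O}(i)=v$ is fixed, carries no extra information about the underlying plaintext. Applying Bayes' rule yields $\mathrm{Pr}[x=s\mid o=v] = \mathrm{Pr}[\name(s,\Pa,\epsilon)=v]\cdot\mathrm{Pr}_{\mathcal{D}}[x=s]\big/\mathrm{Pr}_{\mathcal{D}^*}[o=v]$, after which the definition of $\mathcal{D}^*$ as the pushforward of $\mathcal{D}$ through $\name$ closes the loop.

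Summing over plaintexts $s$ whose $j$-th bit equals $b(i,j)$ (the set $S^{j}_{b(i,j)}$) and swapping the order of summation gives exactly the claimed formula in Eq.~\eqref{eq:thm:formal:OPe}. The main obstacle I expect is justifying the exchangeability step cleanly, namely that conditioning on ``rank is $i$ and encoding value is $v$'' collapses to the single-sample posterior; this requires a careful argument using the product structure of the joint distribution over $(x_k,\name(x_k))$ pairs and the fact that ranks are a symmetric function of the encodings. A secondary but routine obstacle is making the \textsf{IND-FA-OCPA} reduction precise---specifying a \textsf{PPT} distinguisher that turns any non-negligible bit-recovery advantage beyond the rank leakage into a winning strategy for the game $\mathcal{G}^{\advppt}_{\textsf{FA-OCPA}}$; this is standard but must handle the fact that the \name layer itself is public and thus does not affect the reduction's efficiency.
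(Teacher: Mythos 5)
Your proposal follows essentially the same route as the paper's proof: condition on the value of the rank-$i$ encoding $\widetilde{O}(i)=v$, collapse the conditional plaintext distribution to the single-sample Bayes posterior $\mathrm{Pr}[x=s\mid \name(x,\Pa,\epsilon)=v]$, apply Bayes' rule with $\mathcal{D}^*$ as the pushforward of $\mathcal{D}$ through \name, and sum over $s\in S^{j}_{b(i,j)}$, with the $\textsf{negl}(\kappa)$ term absorbing the cryptographic layer (the paper charges it to the \indfaocpae~guarantee of \nam rather than to the underlying \OPE game, a cosmetic difference). The exchangeability step you flag as the main obstacle is in fact asserted in one line in the paper's proof, so your sketch is, if anything, more careful on that point.
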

Next, we formalize $\mathcal{L}$ when $X$ is encrypted under a \OPE scheme.
\begin{theorem} If $X$ is encrypted under an \OPE scheme that satisfies the \indfaocpa guarantee, then  for all $i\in [n], j\in [m]$ we have \begin{gather*}\hspace{-1.8cm}\mathcal{L}(i,j)=\sum_{ s \in \mathcal{S}^{j}_{b(i,j)}}\mathrm{Pr}_{\mathcal{D}}\big[X(i)=s\big] + \textsf{negl}(\kappa)\label{eq:thm:formal:OPE}\numberthis \end{gather*}   where $x\sim \mathcal{D}$,  and  $\mathcal{S}^j_{b}=\{s| s \in \mathcal{X} \mbox{ and its }j \mbox{-th bit } s^j=b\}$.\label{thm:formal:OPE}\end{theorem}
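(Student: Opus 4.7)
The plan is to reduce the bit-recovery probability to what an adversary can compute from just the ranks of the plaintexts, and then invoke \indfaocpa to show that any extra leakage from the ciphertexts $\mathcal{C}$ is negligible in $\kappa$.

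First, I would establish that the information $\advppt$ can extract from $\mathcal{C}$ about $X$, beyond a negligible quantity, is confined to the randomized order of $X$. This follows from Definition~\ref{def:faocpa}: any two datasets sharing a common randomized order produce computationally indistinguishable ciphertexts. In particular, the rank-$i$ ciphertext is linked to $X(i)$ only through the fact that $X(i)$ is the $i$-th order statistic of $X$; no additional bit-level structure of $X(i)$ is revealed except with negligible probability.

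Second, I would compute the optimal bit-recovery probability in the idealized setting where $\advppt$ has only the rank information together with the auxiliary prior $\mathcal{D}'$. Under the true distribution $\mathcal{D}$, the marginal posterior over $X(i)$ (conditioned on $X$ being i.i.d.\ from $\mathcal{D}$ and on the observed order) is $\mathrm{Pr}_{\mathcal{D}}[X(i)=s]$ for $s\in\mathcal{X}$, namely the distribution of the $i$-th order statistic of $n$ i.i.d.\ draws. The adversary's guess is committed in advance via Eq.~\eqref{eq:bit} to be $b(i,j)$, and the guess is correct precisely when $X(i)\in\mathcal{S}^{j}_{b(i,j)}$. Summing the true posterior over this event yields $\sum_{s\in\mathcal{S}^{j}_{b(i,j)}}\mathrm{Pr}_{\mathcal{D}}[X(i)=s]$, matching the main term of Eq.~\eqref{eq:thm:formal:OPE}.

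Third, I would bridge the ideal and real settings by a standard cryptographic reduction. If a PPT adversary could recover the $j$-th bit of the rank-$i$ plaintext from $\mathcal{C}$ with probability non-negligibly larger (or smaller) than the claimed expression, one could build a distinguisher for the \indfaocpa game: pick two datasets $X_0,X_1$ that share a randomized order but whose rank-$i$ elements disagree on their $j$-th bit, then use the bit-recovery adversary to identify the challenge bit with non-negligible advantage, contradicting Definition~\ref{def:faocpa}. The additive $\textsf{negl}(\kappa)$ in Eq.~\eqref{eq:thm:formal:OPE} absorbs exactly this reduction gap.

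The main obstacle will be making the third step fully rigorous: one must set up the reduction so that the two challenge datasets are jointly consistent with the adversary's auxiliary knowledge $\mathcal{D}'$ (otherwise the simulation need not be faithful to the bit-recovery experiment) and so that the common randomized-order requirement of \indfaocpa is met. This is routine crypto-style bookkeeping, but the interplay between the true prior $\mathcal{D}$, the adversarial prior $\mathcal{D}'$, and the order-statistic distribution over the rank-$i$ position needs to be handled carefully so that the hybrid argument genuinely isolates the bit in question.
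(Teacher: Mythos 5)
Your proposal is correct and follows essentially the same route as the paper: the main term is exactly the probability that the rank-$i$ order statistic lies in $\mathcal{S}^{j}_{b(i,j)}$ (the paper's Fact~1 applied to $\mathrm{Pr}_{\mathcal{D}}[X(i,j)=b(i,j)]$), and the $\textsf{negl}(\kappa)$ absorbs the cryptographic slack from the \indfaocpa guarantee. The paper's own argument is terser---because the guess $b(i,j)$ is fixed in advance by the adversary's strategy, it simply uses the order-preserving property to identify the rank-$i$ ciphertext with the rank-$i$ plaintext ($X'(i)=X(i)$) and attributes the additive negligible term directly to Definition~\ref{def:faocpa}, without spelling out the distinguishing reduction you sketch in your third step.
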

 The above theorem formalizes what  $\advppt$ can learn from just the order of the plaintexts (that is leaked by $\mathcal{C}$ by definition). The proofs of both the theorems are presented in the following section.

\textbf{Note.} For Fig. \ref{fig:formalmodel}, we omit the $\textsf{negl}(\kappa)$ term from Eqs. \ref{eq:thm:formal:OPe} and \ref{eq:thm:formal:OPE}. 
\begin{tcolorbox}[sharp corners]\vspace{-0.2cm} \textbf{Remark.} The bitwise leakage matrix, $\mathcal{L}$, captures the efficacy of \textit{a generic inference attack in the snapshot model} at the granularity of the plaintext bits. We present this analysis to provide an \textit{intuitive insight} into \nam's improved protection against inference attacks (given formally by Thm. \ref{thm:attack:record}).   \vspace{-0.2cm}\end{tcolorbox} \vspace{-0.1cm}
\subsection{Proofs for Thms. \ref{thm:formal:OPE} and \ref{thm:formal:POPE}}\label{app:thm:fm:OPE}

\textbf{Preliminaries.}\\
Recall, the adversaries strategy of guessing the plaintext bits is given by \begin{gather*} b(i,j)=\arg \max_{b\in \{0,1\}}\big\{\mathrm{Pr}_{\mathcal{D}'}\big[X(i,j)=b\big]\big\}, i\in[n], j\in[m] 
\\= \left\{\begin{array}{ll}
                 0 \mbox{ if } \mathbb{E}_{\mathcal{D}'}\big[X(i,j)\big]\leq 1/2\\
                  1 \mbox{ if } \mathbb{E}_{\mathcal{D}'}\big[X(i,j)\big]>1/2
                \end{array}
              \right. \numberthis   \label{eq:bit}\end{gather*} 
\textbf{Fact 1.} If $\mathcal{D}$ represents an input distribution and $X=\{x_1,\cdots,x_n\}$ denotes a dataset of size $n$ with each data point  sampled i.i.d from $\mathcal{D}$, then we have: \begin{gather*}\mathrm{Pr}_{\mathcal{D}}\big[X(i,j)=b\big]=\sum_{s \in\mathcal{S}^j_b }\mathrm{Pr}_{\mathcal{D}}\big[X(i)=s\big]\end{gather*} where $i\in [n], j \in [m], b \in \{0,1\}$, and $\mathcal{S}^j_{b}=\{s|s\in \mathcal{X} \mbox{ and its } j\mbox{-th bit } s^j=b$\}. 
\subsection*{Proof of Theorem \ref{thm:formal:OPE}}
\begin{proof}
Let $\mathcal{C}(i)$ represent the ciphertext with rank $i$ in $\mathcal{C}$. Additionally, let $X'(i)$ represent the corresponding plaintext for $\mathcal{C}(i)$. From the $\indfaocpa$ guarantee, we observe that the rank of a ciphertext $y \in Y$ is equal to the rank of its corresponding plaintext in $X$, i.e, $X'(i)=X(i)$.  Thus, we have
this, we have
\begin{gather*}\mathcal{L}(i,j)=\mathrm{Pr}_{\mathcal{D}}\big[X'(i,j)=b(i,j)\big]+\textsf{negl}(\kappa) \\\mbox{[The term  \textsf{negl}($\kappa$) accounts for the corresponding term in Eq. \ref{eq:indocpa}}\\\mbox{ for the \indocpa guarantee of the \OPE scheme.]} \\=\mathrm{Pr}_{\mathcal{D}}\big[X(i,j)=b(i,j)\big]\\=\sum_{ s \in \mathcal{S}^{j}_{b(i,j)}}\mathrm{Pr}_{\mathcal{D}}\big[X(i)=s\big]\mbox{ [From Fact 1] }\numberthis \label{eq:lem:formal:OPE} \end{gather*}
 \end{proof}
Eqs. \ref{eq:bit} and \ref{eq:lem:formal:OPE} can be numerically computed using the following lemma.
\begin{lemma}If $\mathcal{D}$ represents an input distribution and $X=\{x_1,\cdots,x_n\}$ denotes a dataset of size $n$ with each data point  sampled i.i.d from $\mathcal{D}$, then we have:  \begin{gather*}\mathrm{Pr}_{\mathcal{D}}\big[X(i)=s\big]=\left\{\begin{array}{ll}\hspace{-0cm
}\scalebox{0.95}{$\overset{n}{\underset{j=n-i+1}{\sum}}\hspace{-0.2cm}\binom{n}{j}\cdot\mathrm{Pr}_{\mathcal{D}}\big[x<s\big]^{n-j}\hspace{-0.2cm}\cdot\mathrm{Pr}_{\mathcal{D}}\big[x=s\big]^{j}$}\\\hspace{2.5cm}\mbox{ \textbf{if} } \mathrm{Pr}_{\mathcal{D}}\big[x>s\big]=0 \\\overset{n}{\underset{j=i}{\sum}}\binom{n}{j}\cdot\mathrm{Pr}_{\mathcal{D}}\big[x=s\big]^j\cdot\mathrm{Pr}_{\mathcal{D}}\big[x>s\big]^{n-j}\\\hspace{2.5cm}\mbox{ \textbf{if} } \mathrm{Pr}_{\mathcal{D}}\big[x<s\big]=0\\\overset{n}{\underset{j=1}{\sum}}\Bigg(\overset{\min\{i,n-j+1\}}{\underset{k=\max\{1,i-j+1\}}{\sum}}\Big(\binom{n}{k-1,j,n-k-i+1}\cdot\\\hspace{1cm}\mathrm{Pr}_{\mathcal{D}}\big[x<s\big]^{k-1}\cdot\mathrm{Pr}_{\mathcal{D}}\big[x=s\big]^j\cdot\\\mathrm{Pr}_{\mathcal{D}}\big[x>s\big]^{n-k-j+1}\Big)\Bigg)\mbox{ \textbf{otherwise}}\end{array}\right.\end{gather*} where $x \sim \mathcal{D}, i \in [n]$ and $s \in \mathcal{X}$.\end{lemma}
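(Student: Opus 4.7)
\textit{Proof plan.} The plan is to decompose the event $\{X(i) = s\}$ according to how many of the $n$ i.i.d.\ samples fall strictly below $s$, exactly at $s$, and strictly above $s$. Let $N_< , N_= , N_>$ denote these three counts; by construction they are non-negative integers summing to $n$, and the vector $(N_<, N_=, N_>)$ is jointly multinomially distributed with parameters $n$ and $(P_<, P_=, P_>) := (\mathrm{Pr}_{\mathcal{D}}[x<s], \mathrm{Pr}_{\mathcal{D}}[x=s], \mathrm{Pr}_{\mathcal{D}}[x>s])$.

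Next I would observe the key combinatorial identity: conditional on $(N_<, N_=, N_>) = (k-1,\, j,\, n-k-j+1)$, after sorting the dataset the first $k-1$ positions are occupied by the values $<s$, positions $k, k+1, \ldots, k+j-1$ are all exactly $s$, and the remaining positions are $>s$. Therefore the $i$-th order statistic equals $s$ iff $k \leq i \leq k+j-1$, equivalently $\max\{1, i-j+1\} \leq k \leq \min\{i, n-j+1\}$ (using also $k \geq 1$ and $k+j-1 \leq n$). Summing the multinomial probability
\[
\binom{n}{k-1,\, j,\, n-k-j+1}\, P_<^{\,k-1}\, P_=^{\,j}\, P_>^{\,n-k-j+1}
\]
over all valid $(j,k)$ pairs (with $j \in [1,n]$ as the outer index) yields exactly the third (generic) branch of the claimed formula.

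Finally I would specialize to the two degenerate cases. If $P_> = 0$, only terms with $n-k-j+1 = 0$ survive, forcing $k = n-j+1$; the constraint $k \leq i$ then becomes $j \geq n-i+1$, the multinomial collapses to $\binom{n}{j}$, and we obtain the first branch. Symmetrically, if $P_< = 0$, only $k = 1$ survives, the constraint $i \leq k+j-1$ becomes $j \geq i$, the multinomial again collapses to $\binom{n}{j}$, and we obtain the second branch.

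The only mildly delicate step is verifying that the index bounds $\max\{1, i-j+1\} \leq k \leq \min\{i, n-j+1\}$ are both necessary and sufficient (in particular that the collapsed boundary cases carry the correct multinomial coefficient and are not double-counted), but this is routine bookkeeping once the trichotomy above is in hand. I do not anticipate any real obstacle; the proof is essentially a clean application of the multinomial distribution to the order-statistic question.
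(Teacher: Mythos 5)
Your proposal is correct and follows essentially the same route as the paper's proof: both decompose the event by the counts of samples below, equal to, and above $s$ (equivalently, the count $j$ and the first position $k$ of the block of $s$-values in the sorted dataset), use the multinomial distribution of these counts, and read off the positional constraint $\max\{1,i-j+1\}\leq k\leq\min\{i,n-j+1\}$. The only cosmetic difference is that you obtain the two boundary branches as specializations of the generic multinomial sum (via vanishing exponents) rather than arguing each case separately as the paper does; your version also implicitly uses the coefficient $\binom{n}{k-1,\,j,\,n-k-j+1}$, which is what the paper's $\binom{n}{k-1,j,n-k-i+1}$ should read to match the exponents.
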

\begin{proof}
Let $X_{sort}$ denote the sorted version of $X$. Additionally, 
let $r^f_s$ and $r^l_s$ denote the positions of the first and last occurrences of the value $s$ in $X_{sort}$, respectively. Let $cnt_s$ denote the count of data points with value $s$ in $X$. Thus, clearly $cnt_s=r^l_s-r^f_s+1$  \\
\textbf{Case I:} $\mathrm{Pr}_{\mathcal{D}}\big[x>s\big]=0$\\
In this case, we have \begin{gather*}X(i)=s  \implies X(r)=s, \forall r \mbox{ s.t } i \leq r \leq n\end{gather*} Thus, $r^l_s=n$ and $n-i+1 \leq cnt_s\leq n$ and  \begin{gather*}\mathrm{Pr}_{\mathcal{D}}\big[X(i)=s\big]=\sum_{j=n-i+1}^n\mathrm{Pr}_{\mathcal{D}}\big[X(i)=s|cnt_s=j\big]\\=\sum_{j=n-i+1}^n\binom{n}{j}\cdot\mathrm{Pr}_{\mathcal{D}}\big[x<s\big]^{n-j}\cdot\mathrm{Pr}_{\mathcal{D}}\big[x=s\big]^{j}\end{gather*}\textbf{Case II:} $\mathrm{Pr}_{\mathcal{D}}\big[x<s\big]=0$ \\In this case, we have \begin{gather*}X(i)=s  \implies X(r)=s, \forall r \mbox{ s.t } 1 \leq r \leq i\end{gather*} Thus, $r^f_s=1$ and therefore $i \leq cnt_s \leq n $ and 
\begin{gather*}\mathrm{Pr}_{\mathcal{D}}\big[X(i)=s\big]=\sum_{j=i}^n\mathrm{Pr}_{\mathcal{D}}\big[X(i)=s|cnt_s=j\big]\\=\overset{n}{\underset{j=i}{\sum}}\binom{n}{j}\cdot\mathrm{Pr}_{\mathcal{D}}\big[x=s\big]^j\cdot\mathrm{Pr}_{\mathcal{D}}\big[x>s\big]^{n-j}\end{gather*}
\textbf{Case III:} Otherwise\\
For all other cases, if $cnt_s=j, j \in [n]$, then we must have $\max\{1,i-j+1\}\leq r^l_s\leq\min\{i,n-j+1\}$. Thus, we have \begin{gather*}\mathrm{Pr}_{\mathcal{D}}\big[X(i)=s\big]=\sum_{j=1}^n\mathrm{Pr}_{\mathcal{D}}\big[X(i)=s|cnt_s=j\big]\\=\overset{n}{\underset{j=1}{\sum}}\Bigg(\overset{\min\{i,n-j+1\}}{\underset{k=\max\{1,i-j+1\}}{\sum}}\Big(\binom{n}{k-1,j,n-k-i+1}\cdot\\\mathrm{Pr}_{\mathcal{D}}\big[x<s\big]^{k-1}\cdot\mathrm{Pr}_{\mathcal{D}}\big[x=s\big]^j\cdot\mathrm{Pr}_{\mathcal{D}}\big[x>s\big]^{n-k-j+1}\Big)\Bigg)\end{gather*}
\end{proof}

\subsection*{Proof for Thm. \ref{thm:formal:POPE}}\label{app:thm:fm:POPE}
\begin{proof}
Recall that in  \nam, the \OPE scheme is applied to the encodings obtained from the \name primitive.
Thus, in this case, the ciphertexts $\mathcal{C}$ preserve the rank of the encodings of  $\name(x,\Pa,\epsilon), x\in X$. Let $\widetilde{O}$ represent this set of encodings.  Additionally, let $\widetilde{O}(i)$ be  the encoding in $\widetilde{O}$ with rank $i$. Let $X''(i)$ represent the corresponding plaintext for the encoding $\widetilde{O}(i)$. Thus, for $s\in \mathcal{X}, x \sim \mathcal{D}$ and $o \sim \mathcal{D}^*$, we have
 \begin{gather*}\small \hspace{-6cm}\mathrm{Pr}_{\mathcal{D}}\big[X''(i)=s\big]=\\\hspace{1cm}\sum_{v \in \mathcal{O}}\mathrm{Pr}_{\mathcal{D}^*}\big[\widetilde{O}(i)=v\big]\cdot \mathrm{Pr}_{\mathcal{D}}\big[X''(i)=s|\widetilde{O}(i)=v\big] \\=\sum_{v \in \mathcal{O}} \mathrm{Pr}_{\mathcal{D}^*}\big[\widetilde{O}(i)=v\big]\cdot  \mathrm{Pr}_{\mathcal{D}}\big[x=s|\name(x,\Pa,\epsilon)=v\big]\\\hspace{-4.7cm}=\sum_{v\in \mathcal{O}} \mathrm{Pr}_{\mathcal{D}^*}\big[\widetilde{O}(i)=v\big]\cdot \\\hspace{1cm}\frac{\mathrm{Pr}_{\mathcal{D}}\big[\name(x,\Pa,\epsilon)=v|x=s\big]\cdot \mathrm{Pr}_{\mathcal{D}}\big[x=s\big]}{\mathrm{Pr}_{\mathcal{D}}\big[\name(x,\Pa,\epsilon)=v\big]}\\\hspace{-4.7cm}=\sum_{v\in \mathcal{O}} \mathrm{Pr}_{\mathcal{D}^*}\big[\widetilde{O}(i)=v\big]\cdot \\\hspace{1cm}\frac{\mathrm{Pr}_{\mathcal{D}}\big[\name(x,\Pa,\epsilon)=v|x=s\big]\cdot \mathrm{Pr}_{\mathcal{D}}\big[x=s\big]}{\mathrm{Pr}_{\mathcal{D}^*}\big[o=v\big]}\\=\mathrm{Pr}_{\mathcal{D}}\big[x=s\big]\sum_
 {v\in \mathcal{O}}
 \mathrm{Pr}_{\mathcal{D}^*}\big[\widetilde{O}(i)=v\big] \frac{\mathrm{Pr}\big[\name(s,\Pa,\epsilon)=v\big]}{\mathrm{Pr}_{\mathcal{D}^*}\big[o=v\big]} \end{gather*}
Thus finally,
\begin{gather*}\mathcal{L}(i,j)=\mathrm{Pr}\big[X''(i,j)=b(i,j)\big]+\textsf{negl}(\kappa)\\\mbox{[The term \textsf{negl}($\kappa$) accounts for the corresponding term in Eq. \ref{Eq:main}}\\\mbox{ for the \indfaocpae~ guarantee of the \nam scheme.]} \\=\sum_{s \in S_{b(i,j)}^{j}}\mathrm{Pr}\big[X''(i)=s\big]+\textsf{negl}(\kappa)\\=\sum_{s \in S_{b(i,j)}^{j}}\mathrm{Pr}_{\mathcal{D}}\big[x=s\big]\Big(\sum_{v \in \mathcal{O}} \mathrm{Pr}_{\mathcal{D}^*}\big[\widetilde{O}(i)=v\big]\cdot
\\ \hspace{1.5cm}\mathrm{Pr}\big[\name(s,\Pa,\epsilon)=v\big]/\mathrm{Pr}_{\mathcal{D}^*}\big[o=v\big]\ \Big) + \textsf{negl}(\kappa)\end{gather*} 
\end{proof}

\subsection{Related Work}\label{app:relatedwork}
The \ddp guarantee is equivalent to the notion of metric-based  \ldp \cite{metricDP} where the metric used is $\ell_1$-norm. Further, metric-\ldp is a generic form of Blowfish \cite{Blowfish} and d$_\chi$-privacy \cite{dx} adapted to \ldp. Other works  \cite{linear1,Xiang2019LinearAR, Geo, linear2, Ordinal1, Ordinal2} have also modelled the data domain as a metric space and scaled the privacy parameter between pairs of elements by their distance. A recent work \cite{acharya2019contextaware} propose context-aware framework of \ldp that allows a privacy designer to incorporate the application’s context into the privacy definition.

\par
A growing number of work has been exploring the association between differential privacy and cryptography \cite{DPmarry}. Mironov et al. \cite{CDP} introduced the notion of computational differential privacy where the privacy guarantee holds against a \textsf{PPT} adversary. Roy Chowdhury et al. \cite{CryptE} use cryptographic primitives to bridge the gap between the two settings of differential privacy -- \ldp and \textsf{CDP}. A line of work \cite{Prochlo, shuffle2} has used cryptographic primitives for achieving anonymity  for privacy amplification in the \ldp setting. Mazroom et al. \cite{DPAP} have proposed techniques for secure computation with \DP access pattern leakage.  Bater et al.  \cite{Shrinkwrap} combine differential privacy with secure computation for query performance optimization in private data federations. Groce et al. \cite{PSI} show that allowing differentially private leakage can significantly improve the performance of private set intersection protocols.  Vuvuzela \cite{Vuvuzela} is an anonymous communication system
that uses differential privacy to enable scalability and privacy of the messages. Differential privacy has also been used in the context of ORAMs \cite{DPORam1, DPORam2}.  
A parallel line of work involves efficient use of cryptographic primitives for differentially private
functionalities. Agarwal et al. \cite{EncryptedDP} design encrypted databases that support
differentially-private statistical queries, specifically private histogram queries. Rastogi et al.~\cite{Rastogi} and Shi et al.~\cite{Shi} proposed algorithms that allow an untrusted aggregator to periodically estimate the sum of $n$ users' values in a  privacy preserving fashion. However, both schemes are irresilient to user failures. Chan et al.~\cite{Shi2} tackled this issue by constructing binary interval trees over the users. B{\"o}hler et al. \cite{HH} solves the problem of differentially private heavy hitter estimation in the distributed setting using secure computation. Recently, Humphries at al. \cite{keyvalue} have proposed a solution for computing differentially private statistics over key-value data using secure computation. in the combined Additionally, recent works have combined \DP and cryptography in the setting of distributed learning  \cite{kairouz2021distributed, cpSGD, capc}.

\subsection{Additional Evaluation}

\begin{figure}[ht]
  
    \begin{subfigure}[b]{0.5\linewidth}
    \centering    \includegraphics[width=\linewidth]{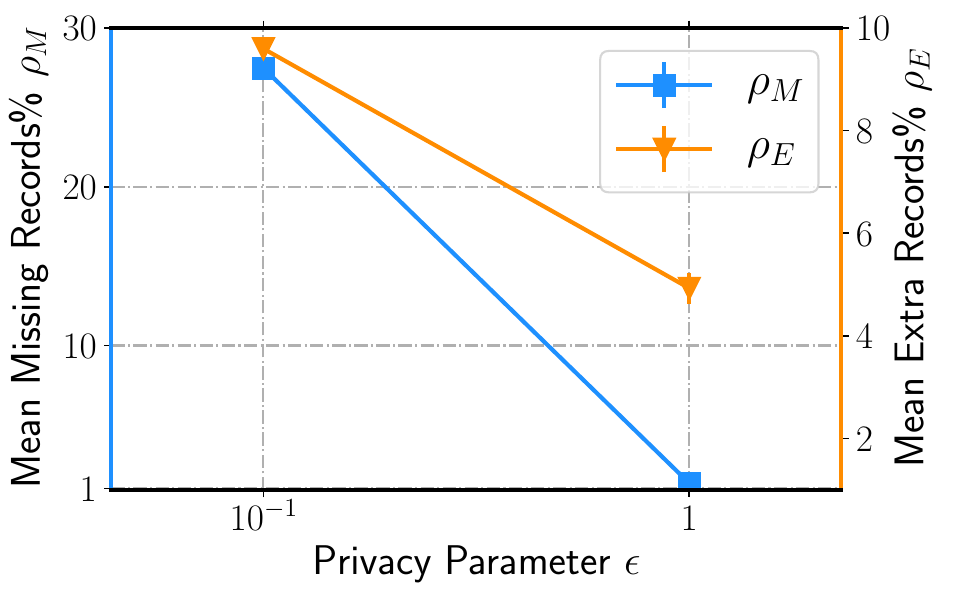}
        \caption{\textsf{Adult}: Effect of $\epsilon$}
        \label{fig:Adult:eps}\end{subfigure}
      \begin{subfigure}[b]{0.5\linewidth}
    \centering    \includegraphics[width=\linewidth]{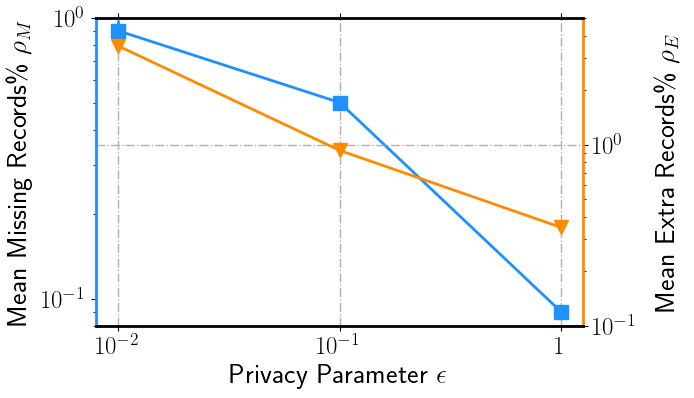}    
        \caption{\textsf{Salary}: Effect of $\epsilon$}
        \label{fig:Sal:eps}
    \end{subfigure}
   \caption{Accuracy Analysis of  \nam Cntd.}
   \label{accuracy}
\end{figure}

\subsection{Discussion}\label{app:discussion}
\nam is the first step towards integrating \OPEs and \DP. Here, we discuss several avenues for future research.\\\\
\textbf{Extension to Other Related Cryptographic Security Guarantees.}  We focused on \OPEs with ideal security (\indfaocpa) since this is the first step in the exploration of combining DP with a property-preserving encryption scheme. Extending this to more practical schemes such as BCLO~\cite{Boldyreva} and CLWW~\cite{ORE3} is a natural and important future direction. In principle, we can follow the same construction strategy as that of OP$\epsilon$, $(i)$ encode the plaintext with OP$\epsilon$c $(ii)$ encrypt the resulting encoding with the BCLO scheme or the CLWW scheme. However, the construction could be improved further as follows. Let $L(\cdot)$ denote the leakage function associated with the OPE scheme. For ideal security, $L(\cdot)$ is just the data order (which complements the order-preserving criteria of OP$\epsilon$c, Eq. 2 Def. 3.6).  The leakage functions are different for BCLO and CLWW (roughly half of the plaintext bits for BCLO and Eq. 3.1 in the CLWW paper). If we replace the first step with a $\epsilon$-DP encoding scheme that is tuned with the corresponding $L(\cdot)$ instead, then this could potentially lead to better utility. Intuitively, we want the encoding  to have differentially private $L(\cdot)$ leakage. Additionally, this could lead to smoother composition for analyzing the formal security guarantee of the resulting scheme. We will expand our discussion in the paper to include this.
 The current scheme can be trivially extended to the \textsf{IND-OCPA} security guarantee 
\cite{Boldyreva,Popa} for \OPEs by replacing Def. \ref{def:pope} with a \OPE scheme that satisfies \textsf{IND-OCPA} guarantee instead. 
Exploring connections with modular \OPEs \cite{ModularOPE1,ModularOPE2} is also an interesting future direction. The property of partial order preserving can  provide protection against certain inference attacks. For example, some attacks require access patterns for uniformly random range queries \cite{Grubbs19} or knowledge about the volume of  every range query \cite{Grubbs2}. This is clearly not possible with \nam as only queries at the granularity of the chosen partition are permitted. Hence, another future direction could be formalizing this security gain parameterized on the choice of the partition. A related path to explore here could be studying connections with the existing notion of partially order preserving encoding \textsf{POPE} proposed by Roche et. al \cite{POPE}. 
A recent line of work has focused on providing formal guarantees against some specific types of attacks in the context of encrypted databases \cite{pancake,Paterson_sol,Kerschbaum19,kamara}. 
Our model is distinct from all the above mentioned approaches. Additionally, since the \cdp guarantee holds regardless of the type of inference attacks, it would be interesting to see if it can be combined with the above approaches for a stronger formal guarantee or better efficiency. \par Beyond \OPEs, secure ordering could be required in a  distributed setting where $n$ mutually untrusting parties, each holding a data point, want to compute a sorted order over their data (generalization of the classic Yao's millionaires' problem \cite{Yao,securesorting}). \OPEs are ill-suited for this setting because $(1)$ currently \OPEs are defined only in private key cryptography which means that a single malicious agent posing as a data owner can compromise the protocol $(2)$ \OPEs (satisfying \indfaocpae and \textsf{IND-OCPA}) are stateful and mutable \cite{Kerschbaum19,Popa} which is not feasible in a distributed setting. This requires the use of multi party computation (\textsf{MPC}) techniques. A straightforward way to extend is to compute over the outputs of the \name primitive. Proposing techniques for improved utility is an important future work.
\\\\
\textbf{Compromised Querier.} In the context of a database encrypted under a \OPE scheme, a querier has access only to the records that precisely belong to the queried range. However, in our setting the querier might know the values of some records that fall outside the queried range (Sec. \ref{sec:application}). This might lead to additional leakage, when compared to the case of a \OPE encrypted database, in the event the querier is compromised. One way to prevent this is to use an attribute-based encryption scheme \cite{abe} for $\overline{\mathcal{E}}$ where the decryption is possible only if the record belongs to the queried range.   \\\\
\textbf{Support for Non-ordinal Data.}
Currently, $\epsilon$-\ddp (equivalently \cdp) provides a semantically useful privacy guarantee only for data domains that have a naturally defined order. A possible future direction can be exploring how to extend this guarantee for non-ordinal domains (like categorical data). One such way could be associating the categories of the non-ordinal domain with some ordinal features like popularity \cite{Ordinal1} and defining the guarantee w.r.t to these ordinal features instead.\\\\
\textbf{Extension of \ldp Mechanisms.} The performance of the algorithms presented in Sec. \ref{sec:opec:ldp} could be improved by borrowing techniques from the existing literature in \ldp. For example, the partition for \name could be learnt from the workload factorization mechanism from  \cite{Workload}. In another example, a B-ary tree could be constructed over the input domain using \name for answering range queries \cite{range}.
\\\\\textbf{Less State Information for Clients.} For \nam, in fact the clients need to store less state information than for \OPEs satisfying \indfaocpa as discussed below. Clients for any \OPE~scheme, $\mathcal{E}$, (satisfying the \indfaocpa guarantee) need to store two pieces of state information for each unique value of $\mathcal{X}$ that appears in the dataset $X$ to be encrypted (see \cite{frequencyHiding2}). For example, for input domain $\mathcal{X}=[100]$ and a dataset $X=\{42,45,45,50,88,67,67,77,90,98,98,98,98\}$ drawn from this domain, the client needs to store two information $\{\max_{\mathcal{E}}(x),\min_{\mathcal{E}}(x)\}$ for $x \in \{42,45,50,88,67,77,90,98\}$.  Recall that \nam~ applies \OPE~ to the output of the \name~ primitive. This implies that for \nam, $\mathcal{E}^{\dagger}$, the client needs to store the state information only for each encoding in $\mathcal{O}$ of the underlying \name~ primitive.
For the above mentioned example, consider a partition $\Pa=\{[1,20],[21,40],[41,60],[61,80],[81,100]\}$ with corresponding encodings $\mathcal{O}=\{10,30,50,70,90\}$. Here, the client needs to store $\{\max_{\mathcal{E}^{\dagger}}(o),\min_{\mathcal{E}^{\dagger}}(o)\}$ only for $o \in \mathcal{O}=\{10,30,50,70,90\}$. This means that clients now need to store less state information for \nam~ than for \OPE. 
\\\\\textbf{Extension of \name.} The \name primitive can be extended to generic metric space along the lines of previous literature \cite{metricDP,dx}. This would support arbitrary partition instead of just non-overlapping intervals. For this, first sort the input domain $\mathcal{X}$ according to the metric $d(\cdot)$.  Then divide the sorted domain, $\mathcal{X}_S$, into  non-overlapping intervals which determines the partition, $\Pa$, for the \name primitive. Alg. \ref{alg:1} can now be defined on $\Pa$ with metric $d(\cdot)$.
\par Recent work in database theory has explored efficient $k$-top query answering mechanisms~\cite{deep2018compressed, deep2020matrix, deep2021enumeration, deep2021space, deep2021ranked}. \name can be used in conjunction with these mechanisms for guaranteeing data privacy.  
\\\\\textbf{Choice of Partition.}  As described in Sec. \ref{sec:application} (Remark 5), the partition $\Pa$ is chosen completely from \textit{an utilitarian perspective} in the context of encrypted databases -- it results in an \textit{accuracy-overhead trade-off}  (accuracy -- number of correct records retrieved; overhead -- number of extra records processed). The data owner can choose $\mathcal{P}$ based on some (non-private) prior on the dataset. One strategy is to use equi-depth partitioning. Our experimental results in Sec. VII show that this strategy works well in practice. Moreover, the partition can be changed dynamically as long as the encoding domain $\mathcal{O}$ of the underlying \name primitive  has enough wiggle room. For instance, for input domain $\mathcal{X}=[100]$, let the initial partition be $\Pa=\{[1,20],[21,40], $\\$[41,60], [61,80], [81,100]\}$ over a input domain $[100]$. Let the corresponding encodings be $\mathcal{O}=\{1, 21, 41, 61, 81\}$. Now, if in the future the interval $[1,40]$ needs to be further partitioned into $\{[1,10],[11,20],[21,30],[31,40]\}$, it can be performed as follows:
\squishlist
\item retrieve and delete all records from the database in the range $[1,40]$ (this step might incur some loss in accuracy)
\item assign the encodings $\{1,11,21,31\}$ for the aforementioned sub-partition
\item insert back the records encrypted under the new encoding 
\squishend
However, the cost here is that every update consumes an additional $\epsilon$-\ddp privacy budget for the updated records.
\\\\\textbf{Additional Advantages of Partitioning.}
Two additional advantages of partitioning are:
\squishlist
\item Clients for any \OPE (satisfying \indfaocpa) need to store some state information. However, \nam requires much less information storage due to partitioning which is advantageous for resource-constrained clients (as illustrated above).

\item Although we don't consider it in this paper, prior work shows that partial-order preservation improves security~\cite{Grubbs19,Grubbs2}. Formalizing this security gain of partitioning is an interesting future direction.
\squishend
\vspace{0.3cm}
\noindent\textbf{Encrypting Multiple Columns.} For encrypting records with multiple columns, we can encrypt each column individually under the \nam scheme (satisfying $\epsilon$-\ddp). Then, from the composition theorem of \ddp (Thm. \ref{thm:seq}), we would still enjoy $c\cdot\epsilon$-\ddp guarantee over the entire dataset where $c$ is the total number of columns. 

\end{document}